\renewcommand{\baselinestretch}{1.35}
\DeclarePairedDelimiter{\abs}{\lvert}{\rvert}
\newtheorem{theorem}{Theorem}
\newtheorem{lemma}{Lemma}
\newtheorem{corollary}{Corollary}
\newtheorem{definition}{Definition}
\newtheorem{remark}{Remark}
\def \Ju {{\J_l^j(\delta_l^j,\bar\delta_l^{\sf r},U_l^j) }}
\def \bU {{\bf{U}}}
\def \X {{\bf{ X}}}
\def \Y {{\bf{{Y}}}}
\def \bdelta {{\boldsymbol{\delta}}}
\def \cC {{\cal C}}
\def \cF {{\cal F}}
\def \cX {{\cal X}}
\def \cY {{\cal Y}}
\def \cS {{\cal S}}
\def \J {{\cal J}}
\def \N {{\cal N}}
\def \U {{\cal U}}
\def \Pt {{\cal P}}
\def \Q {{\cal Q}}
\def \R {{\cal R}}
\def \Jc {{\cal J}}
\def \H {{\sf H}}
\def \h {{\sf h}}
\def \Pc {{\sf P}}
\def \T {{\sf T}}
\def \D {{\sf D}}
\def \d {{\sf d}}
\def \Ec {{\sf E}}
\def \C {{\sf C}}
\def \P {{\mathbb {P}}}
\def \J {{ J}}
\def \d{{\rm d}}
\def \E {{\mathbb {E}}}
\def \LR {{\sf LR}}
\def \cpi {{\cal \pi}}
\def \med{\;|\;}
\def\tenpt{   \def\baselinestretch{.9}\let\normalsize\normalsize\normalsize}
\DeclareMathOperator*{\argmax}{arg\,max}
\DeclareMathOperator*{\argmin}{argmin}
\title{\vspace{-.4 in}\bf \LARGE Secure Estimation under Causative Attacks}
\date{}
\author{Saurabh Sihag \and Ali Tajer \thanks{Electrical, Computer, and Systems Engineering Department, Rensselaer Polytechnic Institute, Troy, NY 12180.}}
\begin{document}

\maketitle
\allowdisplaybreaks

\begin{abstract}
This paper considers the problem of secure parameter estimation when the estimation algorithm is prone to \textbf{\textsl{causative}} attacks.  Causative attacks, in principle, target decision-making algorithms (e.g., inference and learning algorithms) to alter their decisions by making them oblivious to specific attacks. Such attacks influence inference algorithms by tampering with the mechanism through which the algorithm is provided with the statistical model of the population about which an inferential decision is made. Causative attacks are viable, for instance, by contaminating the historical or training data, or by compromising an expert who provides the model. In the presence of causative attacks, the inference algorithms operate under a distorted statistical model for the population from which they collect data samples.  This paper introduces specific notions of secure estimation and provides a framework under which secure estimation under causative attacks can be formulated. A central premise underlying the secure estimation framework is that forming secure estimates introduces a new dimension to the estimation objective, which pertains to detecting attacks and isolating the true model. Since detection and isolation decisions themselves are imperfect, their inclusion induces an inherent coupling between the desired secure estimation objective and the auxiliary detection and isolation decisions that need to be formed in conjunction with the estimates. This paper establishes the fundamental interplay among the decisions involved, and characterizes the general decision rules in closed-forms for any desired estimation cost function. Furthermore, to circumvent the computational complexity associated with growing parameter dimension or attack complexity, a scalable estimation algorithm and its attendant optimality guarantees are provided. Finally, the theory developed is applied to secure parameter estimation in a sensor network operating in an adversarial environment.

\end{abstract}

\section{Introduction}
\label{sec:introduction}

\subsection{Motivation}
\label{sec:overview}

Statistical inference offers mechanisms for deducing the statistical properties of a population based on the data sampled from the population. Inference problems, broadly, focus on discerning the statistical model of the population or forming estimates about an unknown parameter that specifies the statistical model of the population. Anomaly detection,  which has diverse applications in intrusion detection, fraud detection, fault detection, system health monitoring, and event detection, constitutes a major class of inference problems in which the objective is raising alarms when the data pattern (e.g., statistical model) deviates significantly from the expected patterns. Effective detection of anomalies in the data strongly hinges on the known rules for distinguishing normal and abnormal data segments. These rules, for instance, can be specified by an expert or by leveraging the historical data, depending on the context of the application. These rules, subsequently, can be used for designing inference algorithms for detecting the anomalies with specified guarantees on relevant figures of merit.


While anomaly detection, which in essence copes with the vulnerability of the sampled data to being contaminated or compromised, has been studied extensively (c.f. \cite{J21,vempaty,vempatytarget}), the vulnerability of the {\sl inference algorithms} to being compromised is far less-investigated. The nature of security vulnerabilities that inference algorithms are exposed to is fundamentally distinct from that of data. Specifically, in the case of compromised sample data, the information of the decision algorithm about the model remains intact, while the data fed to the algorithm is anomalous, in which case the counter-measures consist of winnowing out the compromised samples and forming inferential rules that are robust against them. In contrast, attacks on the algorithms can be exerted by providing the algorithm with an incorrect statistical model for the data. This is viable by, for instance, contaminating the historical data or by confusing the expert that produces a model, which are critical for furnishing the true model for the statistical model of the data. Therefore, when the sampled data is compromised, an inference algorithm produces decisions based on an un-compromised known model for the data, while the data that it receives and processes are compromised. On the other hand, when the historical data or the expert are compromised, an inference algorithm functions based on an incorrect model for the data, in which case even un-compromised sampled data produces unreliable decisions.

The aforementioned security vulnerabilities for the inference algorithms can be capitalized on by adversaries in order to force an inference algorithm to deviate from its optimal structure and produce decisions in ways that serve an adversary's purposes. Such attacks on decision algorithms are often referred to as {\bf causative attacks}, through which an adversary aims to (i) make the inference algorithms oblivious to specific attacks, or (ii) degrade the performance of the inference algorithm in the presence of such an attack~\cite{Barreno}. 

While the notion of secure decision-making in adjacent domains (e.g., machine learning and data mining) has been heavily investigated in recent years, the fundamental limits of secure statistical inference are not well-investigated, and all the limited existing studies remain rather ad-hoc. In this paper, we provide a framework for secure parameter estimation under the potential presence of {\sl causative} attacks. We establish the fundamental tradeoffs involved in decision-making under causative attacks and characterize the optimal decision rules for securely estimating the parameters and concurrently detecting the presence of the attackers. Furthermore, we provide scalable algorithms for addressing settings in which the size or complexity of the attacks grow, and provide optimality guarantees on the performance of these algorithms. A summary of the content and the contributions are discussed in Section~\ref{sec:contributions}, and the relevant literature on secure statistical inference are reviewed in Section~\ref{sec:review}.

\subsection{Overview and Contributions}
\label{sec:contributions}

To lay the context for discussing the problem investigated, consider the canonical parameter estimation problem in which we have a collection of probability distributions $\{P_X: X\in\cX\}$ defined over a common measurable space. The objective is to estimate $X$, which lies in a known set $\cX\subseteq\mathbb{R}^p$, from data samples $\Y\triangleq [Y_1,\dots, Y_n]$, where the sample $Y_r$ is distributed according to $P_X$ and  lies in a known set $\cY\subseteq\mathbb{R}^m$. We denote the probability density functions (pdfs) that the statistician {\sl assumes} about the underlying distributions of $X$ and $Y_r$ by $\pi$ and $f (\cdot \med X)$, respectively, i.e., 
\begin{align}
Y_r\;\sim\; f (\cdot \med  X)\ ,\qquad \mbox{with} \quad X\;\sim\; \pi\ .
\end{align}
For convenience in the analysis, we will assume that the pdfs do not have any non-zero probability masses over lower-dimensional manifolds. The objective of the statistician is formalizing a reliable estimator 
\begin{align}
\label{eq:estimation}
\hat X(\Y):\cY^{n} \mapsto \cX\ .
\end{align}

\noindent \textbf{\textsl{Causative Attacks:}}  In an adversarial environment, a malicious attacker might launch a \textsl{\textbf{causative}} attack to influence (degrade) the quality of $\hat X(\Y)$. The purpose of such an attack is to compromise the {\sl process that underlies acquiring the statistical models}. We emphasize that such an attack is different from those that aim to compromise the {\sl data}, e.g., false data injection attacks, which aim to distort the data samples $\Y$. Consequently, the effect of a causative attack is misleading the statistician about the true model $f (\cdot \med  X)$ that it assumes about the data. Such attacks are possible by compromising the historical (or training) data that is used for defining a model for the data. Depending on the specificity and the extent of a causative attack, e.g., the fraction of the historical or training data that is compromised, the true model $f (\cdot \med X)$ can deviate to alternative forms, the space of which we denote by $\cF$. The attack can affect the statistical distribution of any number of the $m$ coordinates of $\Y$. There are two major aspects to selecting $\cF$ as viable model space. 
\begin{itemize}
\item An attack is effective if the compromised model is sufficiently distinct from the model assumed by the statistician. Hence, even though in general $\cF$ can be any representation of possible kernels $f (\cdot \med X)$ mapping $\cY$ to $\mathbb{R}^m$, only a subset of such  mappings suffices to describe the set of effective attacks.
\item There exists a tradeoff between the complexity of the model space and its expressiveness. Specifically, if it is overly expressive, it can represent the possible compromised models with a more refined accuracy at the expense of more complex inferential rules. 
\end{itemize}
 The specifics of the attack model will be discussed in Section~\ref{sec:model}. 
\vspace{.1 in}
\newpage

\noindent \textbf{\textsl{$(q,\beta)$-Security:}} The potential presence of an adversary introduces a new dimension to the estimation problem in~\eqref{eq:estimation}. Specifically, on the one hand, the stochastic model of the data can be altered by an attack and detecting whether the data model is compromised, itself being an inference task,  is never perfect. On the other hand,  designing an optimal estimation rule strongly hinges on successfully isolating the true model. Hence, there exists an inherent coupling between the original estimation  problem of interest and the introduced auxiliary problem (i.e., detecting the presence of an attacker and isolating the true model). Based on this observation, in an adversarial setting, there exists uncertainty about the true model, based on which the quality of the estimator is expected to degrade with respect to an attack-free setting. We are interested in establishing the fundamental interplay between the quality of discerning the true model and the degradation level in the estimation quality. To establish this interplay, we say that an estimator is \textbf{\textsl{$(q,\beta)$-secure}} if its estimation cost is weaker than that of the attack-free setting by a factor $q\in[1,+\infty)$, while missing at most $\beta \in (0,1]$ fraction of the attacks\footnote{Estimation costs and the associated estimation degradation factor $q$ will be defined  in Section~\ref{sec:definitions}.}.

In this paper, we pursue three intertwined objectives. First, we characterize the fundamental tradeoffs between $q$ and $\beta$ and delineate a tradeoff curve. Secondly, we characterize the inference rules in closed-forms and provide a secure estimation algorithm that achieves the optimal tradeoffs for any desired point on the tradeoff curve. Finally, to circumvent the computational complexity as the the dimension of the data, $p$, grows, or the complexity of the attacks scales up (e.g., the number of coordinates compromised),  we provide a scalable algorithm that has low computational complexity with guaranteed optimality in the asymptote of large data dimension $p$.

\subsection{Related Studies}
\label{sec:review}
The problem of secure inference is studied primarily in the context of sensor networks. The study in \cite{wilson2016}, in particular, considers a two-sensor network in which one sensor is known to be secured, and one sensor is vulnerable to attacks. The objective is forming an estimate based on the mean-squared error criterion, for which a heuristic detection-driven estimator is designed. According to this design, first, a decision is formed about whether the unsecured sensor is attacked. If it is deemed to be attacked, then the estimator will rely only on the secured sensor, and otherwise, it uses the data from both sensors. Unlike in \cite{wilson2016}, we consider a model with arbitrary size, assume that all data coordinates are vulnerable to the attack, and characterize the optimal decision structure, which turns out to be different from being a detection-driven design studied in~\cite{wilson2016}. Through a case study, we will also show the rather significant improvement in the estimation quality when using the optimal rules, compared to the rules specified in~\cite{wilson2016}.  

The adversarial setting defined in this paper is also similar to the widely-studied Byzantine attack models in sensor networks, in which the data generated by the compromised sensors are modified arbitrarily by the adversaries in order to degrade or the inference quality. An overview of the impact of Byzantine attacks on inference quality in sensor networks and relevant mitigation strategies are discussed in~\cite{vempaty}. Detection-driven estimation strategies (i.e., when attack detection precedes the estimation routine) when the effect of the Byzantine attacks are characterized by randomly flipping of the information bits generated by the sensors are discussed in~\cite{vempatylocalization, manet, asymptote, distributedlarge}. Furthermore, attack-resilient target localization strategies are investigated in~\cite{vempatytarget} and \cite{vempatylocalization}, where it is assumed that the attacker adopts a fixed strategy for maximum disruption to the inference. In these studies, nevertheless, an attacker may deviate from the worst-case strategy of incurring the maximum damage in order to launch a less powerful but sustained attack, which may not be detected perfectly. Finally, strategies for isolating the compromised nodes in sensor networks are investigated in~\cite{rawat,decentralized,vempatyadaptive}. The emphasis of these studies is primarily focused on detecting attacks, or isolating the attacked sensors, which is different from the scope of our paper, which is focused on parameter estimation.

The problem of secure estimation in linear {\sl dynamical} systems in the context of cyber-physical systems has been studied extensively in the recent years (c.f. \cite{cps1,cps2,cps3,cps4,cps5,2015secure,cps6}). The studies that are more relevant to the scope of this paper include~\cite{cps2}, \cite{2015secure},~and~\cite{cps6}, which focus on the robust estimation of the states in dynamic systems. Specifically, a coding-theoretic interplay between the number of sensors compromised and the guarantees on perfect state recovery is characterized in~\cite{cps2}, a Kalman filter-based approach for identifying the most reliable set of sensors to make an inference from is investigated in~\cite{2015secure}, and designing estimators that are robust against dynamical model uncertainty is studied in~\cite{cps6}. The degradation in estimation performance in a dynamical system consisting of a single sensor network is studied from the adversary's perspective in \cite{kalman}, where the bounds on the degradation in estimation performance with degrees of stealthiness of the attacker are characterized.

All the aforementioned studies that involve secure estimation, irrespective of their focus or objective, conform in their design principle, which decouples the estimation decisions from all other decisions involved (e.g., attack detection or attacked sensor isolation), and produces either detection-driven estimators or estimation-driven detection routines. In the detection-driven estimation routines, an initial decision regarding the presence of an adversary (e.g., based on Neyman-Pearson theory) is followed by an optimal estimator based on the detection decision (e.g., Bayesian estimation). Such approaches implicitly assume that the detection decision has been perfect. Similarly, in an estimation-driven approach, the unknown parameter is first estimated, and then a detection decision is made (e.g., the generalized likelihood ratio test). Such approaches achieve optimality conditions only asymptotically, i.e., under the assumption of having an infinite number of measurements.  The premise that decoupling such intertwined estimation and detection problems into independent estimation and detection routines is sub-optimal is well-investigated (e.g., in~\cite{middleton,1992,2012joint,2012minimax}).

\section{Data Model and Definitions}
\label{sec:model}

\subsection{Attack Model}

Our focus is on the canonical estimation problem in~\eqref{eq:estimation}. The objective is to form an optimal estimate $\hat X(\Y)$ (under the general cost functions specified later) in the potential presence of a causative attack. Under the attack-free setting, the data is assumed to be generated according to the known distribution
\begin{align}
\label{eq:distribution}
Y_r\;\sim\; f (\cdot \med X)\ ,\qquad \mbox{with} \quad X\;\sim\; \pi\ , \quad\text{ for } r \in \{1,\dots n\}\;.
\end{align}
In an adversarial setting, an adversary, depending on its strength and preference, can launch an attack that can compromise the underlying process that the statistician uses for acquiring $f(\cdot \med X)$. An attack will be carried out for the ultimate purpose of degrading the estimation quality of $X$. We assume that the adversary can corrupt the data model of {\sl up to} $K \in\{1,\dots,m\}$ coordinates of $\Y$. Hence, for a given $K$, there exist $T = \sum_{i=1}^{K}{m \choose i}$ number of attack scenarios under which the compromised data models are distinct. Define $\cS \triangleq \left\lbrace S_1,\dots, S_T \right\rbrace$ as the set of all possible combinations of attack scenarios, where $S_i\subseteq\{1,\dots,m\}$ describes the set of coordinates the models of which are compromised under  scenario $i\in\{1,\dots,T\}$.

Under the attack scenario $i\in\{1,\dots,T\}$, the joint distribution of $Y_r$ deviates from $f$ and changes to a model in the space $\cF_i$. As discussed earlier, there exists a tradeoff between the expressiveness of this space and the complexity of the ensuing inferential rules. Specifically, a larger space $\cF_i$ can distinguish different attack strategies with a more accurate resolution at the expense of high complexity in the analysis and the resulting decision rules. Also, the model can be effective if it encompasses sufficiently distinct models. Throughout the analysis of the paper, we assume that $\cF_i\triangleq \{f_i(\cdot \med X)\}$, i.e., $\cF_i$ consists of one alternative distribution. This is primarily for the convenience in notations, and all the results presented can be generalized to any arbitrary space with countable elements. Based on this model, when the data models in the coordinates contained in $S_i$ are compromised, the joint distribution changes from  $f(\cdot \med X)$ to $f_i(\cdot \med X)$.

Different attack scenarios might occur with different likelihoods, e.g., compromising one coordinate is easier than compromising two, and it might turn out to be more likely. To distinguish such likelihoods we adopt a Bayesian framework in which we define  $\epsilon_0$ as the prior probability of having an attack-free scenario and define $\epsilon_i$ as the prior probability of the event that the attacker compromises the model under the coordinates specified by $S_i$. A block diagram of the attack model and the inferential goals to be characterized, which are discussed in the remainder of this section, is depicted in Fig.~\ref{fig:diagram}. Finally, we define the marginal pdf of the data at coordinate $l\in\{1,\dots,m\}$ under the attack-free setting and when the coordinate is compromised by $g_l^0$ and $g_l^1$, respectively.


\begin{figure}[t]
\centering
\includegraphics[width=6.5 in]{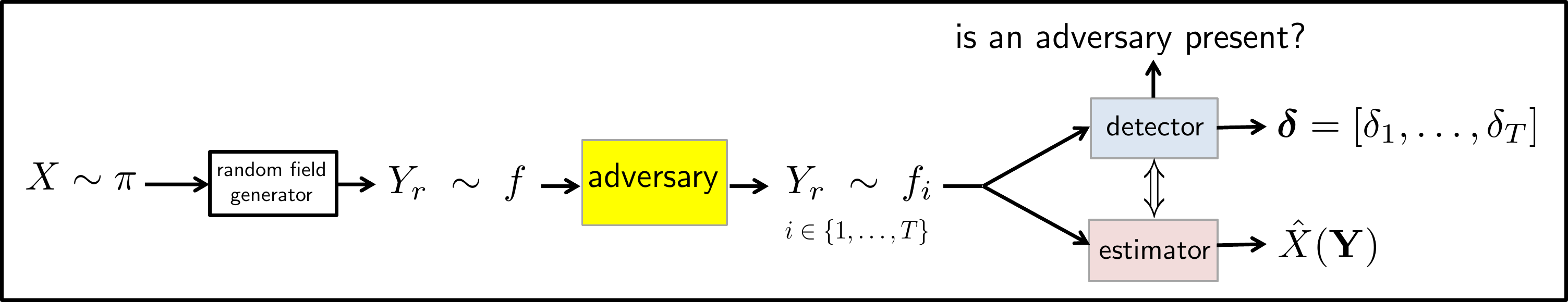}
\caption{The effect of the adversary on the data model, and the inferential decisions involved.}
\label{fig:diagram}
\end{figure}

\subsection{Compound Decisions}
The estimation objective constantly faces the uncertainty about whether an adversary exists. Furthermore, when one is deemed to exist, there is additional uncertainty pertaining to the number and the locations of the coordinates in which the data is being compromised. Hence, forming the estimate $\hat X(\Y)$ is inherently entwined with discerning the true model of the data. Decoupling the decisions for isolating the model and estimating the parameter under the isolated model does not generally render optimal performance. In fact, there exist extensive studies on formalizing and analyzing such compound decisions, which generally aim to decouple the inferential decisions. In~\cite{1992}, it is shown that generalized likelihood ratio test (GLRT) is not always optimal and necessary and sufficient conditions for its asymptotic optimality are provided. Moreover, note that GLRT utilizes maximum likelihood estimates of unknown parameters in its decision rule and thus, it is primarily focused on the detection performance. In~\cite{middleton}, the problem of signal detection and estimation in noise is investigated under the Bayes criterion.  In~\cite{2012joint} and~\cite{2012minimax}, non-asymptotic frameworks  for optimal joint detection and estimation are developed. Specifically, in~\cite{2012joint}, a binary hypothesis testing problem is investigated in which one hypothesis is composite and consists of an unknown parameter to be estimated. In~\cite{2012minimax}, the theory in~\cite{2012joint} is extended to a composite binary hypothesis testing problem in which both hypotheses correspond to composite models. 


\subsection{Decision Cost Functions}
\label{sec:definitions}

\subsubsection{Attack Detection Costs}

The possibility of having multiple alternatives to the attack-free model renders the model detection problem as the following $(T+1)-$ composite hypothesis testing problem.
\begin{equation}\label{eq:hyp1}
\begin{array}{ll}
\vspace{0.1in}
\H_0\ : & \Y \sim f(\Y \mid X),\ \text{ with }X \sim \pi(X)\\
\H_i\ : & \Y \sim f_i(\Y \mid X),\ \text{ with }X \sim \pi(X)\ , \;\quad \mbox{for}\;\;\in\{1,\dots, T\} 
\end{array} \ .
\end{equation} 
where $\H_0$ is the hypothesis corresponding to the attack-free setting, and $\H_i$ is the hypothesis corresponding to an attack launched at the coordinates in $S_i \in \cS$. For the convenience in notations, throughout the rest of the paper we denote the attack-free data model by $f_0(\cdot\med X)$, i.e., $f_0(\cdot \med X)=f(\cdot \med X)$. To formalize relevant costs for the detection decisions, we define $\D \in \left\lbrace \H_0,\dots,\H_T \right\rbrace$ as the decision on the hypothesis testing problem in (\ref{eq:hyp1}), and $\T \in \left\lbrace \H_0,\dots,\H_T \right\rbrace$ as the true hypothesis. We adopt a general {\sl randomized} test $\boldsymbol {\delta}(\Y)\triangleq [ \delta_0(\Y),\dots,\delta_T(\Y) ]$ for discerning the true hypothesis, where $\delta_i(\Y) \in [0,1]$ denotes the probability of deciding in favor of $\H_i$. Clearly,
\begin{align}
\sum\limits_{i=0}^T \delta_i(\Y) = 1\; .
\end{align}
Hence, the likelihood of deciding in favor of $\H_j$ while the true model is $\H_i$ is given by 
\begin{align}\label{e_1}
\P(\D\!=\!\H_j\!\mid \!\T\!=\!\H_i) = \displaystyle \int_{\Y}\delta_j(\Y)f_i(\Y)\; \d\Y\; .
\end{align}
We define $\Pc_{\sf md}$ as the aggregate probability of incorrectly identifying the true model under the presence of compromised coordinates, i.e.,
\begin{align}
\Pc_{\sf md} (\bdelta)&\triangleq \P(\D \neq \T\mid \T \neq \H_0)\nonumber\\
&= \frac{1}{\P(\T \neq \H_0)} \sum\limits_{i=1}^{T} \P(\D \neq \H_i \mid \T = \H_i) \P(\T=\H_i) \label{eq:md1}\\
& = \sum\limits_{i=1}^{T} \frac{\epsilon_i }{1 - \epsilon_0}\cdot  \P(\D \neq \H_i \mid \T = \H_i)\ .\label{eq:md11}
\end{align}
Furthermore, we define $\Pc_{\sf fa}$ as the aggregate probability of erroneously  declaring that a set of coordinates are compromised, while operating in an attack-free scenario. We have
\begin{align}\label{eq:fa1}
\Pc_{\sf fa} (\bdelta) &\triangleq \P(\D \neq \H_0 \mid \T = \H_0) =\sum\limits_{i=1}^T \P(\D\!=\!\H_i\!\mid \!\T\!=\!\H_0)\;.
\end{align}

\subsubsection{Secure Estimation Costs}

In this subsection, we define two estimation cost functions for capturing the  fidelity of the estimate $\hat X(\Y)$ we aim to form for $X$. For this purpose, we adopt a generic and {\sl non-negative} cost function $\C(X,U(\Y))$ to quantify the discrepancy between the ground truth $X$ and a generic estimator $U(\Y)$. Based on this, corresponding to each model $\H_i$ and given data $\Y$ we first define the {\sl average posterior cost function} as
\begin{align}\label{eq:pcost}
\C_{{\rm {p}},i}(U(\Y) \mid \Y) \triangleq \E_{i} \left[ \C(X,U(\Y)) \mid \Y \right]\; ,\qquad \forall i\in\{0,\dots,T\} \ ,
\end{align}
where the conditional expectation is with respect to $X$ when the true model is $\H_i$. Besides this, due to having distinct data models under different attack models, we consider having possibly distinct estimators under different models. Driven by this, we consider the design for an estimate for $X$ under each model. We denote the estimate of $X$ under model $\H_i$ by $\hat X_i(\Y)$, and accordingly, we define
\begin{align}\label{eq:X}
\hat \X(\Y) \triangleq [\hat{X}_0(\Y),\dots, \hat{X}_T(\Y)]\ .
\end{align}
Considering such distinct estimators, the estimation cost $\C(X,\hat X_i(\Y))$ is relevant only if the decision is $\H_i$. Hence, for any generic estimator $U_i(\Y)$ of $X$ under model $\H_i$, we define the {\sl decision-specific average cost function} as
\begin{align}\label{eqr1}
J_i(\delta_i,U_i(\Y)) \triangleq \E_i[\C(X,U_i(\Y))\mid \D = \H_i]\; ,\qquad \forall i \in\{0,\dots,T\} \ ,
\end{align}
where the conditional expectation is with respect to $X$ and $\Y$. Accordingly, we define an aggregate average estimation cost according to
\begin{align}\label{eqr2}
J(\boldsymbol{\delta},\bU) \triangleq \max_{i \in \{0,\dots,T\}} J_i(\delta_i,U_i(\Y))\;,
\end{align}
where we have defined $\bU \triangleq [U_0(\Y),\dots,U_T(\Y)]$. Finally, corresponding to the attack-free scenario, in which the only possible data model is the assumed model $f$, corresponding to any generic estimator $V(\Y)$ we define the average estimation according to
\begin{align}\label{eq:J0}
J_0(V) = \E[\C(X,V(\Y))]\; ,
\end{align}
where the expectation is with respect to $X$ and $\Y$ under model $f$. It is noteworthy that $J_0$ defined in~\eqref{eq:J0} is fundamentally different from $J(\bdelta,\bU)$ defined in~\eqref{eqr2}, since the former is the estimation cost when there is no alternative to $f$ (i.e., the attack-free scenario), while the latter is the estimation cost in an adversarial setting in which we have decided that the attacker has not compromised the data, which being a detection decision is never perfect and can be inaccurate with a non-zero probability. The role of $J_0(V)$ in our analysis is furnishing a baseline for the estimation quality in order to assess the impact of the potential presence of an adversary on the estimation quality. 

\section{Secure Parameter Estimation}

In this section, we formalize the secure estimation problem. The core premise underlying the notion of secure estimation presented is that there exists an inherent interplay between the quality of estimating $X$ and the quality of isolating the true model governing the data. Specifically, perfect detection of an adversary's attack model is impossible. At the same time, the estimation quality strongly relies on  the successful isolation of the true data model. Lack of a perfect decision about the data model is expected to degrade the estimation quality compared to the attack-free scenario. To quantify such an interplay as well as the degradation in estimation quality with respect to the attack-free scenario, we provide the following definition. 
\begin{definition}[Estimation Degradation Factor]\label{def:EDF}
For a given estimator $V$ in the attack-free scenario, and a secure estimation procedure specified by the detection and estimation rules $(\bdelta,\bU)$ in the adversarial scenario, we define the estimation degradation factor (EDF) as
\begin{align}\label{eq:q}
q(\bdelta,\bU,V) \triangleq \frac{\J(\bdelta,\bU)}{\J_0(V)}\ .
\end{align}
\end{definition}
Based on this definition, next we define the performance region, which encompasses all the pairs of decision qualities $q(\bdelta,\bU,V)$ and $\Pc_{\sf md}(\bdelta)$ over the space of all possible decision rules $(\bdelta,\bU,V)$. 
\begin{definition}[Performance Region]\label{def:PR}
We define the performance region as the region of all simultaneously achievable estimation quality $q(\bdelta,\bU,V)$ and detection performance $\Pc_{\sf md}(\bdelta)$. 
\end{definition}
By leveraging the characteristics of the performance region, next we define the notion of $(q,\beta)$-security, which is instrumental in defining the secure estimation problem of interest. For this purpose, note that EDF normalizes the estimation cost in the adversarial setting by that of the attack-free scenario. The two estimation cost functions involved in $q(\bdelta,\bU,V)$ can be computed independently, and as a result, determining their attendant decision rules can be carried out independently.  For this purpose, we define $V^*$ as the optimal decision rule under the attack-free setting, and $J^*_0$ as the corresponding estimation cost, i.e.,
\begin{align}\label{eq:V}
V^* \triangleq \arg\min_V J_0(V)\ ,\qquad\mbox{and}\qquad J^*_0 \triangleq \min_V J_0(V)\ .
\end{align}
\begin{definition}[$(q,\beta)$-security]\label{def1}
An estimation procedure specified by $(\bdelta,\bU,V^*)$ for the adversarial scenario is said to be $(q,\beta)$-secure if the decision rules $(\bdelta,\bU)$ yield the minimal EDF  among all the decision rules corresponding to which the average rate of missing the attacks does not exceed $\beta \in (0,1]$, i.e.,
\begin{align}\label{eq:q_opt}
q \triangleq \min_{\bdelta,\bU}q(\bdelta,\bU,V^*) \ , \qquad \mbox{\rm s.t.}\qquad \Pc_{\sf md}(\bdelta) \leq \beta \ . 
\end{align}
\end{definition}
The performance region, and its boundary that specifies the interplay between $q$ and $\beta$ are illustrated in Fig.~\ref{fig:PR}. Based on these definitions, we aim to characterize:

\begin{figure}[t]
\centering
\includegraphics[width=3 in]{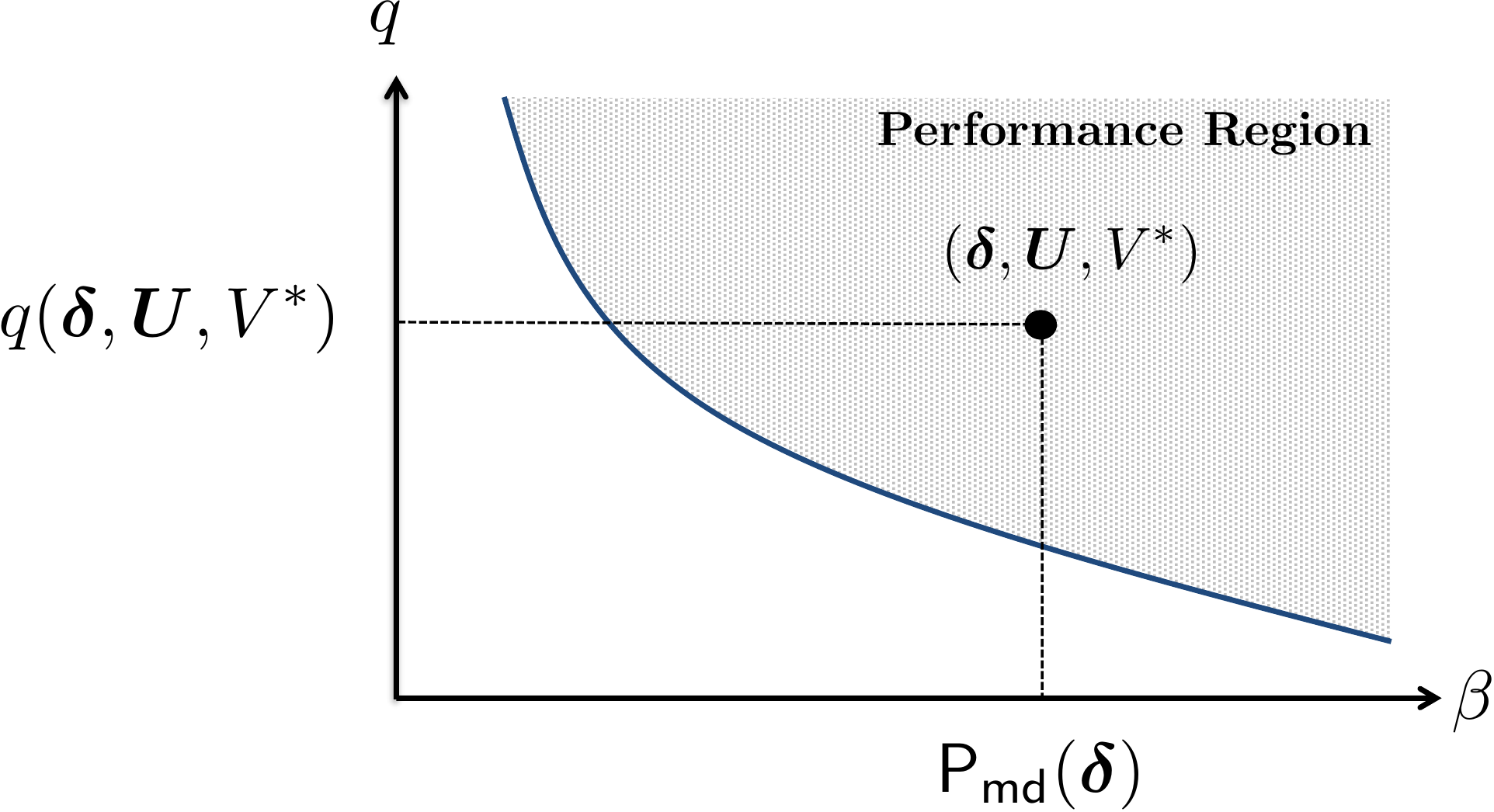}
\caption{Performance region.}
\label{fig:PR}
\end{figure}

\begin{enumerate}
\item  The region of all simultaneously achievable values of $q(\bdelta,\bU,V^*)$ and $\Pc_{\sf md}(\bdelta)$, which is illustrated by the dashed region in Fig.~\ref{fig:PR}.
\item The $(q,\beta)$-secure decision rules $(\bdelta,\bU,V^*)$ that solve~\eqref{eq:q_opt}, and specify the boundary of the performance region, which is illustrated by a solid line as the boundary of the performance region in Fig.~\ref{fig:PR}.
\end{enumerate}
By noting that $q(\bdelta,\bU,V^*)=\frac{J(\bdelta,\bU)}{J^*_0}$, where $J^*_0$ is a constant, the performance region and the $(q,\beta)$-secure decision rules are found as the solutions to
\begin{align}\label{eq:pr1}
\Q(\beta) &\triangleq  \begin{cases}
 \min_{\bdelta,\bU} & J(\bdelta,\bU)\\
\text{s.t.} & \Pc_{\sf md}(\bdelta) \leq \beta
\end{cases}\ .
\end{align}
Solving $\Q(\beta)$ ensures that the likelihood of missing an attack is confined below $\beta$. However, it is insensitive to the rate of the false alarms that the decision rules can generate. In case the statistician wishes to also control the rate of false alarms, that is the rate of erroneously declaring an attack while there is no attack, we can further extend the notion of $(q,\beta)$-security as follows. 
\begin{definition}
An estimation procedure is $(q,\alpha,\beta)$-secure if it is $(q,\beta)$-secure and the likelihood of false alarms does not exceed $\alpha\in(0,1]$. 
\end{definition}
The optimal decision rules that yield $(q,\alpha,\beta)$-secure decisions can be found as the solution to
\begin{align}\label{eq:pr2}
\Pt(\alpha,\beta) &= \begin{cases}
 \min_{\bdelta,\bU} & J(\bdelta,\bU)\\
\text{s.t.} & \Pc_{\sf md}(\bdelta) \leq \beta\\
& \Pc_{\sf fa} (\bdelta) \leq \alpha
\end{cases}\ .
\end{align}
\begin{remark}
It can be easily verified that $\Q(\beta)  = \Pt(1,\beta)$.
\end{remark}
\begin{remark}[Feasibility] \label{feasibility}
The probabilities $\Pc_{\sf md}(\bdelta)$ and $\Pc_{\sf fa}(\bdelta)$ cannot be made arbitrarily small simultaneously. Specifically, from the Neyman-Pearson theory \cite{poor2013}, it can be readily verified that for any given $\alpha$, there exists a value $\beta^{\ast}(\alpha)$, which specifies the smallest feasible value for $\beta$.
\end{remark}
We characterize the optimal solution to problems $\Pt(\alpha,\beta)$ and $\Q(\beta)$ in closed-forms in Section~\ref{sec:rules}. Close scrutiny of the optimal decision rules indicates that the complexity of the rules grows exponentially with the dimension of $X$, i.e., $p$, and the number of coordinates that an adversary can compromise, i.e. $K$. To address the computational complexity, we address the scalability issue in Section~\ref{sec:complexity}. Specifically,  we provide alternative low-complexity decision rules and show that despite their simple structure, they satisfy asymptotic optimality guarantees.

\section{Secure Parameter Estimation: Optimal Decision Rules}
\label{sec:rules}

In this section, we characterize an optimal solution to the more general problem $\Pt(\alpha,\beta)$, i.e., the estimators $\lbrace \hat X_i(\Y): i \in \left\lbrace 0,\dots,T\right\rbrace\rbrace$ and the detectors $\left\lbrace \delta_i(\Y) : i \in \left\lbrace 0,\dots,T\right\rbrace\right\rbrace$. We will also specify how the decision rules simplify for characterizing the solution to the problem $\Q(\beta)  = \Pt(1,\beta)$. In order to proceed, we start by providing the expansion of the decision error probability terms $\Pc_{\sf md}(\bdelta)$ and $\Pc_{\sf fa}(\bdelta)$ in terms of the data models and decision rules. By noting  (\ref{e_1}) and leveraging (\ref{eq:md1}), we have 
\begin{align}\label{eq:cns2}
\Pc_{\sf md} (\bdelta)&= \sum\limits_{i=1}^{T}\frac{\epsilon_i}{1 - \epsilon_0}\sum\limits_{\underset{j\neq i}{j=0}}^{T} \displaystyle \int_{\Y}\delta_j(\Y)f_i(\Y)\;\d\Y\; .
\end{align}
Similarly, by noting (\ref{e_1})
and based on (\ref{eq:fa1}), we have 
\begin{align}\label{eq:cns3}
\Pc_{\sf fa}(\bdelta) &=\sum\limits_{i=1}^{T}\displaystyle \int_{\Y}\delta_i(\Y)f_0(\Y)\; \d\Y \;.
\end{align}
By using the expansions in (\ref{eq:cns2}) and (\ref{eq:cns3}), the problem of interest in (\ref{eq:pr2}) can be equivalently cast as
\begin{align}\label{eq:prob}
\Pt(\alpha,\beta) &= \begin{cases}
\min_{(\boldsymbol{ \delta, U})} & \J(\bdelta, \bU)\\
\text{s.t.} & \sum\limits_{i=1}^{T}\frac{\epsilon_i}{1 - \epsilon_0}\sum\limits_{\underset{j\neq i}{j=0}}^{T} \displaystyle \int_{\Y}\delta_j(\Y)f_i(\Y)\;\d\Y \leq \beta\\
& \sum\limits_{i=1}^{T}\displaystyle \int_{\Y}\delta_i(\Y)f_0(\Y)\; \d\Y \leq \alpha
\end{cases}\ .
\end{align}
The roles of the estimators $\left\lbrace U_i(\Y): i \in \left\lbrace 0,\dots,T\right\rbrace\right\rbrace$ appear only in the utility function $\J(\bdelta,\bU)$. This allows for decoupling the optimization problem $\Pt(\alpha,\beta)$ into two sub-problems, as formalized in Theorem~\ref{theorem:decouple}.
\begin{theorem}\label{theorem:decouple}
The optimal secure estimators of $X$ under different models, i.e., $\hat\X=[\hat X_0,\dots, \hat X_T]$ is the solution to
\begin{align}\label{eq:prob4}
\hat \X =  \arg \min_{\bU}\J(\bdelta,\bU)\; .
\end{align}
Furthermore, the solution of $\Pt(\alpha,\beta)$, and subsequently the design of the attack detectors, can be found  by equivalently solving 
\begin{align}\label{eq:prob2}
\Pt(\alpha,\beta) &= \begin{cases}
\min_{\boldsymbol{ \delta}} & \J(\boldsymbol{\delta,\hat \X})\\
\text{\rm s.t.} & \sum\limits_{i=1}^{T}\frac{\epsilon_i}{1 - \epsilon_0}\sum\limits_{\underset{j\neq i}{j=0}}^{T} \displaystyle \int_{\Y}\delta_j(\Y)f_i(\Y)\;\d\Y \leq \beta\\
& \sum\limits_{i=1}^{T}\displaystyle \int_{\Y}\delta_i(\Y)f_0(\Y)\; \d\Y \leq \alpha
\end{cases}\ .
\end{align}
\end{theorem}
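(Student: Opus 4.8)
The plan is to exploit the structural observation that, in the optimization problem~\eqref{eq:prob}, the estimators $\bU = [U_0(\Y),\dots,U_T(\Y)]$ enter \emph{only} through the objective $\J(\bdelta,\bU) = \max_{i} J_i(\delta_i,U_i(\Y))$ and appear in neither of the two constraints (both of which involve only $\bdelta$ and the data models $f_i$). Consequently, for any fixed detection rule $\bdelta$, one can minimize over $\bU$ first, independently of whether $\bdelta$ is feasible. First I would fix an arbitrary $\bdelta$ and analyze $\min_{\bU}\J(\bdelta,\bU) = \min_{\bU}\max_{i} J_i(\delta_i,U_i(\Y))$; the key point is that this max-over-$i$ of a minimum is separable in the sense that the $i$-th term $J_i(\delta_i,U_i(\Y))$ depends on $\bU$ only through its $i$-th component $U_i$, so $\min_{\bU}\max_i J_i(\delta_i,U_i) = \max_i \min_{U_i} J_i(\delta_i,U_i)$. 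Thus componentwise minimization of each $J_i$ over $U_i$ simultaneously minimizes the max.

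Next I would carry out the componentwise minimization. Using the definition~\eqref{eqr1}, $J_i(\delta_i,U_i(\Y)) = \E_i[\C(X,U_i(\Y)) \mid \D = \H_i]$; by the tower property this conditional expectation can be written as an average over $\Y$ (conditioned on $\D=\H_i$) of the posterior cost $\C_{{\rm p},i}(U_i(\Y)\mid\Y) = \E_i[\C(X,U_i(\Y))\mid\Y]$ from~\eqref{eq:pcost}. Because $U_i(\Y)$ may be chosen pointwise as a function of $\Y$, and $\C$ is nonnegative, the minimizer is obtained by minimizing the posterior cost pointwise: $\hat X_i(\Y) \in \arg\min_{u\in\cX}\C_{{\rm p},i}(u\mid\Y)$, and this choice does not depend on $\delta_i$ (the conditioning event $\D=\H_i$ only reweights $\Y$ but the pointwise-optimal $u$ is the same for every $\Y$). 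Hence $\hat\X = [\hat X_0,\dots,\hat X_T]$ as defined by~\eqref{eq:prob4} is optimal for \emph{every} $\bdelta$, which in particular establishes the first claim. Substituting $\bU = \hat\X$ back into~\eqref{eq:prob} then leaves an optimization over $\bdelta$ alone with the same constraints, which is precisely~\eqref{eq:prob2}; since the value $\min_{\bU}\J(\bdelta,\bU)=\J(\bdelta,\hat\X)$ holds for each feasible $\bdelta$, the minimum over feasible $\bdelta$ of $\J(\bdelta,\hat\X)$ equals the joint minimum $\Pt(\alpha,\beta)$.

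The main obstacle I anticipate is the interchange $\min_{\bU}\max_i = \max_i\min_{U_i}$, which is \emph{not} a generic minimax identity but holds here only because of the separability structure (the $i$-th objective depends on the $i$-th block of $\bU$ only); I would state and verify this carefully, noting that the inequality $\min_{\bU}\max_i \geq \max_i\min_{U_i}$ is trivial, while the reverse follows by exhibiting the block-optimal $\bU=\hat\X$ as a feasible point achieving the RHS. A secondary technical point worth a remark is measurability/existence of the pointwise minimizer $\hat X_i(\Y)$ over $\cX\subseteq\mathbb{R}^p$; under the standing regularity assumptions on the pdfs (no probability mass on lower-dimensional manifolds) and a mild lower-semicontinuity/coercivity assumption on $\C$, a measurable selector exists, and if ties occur any measurable selection works. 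Everything else — rewriting the constraints, substituting, and concluding equivalence of the two-stage and joint problems — is bookkeeping.
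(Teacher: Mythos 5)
Your proposal is correct. The core of your argument coincides with the paper's justification for this theorem, which is the single observation that $\bU$ enters the problem $\Pt(\alpha,\beta)$ only through the objective $\J(\bdelta,\bU)$ and not through either constraint, so that the inner minimization over $\bU$ can be carried out for each fixed $\bdelta$ and the constraints then restrict only $\bdelta$. The paper in fact offers little more than this one-line remark for Theorem~\ref{theorem:decouple} itself and defers the substance to the proof of Theorem~\ref{theorem:estimator}, where the interchange $\min_{\bU}\max_i J_i = \max_i\min_{\bU}J_i$ is established by introducing weights $\boldsymbol{\Omega}$ on the simplex, writing the max as $\max_{\boldsymbol{\Omega}}\sum_i\Omega_i J_i$, and running a weak-duality/saddle-point argument. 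Your route to the same interchange is more elementary and, to my mind, cleaner: since $J_i(\delta_i,U_i)$ depends on $\bU$ only through its $i$-th block, the blockwise minimizers can be assembled into a single $\bU^*=\hat\X$ that simultaneously attains every inner minimum, giving the nontrivial direction of the interchange by exhibiting a feasible point rather than by a minimax theorem. You also correctly flag the one point the paper glosses over, namely that the pointwise posterior minimizer $\hat X_i(\Y)$ is independent of $\delta_i$ (the conditioning on $\D=\H_i$ only reweights $\Y$), which is what makes $\hat\X$ optimal uniformly over all $\bdelta$ and hence lets the two-stage problem~\eqref{eq:prob2} reproduce the joint optimum. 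Your measurability caveat is a reasonable addition that the paper does not address.
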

\noindent
This theorem establishes  a property for the optimal estimator  in~\eqref{eq:prob4}. By leveraging this property, and also taking into account the decoupled structure of the problem $\Pt(\alpha,\beta)$ in~\eqref{eq:prob2}, in the following theorem we provide optimal designs for the secure estimators. Interestingly, it is shown that the optimal estimator under each model can be specified by optimizing a relevant cost function defined exclusively for that model.
\newpage
\begin{theorem}[$(q,\alpha,\beta)$-secure Estimators] \label{theorem:estimator}
For the optimal secure estimators $\hat\X$ we have
\begin{enumerate}
\item The minimizer of the estimation cost $J_i(\delta_i, U_i(\Y))$, i.e., the estimation cost function under model $\H_i$, is given by
\begin{align}\label{eq:p1}
U^*_i(\Y) \triangleq \arg \inf_{U_i(\Y)} \C_{{\rm p},i}(U_i(\Y) \mid\Y)\;.
\end{align}
\item The optimal estimator $\hat\X=[\hat X_0,\dots, \hat X_T]$, specified in~\eqref{eq:prob4}, is given by
\begin{align}\label{eq:p11}
\hat X_i(\Y) = U^*_i(\Y) \ .
\end{align}
\item The cost function $\J(\bdelta, \hat\X)$ is given by 
\begin{align}\label{eq:p2}
\J(\bdelta, \hat \X) = \max_i \left\lbrace\frac{\displaystyle \int_{\Y} \delta_i(\Y){\C}_{{\rm p},i}^{\ast}(\Y)f_i(\Y) d\Y }{\displaystyle \int_{\Y}\delta_i(\Y)f_i(\Y)d\Y}\right\rbrace\; ,
\end{align}
where we have defined
\begin{align}\label{costs}
\C^*_{{\rm p},i}(\Y) \triangleq \inf_{U_i(\Y)} \C_{{\rm p},i}(U_i(\Y) \mid\Y)\; .
\end{align}
\end{enumerate}
\end{theorem}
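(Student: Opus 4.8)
The plan is to use Theorem~\ref{theorem:decouple} to reduce the design of $\hat\X$ to the unconstrained minimization $\hat\X=\arg\min_{\bU}\J(\bdelta,\bU)$ for a fixed test $\bdelta$, and then to carry out this minimization pointwise in $\Y$ and coordinate by coordinate in $i$. The first step is to re-express the decision-specific cost $J_i(\delta_i,U_i(\Y))$ of~\eqref{eqr1} in terms of the data model $f_i$ and the posterior cost $\C_{{\rm p},i}$ of~\eqref{eq:pcost}. Since a randomized test declares $\D=\H_i$ with probability $\delta_i(\Y)$ conditioned on $\Y$, and this randomization is conditionally independent of $X$ given $\Y$, the law of total expectation yields
\begin{align*}
J_i(\delta_i,U_i(\Y))
&=\frac{\displaystyle\int_{\Y}\delta_i(\Y)\,\E_i\!\left[\C(X,U_i(\Y))\mid\Y\right]f_i(\Y)\,\d\Y}{\displaystyle\int_{\Y}\delta_i(\Y)f_i(\Y)\,\d\Y}\\
&=\frac{\displaystyle\int_{\Y}\delta_i(\Y)\,\C_{{\rm p},i}(U_i(\Y)\mid\Y)\,f_i(\Y)\,\d\Y}{\displaystyle\int_{\Y}\delta_i(\Y)f_i(\Y)\,\d\Y}\;,
\end{align*}
in which the denominator equals $\P(\D=\H_i\mid\T=\H_i)$ and does not depend on $U_i$.

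For part~1, I would observe that the only dependence on $U_i$ sits in the numerator, which integrates the nonnegative weight $\delta_i(\Y)f_i(\Y)$ against $\C_{{\rm p},i}(U_i(\Y)\mid\Y)$. As $U_i(\cdot)$ may be chosen independently at each $\Y$, this integral is minimized by minimizing the integrand pointwise, i.e., by $U^*_i(\Y)=\arg\inf_{U_i(\Y)}\C_{{\rm p},i}(U_i(\Y)\mid\Y)$ on $\{\Y:\delta_i(\Y)f_i(\Y)>0\}$ (the values elsewhere being immaterial), which is precisely~\eqref{eq:p1}; note that this minimizer is independent of $\delta_i$. For part~2, Theorem~\ref{theorem:decouple} gives $\hat\X=\arg\min_{\bU}\max_i J_i(\delta_i,U_i(\Y))$; since each $J_i$ depends on $\bU$ only through $U_i$, for every $\bU$ one has $\max_i J_i(\delta_i,U_i)\ge\max_i\inf_{U_i}J_i(\delta_i,U_i)$, and this bound is attained by $\bU=(U^*_0,\dots,U^*_T)$, so $\hat X_i(\Y)=U^*_i(\Y)$. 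For part~3, I would substitute $\hat X_i=U^*_i$ into the ratio form of $J_i$, use $\C_{{\rm p},i}(U^*_i(\Y)\mid\Y)=\inf_{U_i(\Y)}\C_{{\rm p},i}(U_i(\Y)\mid\Y)=\C^*_{{\rm p},i}(\Y)$ as in~\eqref{costs}, and take the maximum over $i$ per~\eqref{eqr2}, which gives~\eqref{eq:p2}.

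I expect the main obstacle to be making the pointwise-minimization step rigorous: one must exhibit a \emph{measurable} estimator $U^*_i(\cdot)$ that attains (or, if the infimum is not attained, approaches) the pointwise value $\C^*_{{\rm p},i}(\Y)$, so that passing from ``minimize the integral'' to ``minimize the integrand'' is legitimate and $\inf_{U_i}J_i(\delta_i,U_i)$ equals $\int_{\Y}\delta_i(\Y)\C^*_{{\rm p},i}(\Y)f_i(\Y)\,\d\Y/\int_{\Y}\delta_i(\Y)f_i(\Y)\,\d\Y$. This is dispatched either by a measurable-selection theorem under mild regularity of $\C(X,\cdot)$ (e.g., lower semicontinuity and coercivity in the second argument, which also guarantees the $\arg\inf$ exists) or by an $\varepsilon$-net argument with near-optimal measurable estimators followed by $\varepsilon\to0$. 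A minor edge case to handle separately is a test with $\int_{\Y}\delta_i(\Y)f_i(\Y)\,\d\Y=0$, for which $J_i$ conditions on a probability-zero event; this is settled by a convention (or by noting such a $\delta$ can be perturbed without loss), and it does not affect the stated formulas.
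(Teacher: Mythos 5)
Your proposal is correct and follows the same backbone as the paper's proof: write $J_i(\delta_i,U_i)$ as the ratio $\int\delta_i\,\C_{{\rm p},i}(U_i\mid\Y)f_i\,\d\Y\big/\int\delta_i f_i\,\d\Y$, lower-bound the numerator by the pointwise infimum of the posterior cost, and note that the bound is attained by $U_i^*(\Y)=\arg\inf_{U_i(\Y)}\C_{{\rm p},i}(U_i(\Y)\mid\Y)$. The only place you diverge is in justifying the exchange $\min_{\bU}\max_i J_i=\max_i\min_{\bU}J_i$: the paper represents $\max_i$ as $\max_{\boldsymbol{\Omega}}\sum_i\Omega_i J_i$ over the probability simplex and runs a saddle-point inequality chain, whereas you observe directly that the objective is separable (each $J_i$ depends on $\bU$ only through $U_i$), so $\max_i J_i(\delta_i,U_i)\ge\max_i\inf_{U_i}J_i(\delta_i,U_i)$ for every $\bU$ with equality at $\bU=(U_0^*,\dots,U_T^*)$. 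Your route is more elementary and, in fact, is the observation the paper's $\boldsymbol{\Omega}$-argument implicitly relies on (its first equality $\max_{\boldsymbol{\Omega}}\min_{\bU}\cC=\max_{\boldsymbol{\Omega}}\cC(\boldsymbol{\Omega},\bdelta,\hat\X)$ holds only because the minimization distributes across the sum); the paper's version buys nothing extra here beyond packaging the step as a generic minimax statement. You also flag two issues the paper passes over silently --- measurable selection of the pointwise minimizer and the degenerate case $\int\delta_i f_i\,\d\Y=0$ --- and your proposed resolutions (measurable selection or an $\varepsilon$-optimal selection argument; a convention for measure-zero conditioning) are the standard and appropriate fixes, so these are refinements rather than gaps.
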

\begin{proof}
See Appendix \ref{proof:theorem:estimator}.
\end{proof}
For illustration purposes, in the next corollary, we provide the closed-forms of these decision rules when the distributions $\{f_i(\cdot\med X): i\in\{0,\dots, T\}\}$  are all Gaussian. 
\begin{corollary}[$(q,\alpha,\beta)$-secure Estimators in Gaussian Models]
When the data models are Gaussian such that
\begin{align}\label{eq:f_gaussian}
f_i(\cdot \med X) \sim {\cal N}(\theta_i, X)\; ,\qquad \mbox{for} \quad \theta_i\in\mathbb{R}\ ,
\end{align}
where the mean values are distinct, and 
\begin{align}
X  \sim {\cal X}^{-1}(\zeta,\phi)\;,
\end{align}
where $ {\cal X}^{-1}(\zeta,\phi)$ denotes the inverse chi-squared distribution with parameters $\zeta$ and $\phi$, s.t., $\zeta+n >4$,
for the optimal secure estimators $\hat\X$, we have:
\begin{enumerate}
\item The minimizer of the estimation cost $J(\delta_i, U_i(\Y))$, i.e., the estimation cost function under model $\H_i$, is given by
\begin{align}\label{eq:p1_gaussian}
U_i^{\ast}(\Y)  = \frac{\zeta\phi + \sum\limits_{r=1}^n\| Y_r- \theta_i\|^2_2 }{\zeta + n -2}\;.
\end{align}
\item The optimal estimator $\hat\X=[\hat X_0,\dots, \hat X_T]$, specified in~\eqref{eq:prob4}, is given by
\begin{align}\label{eq:p1_gaussian}
\hat X_i(\Y) = U^*_i(\Y) \ .
\end{align}
\item The cost function $\J(\bdelta, \hat\X)$ is given by 
\begin{align}\label{eq:p2_gaussian}
\J(\bdelta, \hat \X) = \max_i \left\lbrace\frac{\displaystyle \int_{\Y} \delta_i(\Y){\C}_{{\rm p},i}^{\ast}(\Y)f_i(\Y) d\Y }{\displaystyle \int_{\Y}\delta_i(\Y)f_i(\Y)d\Y}\right\rbrace\; ,
\end{align}
where we have
\begin{align}\label{eq:p3_gaussian}
\C^*_{{\rm p},i}(\Y)  = \frac{2(\zeta\phi + \sum\limits_{r=1}^n\|Y_r-\theta_1\|^2)^2}{(\zeta_i + n - 2)^2(\zeta+n-4)}\ .
\end{align}

\end{enumerate}
\end{corollary}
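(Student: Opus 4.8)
The statement is a specialization of Theorem~\ref{theorem:estimator} to the conjugate Gaussian model with a scaled inverse chi-squared prior on the unknown variance $X$, so the plan is to evaluate, for this concrete family, the three generic objects appearing in that theorem: the Bayes estimator $U_i^*$ of part~(1), its value $\hat X_i$ of part~(2), and the minimal posterior cost $\C_{{\rm p},i}^*(\Y)$ of part~(3). The outer maximization and ratio form of $\J(\bdelta,\hat\X)$ in~\eqref{eq:p2_gaussian} are inherited verbatim from~\eqref{eq:p2}, and part~(2) is just Theorem~\ref{theorem:estimator}(2); so the only genuine computations are those of $U_i^*(\Y)$ and of $\C_{{\rm p},i}^*(\Y)$, carried out under the squared-error cost $\C(X,U)=\|X-U\|^2$ for which these closed forms hold.

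First I would identify the posterior of $X$ under $\H_i$. Writing the prior as $\pi(X)\propto X^{-1-\zeta/2}\exp(-\zeta\phi/(2X))$ and the likelihood as $\prod_{r=1}^n f_i(Y_r\med X)\propto X^{-n/2}\exp\big(-\tfrac{1}{2X}\sum_{r=1}^n\|Y_r-\theta_i\|^2\big)$, conjugacy gives that the posterior of $X$ given $\Y$ is again scaled inverse chi-squared, now with degrees-of-freedom parameter $\zeta'=\zeta+n$ and scale determined by $\zeta'\phi'=\zeta\phi+\sum_{r=1}^n\|Y_r-\theta_i\|^2$. Because the cost is quadratic, the pointwise (in $\Y$) minimizer of $\C_{{\rm p},i}(U_i\med\Y)=\E_i[\|X-U_i\|^2\med\Y]$ is the posterior mean, and the mean of a scaled inverse chi-squared with parameters $(\zeta',\phi')$ equals $\zeta'\phi'/(\zeta'-2)$, which is finite here since $\zeta+n>4>2$. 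Substituting produces the displayed $U_i^*(\Y)$, and part~(2) then follows from Theorem~\ref{theorem:estimator}(2).

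For part~(3) I would use that, for the quadratic cost, $\C_{{\rm p},i}^*(\Y)=\inf_{U_i}\E_i[\|X-U_i\|^2\med\Y]$ equals exactly the posterior variance $\mathrm{Var}_i(X\med\Y)$. Plugging the posterior parameters into the variance formula $\tfrac{2(\zeta'\phi')^2}{(\zeta'-2)^2(\zeta'-4)}$ of the scaled inverse chi-squared law then yields~\eqref{eq:p3_gaussian}. This is precisely the step that invokes the hypothesis $\zeta+n>4$: it is equivalent to $\zeta'-4>0$, i.e.\ to $\E_i[X^2\med\Y]<\infty$, which is what makes the posterior variance---and hence $\C_{{\rm p},i}^*(\Y)$---well defined. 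The remaining piece of part~(3), the $\max_i\{\cdot\}$ of the ratios, is copied unchanged from~\eqref{eq:p2}.

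I do not expect a deep obstacle here; the statement is essentially a computation inside a conjugate family, and the main care needed is bookkeeping: fixing the parametrization of ${\cal X}^{-1}(\zeta,\phi)$ so that the conjugacy update and the first- and second-moment formulas are mutually consistent, and checking that the assumption $\zeta+n>4$ (rather than merely $\zeta+n>2$, which already suffices for $U_i^*$) is exactly what the residual cost $\C_{{\rm p},i}^*$ requires. For consistency one should also read the ``$\theta_1$'' and ``$\zeta_i$'' appearing in~\eqref{eq:p3_gaussian} as $\theta_i$ and $\zeta$.
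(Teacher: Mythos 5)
Your proposal is correct and matches the intended derivation: the paper gives no separate proof of this corollary, treating it as a direct instantiation of Theorem~\ref{theorem:estimator} under squared-error cost, and your conjugacy computation (posterior $X\mid\Y$ scaled inverse chi-squared with $\zeta'=\zeta+n$, $\zeta'\phi'=\zeta\phi+\sum_r\|Y_r-\theta_i\|^2$, posterior mean for $U_i^*$, posterior variance for $\C^*_{{\rm p},i}$) is exactly the calculation that yields the displayed formulas. You are also right that the quadratic cost is left implicit in the corollary statement, that $\zeta+n>4$ is what the variance (rather than the mean) requires, and that the $\theta_1$ and $\zeta_i$ in \eqref{eq:p3_gaussian} are typos for $\theta_i$ and $\zeta$.
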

Next, given the optimal estimators $\hat{\X}$, we characterize the optimal detection rules in the next theorem. The main observation is that even though we started by considering general randomized decision rules, these rules in their optimal forms reduce to deterministic ones. Furthermore, the decisions rules depend on the estimation costs that are computed based on the optimal estimation costs. These estimation costs make the decisions coupled. In order to proceed, we first show that problem $\Pt(\alpha,\beta)$ in~\eqref{eq:prob2} can be solved by leveraging the result of the following theorem, which specifies an auxiliary convex problem in a variational form.
\begin{theorem} \label{theorem:convex}
For any arbitrary $u\in\mathbb{R}_+$, we have $\Pt(\alpha,\beta)\leq u$ if and only if $\R(\alpha,\beta,u)\leq 0$, where we have defined
\begin{align}\label{pr2}
\R(\alpha,\beta,u) &= \begin{cases}
\min_{\boldsymbol{\delta}} &\gamma \\ \vspace{2pt}
\text{\rm s.t.} & \displaystyle \int_{\Y} \delta_i(\Y) f_i(\Y) [{\C}^*_{{\rm p},i}(\Y) - u]\; \d\Y \leq \gamma, \;\;  \forall \text{ }i\in \left\lbrace 0,\dots,T\right\rbrace\\ 
& \displaystyle \sum\limits_{i=1}^T \frac{\epsilon_i}{1-\epsilon_0} \sum\limits_{\underset{j\neq i}{j=0}}^T  \int_{\Y}\delta_j(\Y)f_i(\Y)\; \d\Y\  \leq \beta+\gamma\\
& \displaystyle \sum\limits_{i=1}^{T} \int_{\Y}\delta_i(\Y)f_0(\Y)\; \d\Y \leq \alpha+\gamma
\end{cases}\ .
\end{align}
Furthermore, $\R(\alpha,\beta,u)$ is convex, and $\R(\alpha,\beta,u)=0$ has a unique solution in $u$, which we denote by $u^*$. 
\end{theorem}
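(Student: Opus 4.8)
\emph{Strategy.} The plan is to reduce $\Pt(\alpha,\beta)$ in~\eqref{eq:prob2} — whose objective $\J(\bdelta,\hat\X)$ has the max-of-ratios form~\eqref{eq:p2} — to a feasibility question, by clearing denominators and introducing one scalar slack. Since $\int_\Y\delta_i(\Y)f_i(\Y)\,\d\Y\ge 0$, the bound $\J(\bdelta,\hat\X)=\max_i\frac{\int\delta_i\C^*_{{\rm p},i}f_i}{\int\delta_i f_i}\le u$ holds if and only if $\int_\Y\delta_i(\Y)f_i(\Y)[\C^*_{{\rm p},i}(\Y)-u]\,\d\Y\le 0$ for every $i\in\{0,\dots,T\}$, and each of these inequalities, like the two constraints of~\eqref{eq:prob2}, is linear in $\bdelta$. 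Introducing a single $\gamma$ that simultaneously upper-bounds all $T+1$ cleared left-hand sides and relaxes the miss/false-alarm budgets to $\beta+\gamma$ and $\alpha+\gamma$, then minimizing $\gamma$ over the convex set of randomized tests, is exactly~\eqref{pr2}. For ``$\Pt(\alpha,\beta)\le u\ \Rightarrow\ \R(\alpha,\beta,u)\le 0$'', take an optimal test $\bdelta$ for~\eqref{eq:prob2}: it satisfies all cleared inequalities with slack $0$ and both budgets, so $(\bdelta,\gamma=0)$ is feasible in~\eqref{pr2}. Conversely, a minimizing pair $(\bdelta',\gamma')$ of~\eqref{pr2} with $\gamma'\le 0$ satisfies $\int\delta_i'f_i[\C^*_{{\rm p},i}-u]\le\gamma'\le 0$ for all $i$, hence $\J(\bdelta',\hat\X)\le u$ after dividing, and $\gamma'\le 0$ makes $\bdelta'$ respect the original budgets $\beta,\alpha$, so $\Pt(\alpha,\beta)\le u$. (If any infimum is not attained I would repeat both constructions with $\eta$-optimal tests and send $\eta\downarrow 0$.)

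\emph{Convexity and the unique root.} For each fixed $u$, \eqref{pr2} minimizes the linear functional $\gamma$ over $(\bdelta,\gamma)$ subject to constraints linear in $(\bdelta,\gamma)$ over the convex set $\{\bdelta:\delta_i\ge 0,\ \sum_i\delta_i\equiv 1\}$, so it is a convex (indeed linear) program; eliminating $\gamma$ further writes $\R(\alpha,\beta,u)$ as the partial minimization over $\bdelta$ of a pointwise maximum of functions jointly affine in $(\bdelta,\alpha,\beta)$, so $\R(\alpha,\beta,u)$ is convex in $(\alpha,\beta)$. For the root: because $\int\delta_i f_i\ge 0$, any pair feasible in~\eqref{pr2} at some $u$ stays feasible at every $u'\ge u$ (each cleared left-hand side only decreases, the budgets are $u$-free), so $u\mapsto\R(\alpha,\beta,u)$ is non-increasing and finite-valued; combined with the equivalence, $\R(\alpha,\beta,u)\le 0$ exactly on $[\Pt(\alpha,\beta),\infty)$ whenever $\beta$ lies above the smallest feasible level of Remark~\ref{feasibility}. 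Finally, starting from an optimal test for~\eqref{eq:prob2}, which attains $\max_i\frac{\int\delta_i\C^*_{{\rm p},i}f_i}{\int\delta_i f_i}=\Pt(\alpha,\beta)$ with equality, a one-sided perturbation of $u$ (using $\int\delta_i f_i\le 1$ to bound how fast the cleared terms move) upgrades this to $\R(\alpha,\beta,u^*)=0$ precisely at $u^*=\Pt(\alpha,\beta)$ and $\R(\alpha,\beta,u)<0$ for $u>u^*$, so $u^*$ is the unique solution of $\R(\alpha,\beta,u)=0$.

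\emph{Main obstacle.} The idea — clear denominators, add one slack, read $\R$ as ``combined infeasibility'' — is short; the effort is in regularity. I expect three issues to need care: (i) attainment of the infima over the infinite-dimensional set of randomized tests, failing which one must carry $\eta$-optimal tests through every step; (ii) the degenerate coordinates with $\int\delta_i f_i=0$, where the ratio in~\eqref{eq:p2} is $0/0$ and the conditional cost $J_i$ of~\eqref{eqr1} is undefined, which must be governed by a convention shown consistent with both the ratio form and the cleared form (the regularity that an optimal $\bdelta$ has $\int\delta_i f_i>0$ for all $i$ makes this transparent); and (iii) the strictness needed for uniqueness of the root, which is exactly where one invokes that the optimal test attains the max-of-ratios value with equality and that $\beta$ is in the feasible range — if $\beta$ is below the threshold $\beta^{\ast}(\alpha)$ of Remark~\ref{feasibility}, then $\R(\alpha,\beta,u)>0$ for all finite $u$ and there is no root.
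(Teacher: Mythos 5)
Your proposal follows essentially the same route as the paper: recognize that the max-of-ratios objective is quasi-convex (quasi-linear) in $\bdelta$, clear the denominators so that $\J(\bdelta,\hat\X)\le u$ becomes the family of linear constraints $\int_\Y \delta_i(\Y)f_i(\Y)[\C^*_{{\rm p},i}(\Y)-u]\,\d\Y\le 0$, and convert the resulting feasibility question into the single-slack linear program $\R(\alpha,\beta,u)$ solved by bisection in $u$. If anything you are more explicit than the paper on the two directions of the equivalence, the monotonicity of $u\mapsto\R(\alpha,\beta,u)$, and the regularity caveats (attainment, vanishing denominators, infeasible $\beta<\beta^\ast(\alpha)$), all of which the paper leaves implicit.
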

\begin{proof}
See Appendix~\ref{proof:theorem:convex}.
\end{proof}
The point  $u^*$ has a pivotal role in specifying the optimal detection decision rules. We define the constants $\{\ell_i: i\in \left\lbrace 0,\dots,T+2\right\rbrace\}$ as the dual variables in the Lagrange function associated with the convex problem $\R(\alpha,\beta,u^*)$. Given $u^*$ and $\{\ell_i: i\in \left\lbrace 0,\dots,T+2\right\rbrace\}$, the optimal detection rules can be characterized in closed-forms, as specified in the following theorem.
\begin{theorem}[$(q,\alpha,\beta)$-secure Detection Rules]\label{theorem:detector}
The optimal decision rules for isolating the compromised coordinates are given by
\begin{align}
\delta_i(\Y) = \begin{cases}
& 1,  \quad \text{\rm if }\;\; i = i^{\ast}\\
& 0, \quad \text{\rm if }\;\; i \neq i^{\ast}
\end{cases}\;,
\end{align}
where we have defined
\begin{align}
i^{\ast} \triangleq \argmin_{i \in \left\lbrace 0,\dots,T \right\rbrace}  A_i \; .
\end{align}
Constants $\{A_0,\dots ,A_T\}$ are specified by the data models, $u^*$, and its associated Langrangian multipliers $\{\ell_i: i\in \left\lbrace 0,\dots,T+2\right\rbrace\}$. Specifically, we have
\begin{align}
A_0 &\triangleq  \ell_0 f_0(\Y) [{\C}^*_{{\rm p},0}(\Y) - u^*] + {\ell_{T+1}} \sum\limits_{i=1}^{T} \frac{\epsilon_i}{1-\epsilon_0} f_i(\Y)\;,
\end{align}
and for $i\in \left\lbrace 1,\dots,T \right\rbrace$ we have
\begin{align}
A_i &\triangleq \ell_i f_i(\Y) [{\C}^*_{{\rm p},i}(\Y) - u^*] + \ell_{T+1}\sum_{j=1, j\neq i}^{T}\frac{\epsilon_j}{1-\epsilon_0} f_j(\Y) +  \ell_{T+2}f_0(\Y)\;.
\end{align}
\end{theorem}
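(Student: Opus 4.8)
The plan is to reduce Theorem~\ref{theorem:detector} to Lagrangian duality for the convex program $\R(\alpha,\beta,u^*)$ of Theorem~\ref{theorem:convex}. By that theorem, the optimal value of $\Pt(\alpha,\beta)$ equals the unique root $u^*$ of $\R(\alpha,\beta,u)=0$, and a rule $\bdelta$ solves $\Pt(\alpha,\beta)$ exactly when it solves the linear program $\R(\alpha,\beta,u^*)$, at which point the auxiliary scalar satisfies $\gamma^{*}=0$ (so the per-model constraints become $\int_\Y\delta_i f_i[\C^*_{{\rm p},i}-u^*]\d\Y\le 0$ for all $i$, i.e. $\J(\bdelta,\hat\X)\le u^*$, together with the missed-detection and false-alarm budgets). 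Hence it suffices to characterize the minimizers of $\R(\alpha,\beta,u^*)$, which is a linear program in the pair $(\bdelta,\gamma)$, the decision variables being the functions $\delta_0,\dots,\delta_T$ subject to $\delta_i(\Y)\ge 0$ and $\sum_{i=0}^T\delta_i(\Y)=1$ for a.e.\ $\Y$, and $\gamma\in\mathbb{R}$ free. I would introduce nonnegative multipliers $\ell_0,\dots,\ell_T$ for the $T+1$ per-model cost constraints, $\ell_{T+1}$ for the missed-detection constraint, $\ell_{T+2}$ for the false-alarm constraint, and a pointwise multiplier $\lambda(\Y)$ for the normalization $\sum_i\delta_i(\Y)=1$, and assemble the associated Lagrangian.

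Next I would carry out the inner minimization of the Lagrangian. Minimizing over the free scalar $\gamma$ forces its coefficient $1-\sum_{i=0}^{T}\ell_i-\ell_{T+1}-\ell_{T+2}$ to vanish, which yields the dual-feasibility normalization $\sum_{i=0}^{T+2}\ell_i=1$ and removes $\gamma$ from the Lagrangian. Collecting, in what remains, the coefficient of each $\delta_i(\Y)$ produces exactly the functions $A_0(\Y),\dots,A_T(\Y)$ in the statement: the term $\ell_i f_i(\Y)[\C^*_{{\rm p},i}(\Y)-u^*]$ comes from the $i$-th cost constraint, the term $\ell_{T+1}\sum_{j\neq i}\frac{\epsilon_j}{1-\epsilon_0}f_j(\Y)$ from the missed-detection constraint (the index $j=i$ being excluded by its double-sum structure, which also automatically excludes $i=0$ from that sum and thereby accounts for the slightly different form of $A_0$), and the term $\ell_{T+2}f_0(\Y)$ from the false-alarm constraint (which is absent when $i=0$). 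Thus, up to the constant $-\ell_{T+1}\beta-\ell_{T+2}\alpha$, the Lagrangian reduces to $\int_\Y\big(\sum_{i=0}^{T}A_i(\Y)\delta_i(\Y)+\lambda(\Y)(1-\sum_i\delta_i(\Y))\big)\,\d\Y$. For each fixed $\Y$, minimizing $\sum_{i=0}^{T}A_i(\Y)\delta_i(\Y)$ over the probability simplex is a finite linear program whose optimum is attained at a vertex placing all mass on an index achieving $\min_i A_i(\Y)$; this is precisely the deterministic rule $\delta_{i^{*}}(\Y)=1$, $\delta_i(\Y)=0$ for $i\neq i^{*}$, with $i^{*}=\argmin_{i}A_i(\Y)$.

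To close the argument I would invoke strong duality to certify that the deterministic rule built from optimal multipliers is primal-optimal. The program is linear, and by the feasibility discussion in Remark~\ref{feasibility} strictly feasible rules exist for any admissible $(\alpha,\beta)$ (e.g.\ any fixed rule makes all three groups of constraints strict once $\gamma$ is taken sufficiently large), so Slater's condition holds, the duality gap is zero, and optimal multipliers $\{\ell_i:i\in\{0,\dots,T+2\}\}$ are attained. Complementary slackness then determines which of the $T+3$ constraints are active, and, combined with $\gamma^{*}=0$ from $\R(\alpha,\beta,u^*)=0$, it guarantees that the resulting $\bdelta$ respects the missed-detection and false-alarm budgets and achieves $\J(\bdelta,\hat\X)=u^{*}=\Pt(\alpha,\beta)$. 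I would also observe that ties in $\argmin_i A_i(\Y)$ occur only on a $\Y$-set of measure zero — a consequence of the standing assumption that the densities assign no mass to lower-dimensional manifolds — so the rule is well defined a.e.\ and its cost is unaffected by the tie-breaking convention.

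The main obstacle I anticipate is not the bookkeeping that extracts the $A_i$'s (which is routine once the $\gamma$-minimization supplies the multiplier normalization), but the rigorous justification of strong duality and multiplier existence for this infinite-dimensional linear program over decision functions, together with the attendant measure-theoretic care: verifying that $\{A_i=A_j\}$ is null so the pointwise minimizer is a.e.\ unique, that the selection $i^{*}(\cdot)$ is measurable, and that interchanging minimization with integration over $\Y$ is legitimate.
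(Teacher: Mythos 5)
Your proposal is correct and follows essentially the same route as the paper's own proof in Appendix~\ref{proof:theorem:detector}: form the Lagrangian of the auxiliary problem $\R(\alpha,\beta,u^*)$, eliminate the scalar $\gamma$ (called $\eta$ in the paper) to obtain the normalization $\sum_{i=0}^{T+2}\ell_i=1$, collect the coefficient of each $\delta_i(\Y)$ to identify the $A_i$'s, and minimize pointwise over the probability simplex to obtain the deterministic $\argmin$ rule. The additional points you raise in the last paragraph (strong duality for the infinite-dimensional LP, measurability of the selection, and the null set of ties) are in fact glossed over by the paper, so your treatment is, if anything, more complete.
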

\begin{proof}
See Appendix \ref{proof:theorem:detector}.
\end{proof}
In the next corollary, we provide the closed-forms of these decision rules when the distributions $\{f_i(\cdot\med X): i\in\{0,\dots, T\}\}$  are all Gaussian. 
\begin{corollary}[$(q,\alpha,\beta)$-secure Detection Rules in Gaussian Models]
When the data models $\{f_i(\cdot\med X): i\in\{0,\dots, T\}\}$ have the following Gaussian distributions
\begin{align}\label{eq:f_gaussian}
f_i(\cdot \med X) \sim {\cal N}(\theta_i, X)\; ,\qquad \mbox{for} \quad \theta_i\in\mathbb{R}\ ,
\end{align}
where the mean values are distinct,  and
\begin{align}
X  \sim {\cal X}^{-1}(\zeta,\phi)\;,
\end{align}
the optimal decision rules for isolating the compromised coordinates are given by
\begin{align}
\delta_i(\Y) = \begin{cases}
& 1,  \quad \text{\rm if }\;\; i = i^{\ast}\\
& 0, \quad \text{\rm if }\;\; i \neq i^{\ast}
\end{cases}\;,
\end{align}
where we have defined
\begin{align}
i^{\ast} \triangleq \argmin_{i \in \left\lbrace 0,\dots,T \right\rbrace}  A_i \; .
\end{align}
Constants $\{A_0,\dots ,A_T\}$ are specified by the data models, $u^*$, and its associated Langrangian multipliers $\{\ell_i: i\in \left\lbrace 0,\dots,T+2\right\rbrace\}$. Specifically, we have
\begin{align}
A_0 &\triangleq  \ell_0 f_0(\Y) ({\C}^*_{{\rm p},0}(\Y) - u^*) + {\ell_{T+1}} \sum\limits_{i=1}^{T} \frac{\epsilon_i}{1-\epsilon_0} f_i(\Y)\;,
\end{align}
and for $i\in \left\lbrace 1,\dots,T \right\rbrace$ we have
\begin{align}
A_i &\triangleq \ell_i f_i(\Y) ({\C}^*_{{\rm p},i}(\Y) - u^*) + \ell_{T+1}\sum_{\underset{j\neq i}{j=1}}^{T}\frac{\epsilon_j}{1-\epsilon_0} f_j(\Y) +  \ell_{T+2}f_0(\Y)\;.
\end{align}
where ${\C}^*_{{\rm p},i}(\Y)$ is evaluated using \eqref{eq:p3_gaussian} and we have
\begin{align}
f_i(\Y) = \frac{(\zeta \phi )^{\frac{\zeta}{2}}}{\pi^{\frac{n}{2}} \Gamma(\zeta/2)} \cdot \frac{\Gamma(\zeta + n)/2}{( \zeta\phi + \sum\limits_{r=1}^n \|Y_r - \theta_i\|^2 )^{\frac{\zeta + n}{2}}}\;.
\end{align}
\end{corollary}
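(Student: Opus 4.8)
The plan is to derive this corollary as a direct specialization of Theorem~\ref{theorem:detector}, whose statement already exhibits the optimal detectors $\delta_i(\Y)$ as indicators of the index $i^*=\argmin_i A_i$, with each $A_i$ expressed through the marginal data models $f_i(\Y)$, the scalar $u^*$ of Theorem~\ref{theorem:convex}, the associated Lagrange multipliers $\{\ell_i\}$, and the optimal posterior costs $\C^*_{{\rm p},i}(\Y)$. Consequently, the only new work is to evaluate $f_i(\Y)$ and $\C^*_{{\rm p},i}(\Y)$ in closed form for the Gaussian/inverse--chi--squared model and to substitute them into the generic formulas for $A_0,\dots,A_T$ already provided by Theorem~\ref{theorem:detector}; the form of $\delta_i(\Y)$ and of $i^*$ then carries over verbatim.

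First I would compute the marginal likelihood under $\H_i$. Conditioned on the variance $X$, the samples $Y_1,\dots,Y_n$ are i.i.d.\ ${\cal N}(\theta_i,X)$, so their joint conditional density equals $(2\pi X)^{-n/2}\exp\!\big(-\tfrac{1}{2X}\sum_{r=1}^n\|Y_r-\theta_i\|^2\big)$. Multiplying by the inverse--chi--squared prior density of $X$, which is proportional to $X^{-\zeta/2-1}\exp(-\zeta\phi/(2X))$ with normalizing constant determined by $\zeta$ and $\phi$, and integrating over $X\in(0,\infty)$, the integrand becomes, up to constants, an unnormalized inverse--gamma density in $X$ with shape $(\zeta+n)/2$ and scale $\tfrac12\big(\zeta\phi+\sum_{r=1}^n\|Y_r-\theta_i\|^2\big)$. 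The resulting Gamma integral contributes $\Gamma((\zeta+n)/2)$ times that scale raised to $-(\zeta+n)/2$, and collecting the prior and Gaussian normalizing constants yields exactly the stated expression for $f_i(\Y)$. This is the single genuine computation, and it is a one-line Gamma integral.

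Next I would invoke the companion Gaussian estimator result, namely equation~\eqref{eq:p3_gaussian}, which identifies $\C^*_{{\rm p},i}(\Y)$ with the posterior variance of $X$ given $\Y$ under $\H_i$; by conjugacy this posterior is again inverse--chi--squared, now with parameters $\zeta+n$ and $(\zeta\phi+\sum_{r}\|Y_r-\theta_i\|^2)/(\zeta+n)$, and the standing hypothesis $\zeta+n>4$ is precisely what makes that variance finite. With $f_i(\Y)$ and $\C^*_{{\rm p},i}(\Y)$ both in closed form, plugging them into the definitions of $A_0$ and of $A_i$, $i\in\{1,\dots,T\}$, from Theorem~\ref{theorem:detector} reproduces the displayed expressions exactly, since those definitions are linear in the $f_j(\Y)$'s and in $\C^*_{{\rm p},i}(\Y)$.

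I do not anticipate a conceptual obstacle --- the argument is bookkeeping on top of Theorem~\ref{theorem:detector}. The only points demanding care are the parametrization of the inverse--chi--squared law (scaled versus standard, and the exact placement of $\phi$ inside the exponent) so that the constants in $f_i(\Y)$ come out as stated, and verifying the finiteness of $\C^*_{{\rm p},i}(\Y)$, which is exactly the assumption $\zeta+n>4$. The scalar $u^*$ and the multipliers $\{\ell_i\}$ are inherited unchanged from Theorems~\ref{theorem:convex} and~\ref{theorem:detector} and are not recomputed here.
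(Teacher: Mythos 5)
Your proposal is correct and follows exactly the route the paper intends: the corollary is an immediate specialization of Theorem~\ref{theorem:detector}, with the only substantive work being the Gaussian/inverse--chi--squared marginalization that yields $f_i(\Y)$ (a single inverse-gamma integral, which you set up correctly and which reproduces the stated normalizing constants) and the substitution of the posterior variance from \eqref{eq:p3_gaussian} for $\C^*_{{\rm p},i}(\Y)$, finite precisely because $\zeta+n>4$. The paper supplies no separate proof for this corollary, so there is nothing further to reconcile.
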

By setting $T= 1$, $n = 1$,  $\theta_0=0$, $\theta_1 = 2$, and selecting $\zeta = 4, \phi = 1$, Fig.~\ref{fig:PRcurve} depicts the performance region and the associated $(q,\beta)$-security curve, which shows the tradeoff between the quality of the detection and the degradation in the estimation quality. It is noteworthy that this tradeoff is inherently due to the formulation of the secure estimation problem. Essentially, problem $\Pt(\alpha,\beta)$ as specified in~\eqref{eq:pr2}, is designed to trade the quality of detection in favor of improving the estimation cost.

\begin{figure}[t]
\centering
\includegraphics[width=3 in]{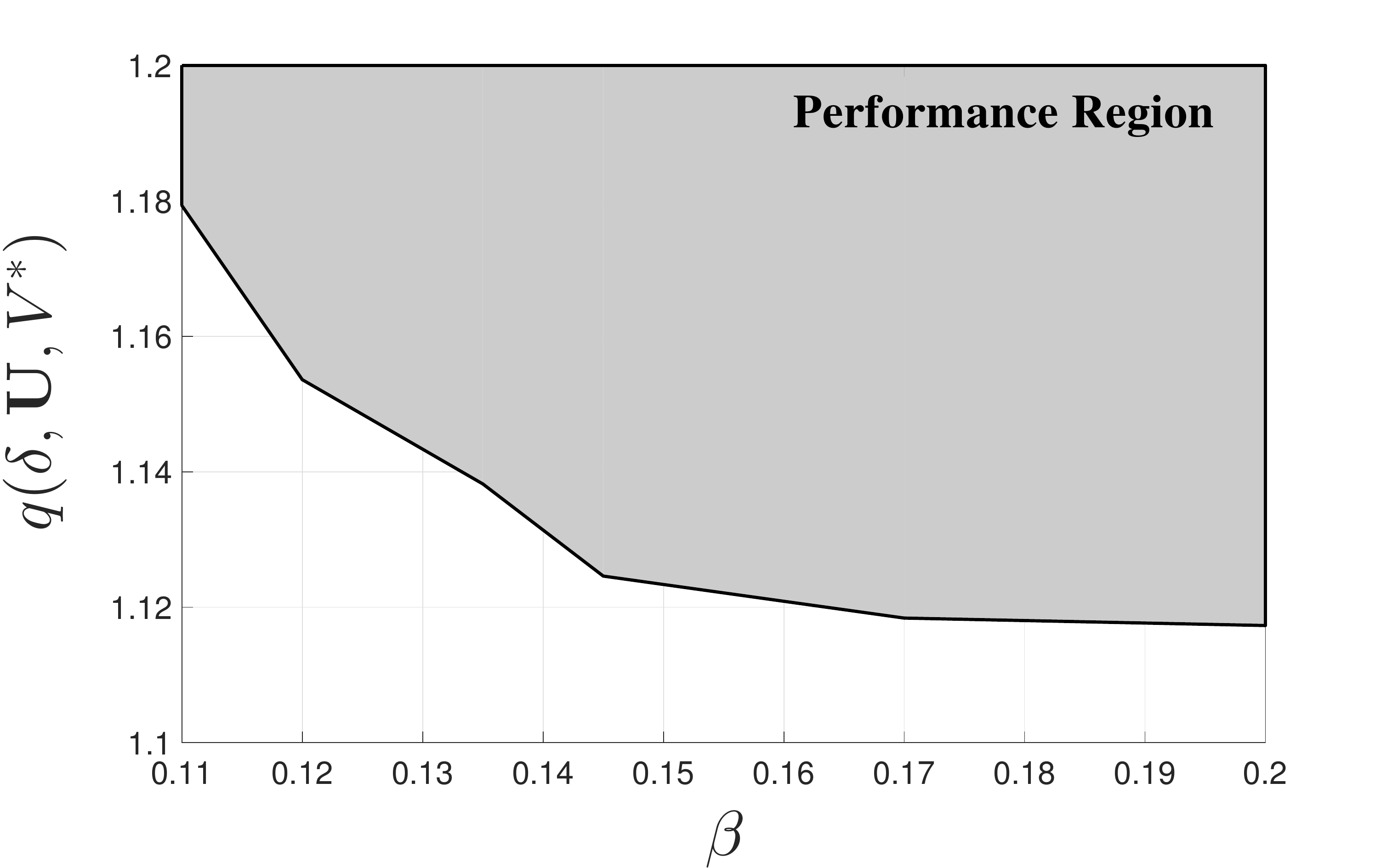}
\caption{Performance region for the Gaussian data model.}
\label{fig:PRcurve}
\end{figure}

Based on all the decision rules specified in the section, and the detailed steps of specifying the parameters involved in characterizing the decision rules, we provide Algorithm~1, which summarizes all the steps for solving $\Pt(\alpha,\beta)$ for any feasible pair of $\alpha$ and $\beta$.


\begin{algorithm}
\caption{-- Solving $\Pt(\alpha,\beta)$}
\label{array-sum}
\begin{algorithmic}[1]
\State \text{input } $\alpha$ and $\beta$ and evaluate $\beta^{\ast}(\alpha)$
\If {$\beta < \beta^{\ast}(\alpha)$}
\State \text{$\Pt(\alpha,\beta)$ not feasible for given choice of $\alpha$ and $\beta$}
\State \text{break}
\Else 
\State \text{Initialize }$u_0 = 0, u_1$
\State \text{Evaluate optimal posterior estimation costs in \eqref{costs}}
\Repeat
    \State $\hat{u} \gets (u_0 + u_1)/2$
    \For {every $\boldsymbol{\hat{\ell}}\succcurlyeq 0$ in the discretized space $\|\boldsymbol{\hat{\ell}}\|_1 = 1$}
    \State \text{Compute $\boldsymbol{\delta}$ from Theorem \ref{theorem:detector}}
    \State \text{Compute $M(\boldsymbol{\hat{\ell}}) \triangleq \R(\alpha,\beta,\hat{u}) $}
    \EndFor
    \If {$\min_{\boldsymbol{\hat{\ell}}}M(\boldsymbol{\hat{\ell}}) \leq 0$}
    \State $u_1 \gets \hat{u}$
    \State $\boldsymbol{\ell} \gets \boldsymbol{\hat{\ell}}$
    \Else
    \State $u_0 \gets \hat{u}$
    \EndIf
	
\Until $u_1 - u_0 \leq \epsilon$\text{, for }$\epsilon$\text{ sufficiently small}
\State $\Pt(\alpha,\beta)\gets u^{\ast} = u_1$\\
\Return {\text{Decision rules $\boldsymbol{\delta}$ }}
\EndIf
\end{algorithmic}
\end{algorithm}

\newpage
\section{Scalable Secure Parameter Estimation }\label{sec:complexity}
As the data dimension $m$ grows, the number of possible data models $T$ grows exponentially. This, subsequently, leads to an exponential growth in the complexity of forming the decision rules, e.g., the number of Lagrangian multipliers needed for characterizing the detection rules scales linearly in $T$. This can render Algorithm~1 computationally prohibitive. In order to circumvent the computational complexity, in this section, we develop a scalable approach to secure estimation, which exhibits optimality properties too. The core idea is to break down the problem into $m$ coordinate-level problems, treat them individually, and then aggregate the individual decisions. Specifically, the decisions involve a binary decision on whether the data is compromised. If the data is deemed to be compromised, then each coordinate is tested individually, and an estimate of $X$ is formed based on the data of that coordinate. Individual coordinate-level estimates are then tested for reliability, and combined to form an aggregate estimate for $X$. Since we need to perform only $m$ single-coordinate binary detection decisions followed by forming a coordinate-level estimation routine, the computational complexity scales only linearly in the number of coordinates $m$, as opposed to exponentially for forming the optimal decision rules.

\subsection{Binary Attack Detection}
\label{sec:binary}

In the first stage, we perform a binary test to detect whether data is compromised at all. This is carried out by solving a  binary composite hypothesis testing problem given by
\begin{equation}\label{eq:hyp2}
\begin{array}{ll}
\vspace{0.1in}
\hat{\H}_0\ : & \Y \sim f (\Y \mid X),\ \text{ with }X \sim \cpi(X)\\
\hat{\H}_1\ : & \Y \sim \hat f(\Y \mid X),\ \text{ with }X \sim \cpi(X)\\
\end{array} ,
\end{equation} 
where $\hat{\H}_0$ is the hypothesis corresponding to the attack-free setting, and $\hat{\H}_1$ signifies to the presence of an attack. Probability distribution $\hat f$ is a mixed distribution given by
\begin{equation}\label{eq:mixed}
\hat f(\Y) \triangleq \frac{1}{1-\epsilon_0}\sum\limits_{i=1}^{T} \epsilon_i f_i(\Y)\;.
\end{equation}
Similarly to the optimal approach of Section~\ref{sec:rules}, to design the decision rules we define the randomized test $\boldsymbol{ \hat{\delta}} \triangleq [ \hat{\delta}_0(\Y),\hat{\delta}_1(\Y)  ]$, in which $\hat{\delta}_i(\Y) \in \left[0,1\right]$ is the probability of deciding in favor of $\hat \H_i$, and we have $\delta_0(\Y) + \delta_1(\Y) = 1$.  Furthermore, define $\D_{\sf d} \in \lbrace \hat{\H}_0, \hat{\H}_1 \rbrace$ and $\T_{\sf d} \in \lbrace \hat{\H}_0, \hat{\H}_1\rbrace$ as the decision and the true hypotheses. Hence, the false alarm rate is given by 
\begin{equation}\label{eq:fa2}
\P (\D_{\sf d} = \hat{\H}_1 \mid \T_{\sf d} = \hat{\H}_0) = \int \hat{\delta}_1(\Y) f(\Y) \;\d\Y \ ,
\end{equation}
and the miss-detection rate is given by
\begin{equation}\label{eq:md2}
\P (\D_{\sf d} = \hat{\H}_0 \mid \T_{\sf d} = \hat{\H}_1) = \int \hat{\delta}_0(\Y) \hat f(\Y) \;\d\Y \;.
\end{equation}
Since our aim is to maximize the true detection rate (or minimize (\ref{eq:md2})) in the first step, we design the decision rule $\boldsymbol{\hat{\delta}}$ using the Neyman Pearson approach. Following the Neyman-Pearson approach, as noted in Remark~\ref{feasibility}, we have the following decision rules for the binary detection decision~ \cite{poor2013}.
\begin{theorem}[Attack Detection]\label{theorem:NP}
The optimal attack detection rule that minimized the rate of missing the attacks subject to a constraint on the false alarms is given by 
\begin{align}
\label{eq:rtest1} \hat{\delta}_0(\Y) & = \mathbbm{1}_{\left\lbrace \frac{\hat f(\Y)}{f(\Y)} < \gamma \right\rbrace} + (1-\varrho)\cdot \mathbbm{1}_{\left\lbrace \frac{\hat f(\Y)}{f(\Y)} = \gamma \right\rbrace}  \ , \\
\label{eq:rtest2} \hat{\delta}_1(\Y) & = \mathbbm{1}_{\left\lbrace \frac{\hat f(\Y)}{f(\Y)} > \gamma \right\rbrace} + \varrho\cdot \mathbbm{1}_{\left\lbrace \frac{\hat f(\Y)}{f(\Y)} = \gamma \right\rbrace} \; \ .
\end{align}
where the threshold $\gamma$ and the probability term $\varrho$ are chosen such that the false alarm constraint is satisfied with equality. 
\end{theorem}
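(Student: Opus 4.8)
The statement is the randomized Neyman--Pearson lemma specialized to the binary test between $f$ and the mixture $\hat f$ defined in~\eqref{eq:mixed}, so the plan is to transcribe the classical argument into the present notation. First I would recast the design goal as an optimization problem: minimize the miss-detection rate $\int \hat{\delta}_0(\Y)\hat f(\Y)\,\d\Y$ in~\eqref{eq:md2} over randomized tests $\boldsymbol{\hat\delta}=[\hat\delta_0,\hat\delta_1]$ with $\hat\delta_0+\hat\delta_1=1$ and $\hat\delta_i(\Y)\in[0,1]$, subject to the false-alarm constraint $\int \hat{\delta}_1(\Y) f(\Y)\,\d\Y \le \alpha$ for the prescribed level $\alpha$. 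Since $\int\hat\delta_0\hat f\,\d\Y = 1 - \int\hat\delta_1\hat f\,\d\Y$, this is equivalent to maximizing the detection power $\int\hat\delta_1(\Y)\hat f(\Y)\,\d\Y$ under the same linear constraint.

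Second, I would establish existence of the threshold $\gamma$ and the randomization weight $\varrho$ that make the false-alarm constraint tight. Setting $G(\gamma)\triangleq \P_f\big(\hat f(\Y)/f(\Y) > \gamma\big)$, the function $G$ is non-increasing and right-continuous with $G(\gamma)\le 1$ and $G(\gamma)\to 0$ as $\gamma\to\infty$; hence there exists $\gamma$ with $G(\gamma)\le\alpha\le G(\gamma^-)$, and choosing $\varrho$ to interpolate linearly across the possible jump of $G$ at $\gamma$ gives $\int\hat\delta_1 f\,\d\Y=\alpha$ exactly. The standing assumption that the pdfs carry no mass on lower-dimensional manifolds guarantees the likelihood ratio $\hat f/f$ is almost surely well defined; feasibility of $\alpha$ (equivalently, the existence of the coupled $\beta^\ast(\alpha)$) is precisely the content of Remark~\ref{feasibility}.

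Third --- the core step --- I would prove optimality by the standard pointwise comparison. Let $\boldsymbol{\tilde\delta}$ be any competing test meeting the false-alarm constraint, and $\boldsymbol{\hat\delta}$ the likelihood-ratio test in~\eqref{eq:rtest1}--\eqref{eq:rtest2}. I consider
\[
\int_{\Y} \big(\hat\delta_1(\Y)-\tilde\delta_1(\Y)\big)\big(\hat f(\Y)-\gamma f(\Y)\big)\,\d\Y\;.
\]
On $\{\hat f > \gamma f\}$ we have $\hat\delta_1=1\ge\tilde\delta_1$ while the second factor is positive; on $\{\hat f < \gamma f\}$ we have $\hat\delta_1=0\le\tilde\delta_1$ while the second factor is negative; on $\{\hat f = \gamma f\}$ the second factor vanishes regardless of the value $\hat\delta_1$ takes there. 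Hence the integrand is nonnegative pointwise, so the integral is $\ge 0$, which rearranges into
\[
\int\hat\delta_1\hat f\,\d\Y - \int\tilde\delta_1\hat f\,\d\Y \;\ge\; \gamma\Big(\int\hat\delta_1 f\,\d\Y - \int\tilde\delta_1 f\,\d\Y\Big) \;=\; \gamma\Big(\alpha - \int\tilde\delta_1 f\,\d\Y\Big)\;\ge\;0\;,
\]
using $\int\hat\delta_1 f\,\d\Y=\alpha$ and $\int\tilde\delta_1 f\,\d\Y\le\alpha$. Thus $\boldsymbol{\hat\delta}$ has detection power at least that of $\boldsymbol{\tilde\delta}$, i.e., its miss-detection rate~\eqref{eq:md2} is no larger, establishing optimality.

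The only genuinely delicate point is the boundary bookkeeping in the second step --- arranging $\varrho$ so the false-alarm constraint holds with equality when the likelihood ratio has an atom under $f$ --- and checking that this does not disturb the pointwise inequality, which it does not, since the relevant product vanishes on the tie set regardless of how either test randomizes there. Everything else is a verbatim instantiation of Neyman--Pearson, and one may cite~\cite{poor2013} for the complete measure-theoretic treatment.
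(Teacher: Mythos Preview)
Your proposal is correct and is the standard Neyman--Pearson argument. The paper does not actually supply a proof for this theorem: it simply invokes the Neyman--Pearson approach and cites~\cite{poor2013}, treating the result as classical. Your write-up is therefore more complete than what appears in the paper, and it follows exactly the route the cited reference takes, so there is no substantive difference in strategy---you have unpacked what the paper leaves as a citation.
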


\subsection{Isolating Compromised Coordinates}
\label{sec:isolation}

The outcome of the binary detection rule in Section~\ref{sec:binary} is ruling whether there exists an attack in the data. If an attack is deemed to exists, in the second step, we carry out an isolation decision, the role of which is identifying the compromised coordinates. In this subsection, we start by analyzing the optimal isolation rules when an attack is deemed to exist in Section~\ref{sec:isolationperformance}, and characterize the performance of the optimal decision rules. This analysis will serve as a baseline for comparing the performance of any alternative low-complexity isolation rule. Next, we provide an isolation rule in Section~\ref{sec:asymptoticisolation} that has low-complexity and can achieve the same performance asymptotically as that of the optimal decision rule.

\subsubsection{Optimal Isolation Rule}
\label{sec:isolationperformance}

If the attack detection rules specified by~\eqref{eq:rtest1}-\eqref{eq:rtest2} determine that the data in one or more coordinates are compromised, in the next step we aim to identify the compromised coordinates. Isolating the  set of compromised coordinates is equivalent to solving the following $T$-hypothesis testing problem.
\begin{equation}\label{eq:hyp3}
\begin{array}{ll}
\vspace{0.1in}
\H_i\ : & \Y \sim f_i(\Y \mid X),\ \text{ with }X \sim \pi(X)\ , \;\quad \mbox{for}\;\;i\in\{1,\dots, T\}\ .
\end{array} 
\end{equation} 
Define $\D_{\sf is}\in \{\H_1,\dots,\H_T\}$ and $\T_{\sf is} \in \{\H_1,\dots,\H_T\}$ as the decision formed and the true hypothesis, respectively.  We define the randomized test $\boldsymbol{ \hat{\delta}}_1 (\Y) \triangleq [ \hat{\delta}_{11}(\Y), \dots, \hat{\delta}_{1T}(\Y)  ]$, in which $\hat{\delta}_{1i}(\Y) \in \left[0,1\right]$ is the probability of deciding in favor of $\hat \H_i$ given that we have already decided that part of the data is compromised, i.e., $\hat\H_1$ is the true hypothesis in~\eqref{eq:hyp2}. Accordingly, we define $\Pc_{\sf is}(\boldsymbol{ \hat{\delta}}_1)$ as the probability of making an erroneous  decision on the problem in \eqref{eq:hyp3}, given that the decision in the detection step was $\hat{\H}_1$. Therefore, $\Pc_{\sf is}(\boldsymbol{ \hat{\delta}}_1)$ is given by 
\begin{align}\label{eq:Pis}
\Pc_{\sf is} (\boldsymbol{ \hat{\delta}}_1) &\triangleq \P(\D_{\sf is} \neq \T_{\sf is} \mid \D_{\sf d} = \hat\H_1)\\
&= \sum\limits_{i = 1}^T \sum\limits_{j = 1, j\neq i}^T \P(\D_{\sf is} = \H_j \mid \T_{\sf is} = \H_i,\; \D_{\sf d} = \hat\H_1)\cdot \P(\T_{\sf is} = \H_i\med \D_{\sf d} = \hat\H_1)\; .
\end{align}
We also denote the error exponent of $\Pc_{\sf is} (\boldsymbol{ \hat{\delta}}_1)$ as the number of the samples $n$ grows according to
\begin{align}
\psi(\boldsymbol{ \hat{\delta}}_1) \triangleq - \lim_{n\rightarrow \infty}\frac{\log \Pc_{\sf is} (\boldsymbol{ \hat{\delta}}_1)}{n}\ .
\end{align}
In the next theorem, we characterize the optimal decision rule $(\boldsymbol{ \hat{\delta}}_1)$ and the associated error exponent. For this purpose, we denote the Chernoff information between two probability measures with probability density functions $g$ and $h$ by
\begin{align}\label{eq:Chernoff}
C(g,h)\triangleq -\log \min_{\alpha \in (0,1)} \int g^\alpha (x) h^{1-\alpha}(x) \; \d x\ .
\end{align}
\begin{theorem}\label{lemma:errorprob}
The decision rule $\boldsymbol{ \hat{\delta}}_1$ that minimizes $\Pc_{\sf is} (\boldsymbol{ \hat{\delta}}_1)$ is given by
\begin{align}\label{drule1}
\hat \delta_{1i}(\Y) = \begin{cases}
& 1,  \quad \text{\rm if }\;\; i = i^{\ast}\\
& 0, \quad \text{\rm if }\;\; i \neq i^{\ast}
\end{cases}\;,
\end{align}
where
\begin{align}
i^{\ast} = \argmax_{j \in \{1,\dots,T\}} f_j(\Y)\P(\T_{\sf is}= \H_j \mid \T_{\sf d} = \hat\H_1)\;. 
\end{align} 
\end{theorem}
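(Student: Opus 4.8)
The plan is to recognize the isolation step as a Bayesian $T$-ary hypothesis test restricted to the event $\{\D_{\sf d} = \hat\H_1\}$ and to invoke the classical argument that the maximum a posteriori (MAP) rule minimizes the error probability. First I would pass to the complementary quantity: minimizing $\Pc_{\sf is}(\boldsymbol{\hat\delta}_1)$ is the same as maximizing the probability of correct isolation $\P(\D_{\sf is} = \T_{\sf is}\med \D_{\sf d} = \hat\H_1)$. Writing $\hat\delta_{1i}(\Y)$ as the conditional probability of declaring $\H_i$ given $\Y$, the tower property yields
\begin{align}
\P(\D_{\sf is} = \T_{\sf is}\med \D_{\sf d} = \hat\H_1) = \E\!\left[\,\sum_{i=1}^T \hat\delta_{1i}(\Y)\,\P(\T_{\sf is} = \H_i\med \Y) \;\Big|\; \D_{\sf d} = \hat\H_1\right]\;,
\end{align}
where I have used that $\D_{\sf d}$ is produced from $\Y$ (together with an independent randomization coin, in the randomized case), so that $\D_{\sf d}-\Y-\T_{\sf is}$ forms a Markov chain and hence $\P(\T_{\sf is} = \H_i\med \Y, \D_{\sf d} = \hat\H_1) = \P(\T_{\sf is} = \H_i\med \Y)$.

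Next I would carry out the optimization pointwise in $\Y$. For each $\Y$ in the region where $\D_{\sf d} = \hat\H_1$, the inner sum is a convex combination of the nonnegative numbers $\{\P(\T_{\sf is} = \H_i\med\Y)\}_{i=1}^T$, since $\hat\delta_{1i}(\Y)\ge 0$ and $\sum_{i=1}^T\hat\delta_{1i}(\Y) = 1$; such a convex combination is maximized by placing all mass on an index attaining $\max_i \P(\T_{\sf is} = \H_i\med\Y)$. As this maximizes the integrand for (almost) every $\Y$, it maximizes the expectation, while the values of $\boldsymbol{\hat\delta}_1$ outside that region are immaterial to $\Pc_{\sf is}$. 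Hence the deterministic rule in~\eqref{drule1} is optimal, with ties — a probability-zero event under the standing assumption that the pdfs place no mass on lower-dimensional manifolds — broken arbitrarily.

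It remains to put $\argmax_i \P(\T_{\sf is} = \H_i\med\Y)$ in the stated form. Conditioned on an attack being present ($\T_{\sf d} = \hat\H_1$), under $\H_i$ the data satisfies $\Y\sim f_i(\cdot\med X)$ with prior weight $\P(\T_{\sf is} = \H_i\med\T_{\sf d} = \hat\H_1)$, so Bayes' rule gives $\P(\T_{\sf is} = \H_i\med\Y) = c(\Y)\, f_i(\Y)\,\P(\T_{\sf is} = \H_i\med\T_{\sf d} = \hat\H_1)$ for a normalizing factor $c(\Y)$ independent of $i$; dropping it recovers exactly $i^\ast = \argmax_{j} f_j(\Y)\,\P(\T_{\sf is} = \H_j\med\T_{\sf d} = \hat\H_1)$. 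The one point that needs care is the bookkeeping around the conditioning event $\{\D_{\sf d} = \hat\H_1\}$ — namely, verifying that conditioning on the first-stage decision neither perturbs the within-stage posterior over $\T_{\sf is}$ (the Markov-chain step) nor obstructs the separation of the $\boldsymbol{\hat\delta}_1$-optimization into independent pointwise problems; once these are settled, the remainder is the textbook MAP computation.
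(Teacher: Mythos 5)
Your argument is correct and follows essentially the same route as the paper's proof: both reduce the problem to a Bayesian MAP computation in which, pointwise in $\Y$, one places all decision mass on the index maximizing $f_j(\Y)\,\P(\T_{\sf is}=\H_j\med\T_{\sf d}=\hat\H_1)$. The only presentational difference is that the paper works with deterministic decision regions $R_i\subseteq\hat R_1$ and explicitly splits off the $\T_{\sf d}=\hat\H_0$ contribution (which equals one for every rule), whereas you handle the same bookkeeping via the tower property and the Markov chain $\D_{\sf d}-\Y-\T_{\sf is}$; the substance is identical.
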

\begin{proof}
See  Appendix \ref{proof:lemma:errorprob}.
\end{proof}
Note that the optimal decision rules will have the same computational complexity as the optimal rules characterized in Theorem~\ref{theorem:detector}.

 \subsubsection{Asymptotically Optimal Isolation Rule}
 \label{sec:asymptoticisolation}
 
The optimal decision rule in Theorem~\ref{lemma:errorprob} has similar drawbacks in terms of computational complexity in high dimensions as the optimal decision rules discussed in Section~\ref{sec:rules}. In this subsection, we provide an alternative isolation rule for discerning the compromised coordinates at a lower computational complexity, and show that this rule is optimal in the asymptote of large number of samples, i.e.,  $n\rightarrow\infty$.  We denote the alternative low-complexity decision rule by  $\boldsymbol{\bar\delta_1}$ and start by providing lower and upper bounds on $\Pc_{\sf is}(\boldsymbol{\bar{\delta}_1})$, and show that these bounds exhibit the same error exponents. To specify a lower bound we first leverage the fact that $\Pc_{\sf is}(\boldsymbol{\hat{\delta}_1}) \leq \Pc_{\sf is} (\boldsymbol{ \bar{\delta}}_1)$, which is true due to the optimality of $\Pc_{\sf is}(\boldsymbol{\hat{\delta}_1})$, and define $\Pc_{\sf is}(X,(\boldsymbol{ \bar{\delta}}_1))$ as the value of $\Pc_{\sf is} (\boldsymbol{ \bar{\delta}}_1)$ when the unknown parameter $X$ is 
fully known (e.g., provided by a genie).  Clearly, when $X$ is fully known, the error probability decreases, as part of the model uncertainty due to the unknown parameter is removed, and the decision problem reduces to a purely detection problem. To specify the upper bound on $\Pc_{\sf is} (\boldsymbol{ \bar{\delta}}_1)$, we  define  $\Pc_{\sf is} (X_c,\boldsymbol{ \bar{\delta}}_1)$ as the value of $\Pc_{\sf is} (\boldsymbol{ \bar{\delta}}_1)$ when $X$ is
assumed to take an arbitrary value $ X_c\in\mathbb{R}^p$.  Note that replacing $X$ by arbitrarily chosen $X_c$ induces sub-optimality compared to the optimal case in which optimal estimation is performed. The following lemma specifies  the bounds on $\Pc_{\sf is} (\boldsymbol{ \bar{\delta}}_1)$.

\begin{lemma}\label{it}
There exists some $X_c  \in \mathbb{R}^{p}$, such that
\begin{align}
\Pc_{\sf is}^{l}\triangleq \Pc_{\sf is}(X,(\boldsymbol{ \hat{\delta}}_1))  \leq \Pc_{\sf is}(\boldsymbol{ \hat{\delta}}_1) \leq \Pc_{\sf is}(\boldsymbol{ \bar{\delta}}_1)  \leq  \Pc_{\sf is}^{u}\triangleq  \Pc_{\sf is}(X_c,(\boldsymbol{ \bar{\delta}}_1)) \ .
\end{align}
\end{lemma}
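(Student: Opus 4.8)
The plan is to verify the four-term chain one inequality at a time, each step being elementary once the conditioning is handled consistently. Throughout I would work conditionally on the event $\D_{\sf d}=\hat\H_1$ produced by the binary detector of Section~\ref{sec:binary}, since $\Pc_{\sf is}$ is defined only on that event. The middle inequality $\Pc_{\sf is}(\boldsymbol{\hat{\delta}}_1)\leq \Pc_{\sf is}(\boldsymbol{\bar{\delta}}_1)$ needs no work: Theorem~\ref{lemma:errorprob} exhibits $\boldsymbol{\hat{\delta}}_1$ as a minimizer of $\Pc_{\sf is}$ over all admissible (randomized) isolation rules, and the low-complexity rule $\boldsymbol{\bar{\delta}}_1$ of Section~\ref{sec:asymptoticisolation} is one such admissible competitor, so this inequality is simply a restatement of optimality.

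For the left (genie) inequality I would decompose $\Pc_{\sf is}(\boldsymbol{\hat{\delta}}_1)$ as the $\pi(\cdot\med \D_{\sf d}=\hat\H_1)$-weighted average of the conditional errors $\Pc_{\sf is}(\boldsymbol{\hat{\delta}}_1\med X=x)$, and observe that for each fixed $x$ the genie-aided isolation problem is the \emph{simple} $T$-ary test among $\{f_i(\cdot\med x)\}_{i=1}^{T}$ with posteriors $\P(\T_{\sf is}=\H_i\med X=x,\D_{\sf d}=\hat\H_1)$; its Bayes-optimal conditional error, call it $\Pc_{\sf is}^{\mathrm{genie}}(x)$, cannot exceed the conditional error of the fixed rule $\boldsymbol{\hat{\delta}}_1$ restricted to that slice (the latter is a legitimate but generally sub-optimal test in the genie problem, since it ignores the revealed $x$). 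Integrating the pointwise bound $\Pc_{\sf is}^{\mathrm{genie}}(x)\leq \Pc_{\sf is}(\boldsymbol{\hat{\delta}}_1\med X=x)$ against $\pi(\cdot\med\D_{\sf d}=\hat\H_1)$ and identifying the left-hand average with $\Pc_{\sf is}^{l}=\Pc_{\sf is}(X,\boldsymbol{\hat{\delta}}_1)$ yields $\Pc_{\sf is}^{l}\leq \Pc_{\sf is}(\boldsymbol{\hat{\delta}}_1)$, which is the precise form of the ``knowing $X$ removes model uncertainty'' statement asserted in the text.

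For the right inequality I would write $\Pc_{\sf is}(\boldsymbol{\bar{\delta}}_1)$ as the $\pi(\cdot\med\D_{\sf d}=\hat\H_1)$-weighted average of the conditional errors $\Pc_{\sf is}(\boldsymbol{\bar{\delta}}_1\med X=x)$ of the \emph{fixed} rule $\boldsymbol{\bar{\delta}}_1$. A probability-weighted average of a family of nonnegative numbers cannot lie strictly below every member of the family, so there exists $X_c\in\mathbb{R}^p$ in the support of $\pi$ with $\Pc_{\sf is}(\boldsymbol{\bar{\delta}}_1\med X=X_c)\geq \Pc_{\sf is}(\boldsymbol{\bar{\delta}}_1)$ (equality throughout in the degenerate case where the conditional error does not vary with $x$). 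Since $\Pc_{\sf is}(X_c,\boldsymbol{\bar{\delta}}_1)$ is by definition exactly this conditional error — the error of $\boldsymbol{\bar{\delta}}_1$ when it is forced to operate under the assumed value $X_c$ rather than adapting to the true parameter — this gives $\Pc_{\sf is}(\boldsymbol{\bar{\delta}}_1)\leq \Pc_{\sf is}^{u}$, completing the chain.

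The only delicate point is the right inequality: one must fix the interpretation of ``$X$ assumed to take the value $X_c$'' so that $\Pc_{\sf is}(X_c,\boldsymbol{\bar{\delta}}_1)$ really is the conditional (mismatched) error $\Pc_{\sf is}(\boldsymbol{\bar{\delta}}_1\med X=X_c)$ and not a genie-type quantity, which would reverse the inequality. Granting that, the existence of a bona fide $X_c\in\mathbb{R}^p$ is a clean consequence of the averaging argument above, and no regularity on $\pi$ beyond the standing no-singular-mass assumption is required; if one prefers a specific choice, $X_c$ can be taken as (a near-)maximizer of $x\mapsto \Pc_{\sf is}(\boldsymbol{\bar{\delta}}_1\med X=x)$, with any $\varepsilon$-slack being harmless since only the error exponents of $\Pc_{\sf is}^{l}$ and $\Pc_{\sf is}^{u}$ are used subsequently in Section~\ref{sec:asymptoticisolation}. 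A secondary, purely bookkeeping point is to carry the conditioning on $\D_{\sf d}=\hat\H_1$ through the genie reduction of the left inequality via the conditional priors $\P(\T_{\sf is}=\H_i\med\D_{\sf d}=\hat\H_1)$; no real difficulty arises there.
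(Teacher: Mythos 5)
Your proposal is correct and follows the same three ideas the paper relies on: the middle inequality is the Bayes optimality of $\boldsymbol{\hat{\delta}}_1$ from Theorem~\ref{lemma:errorprob}, the left inequality is the genie argument (conditioning on $X$ reduces the composite test to a simple one whose optimal error is pointwise smaller), and the right inequality is the existence of a conditioning value $X_c$ at which the conditional error of the fixed rule $\boldsymbol{\bar{\delta}}_1$ is at least its $\pi$-average. In fact the paper never writes a formal proof of this lemma --- it is asserted via the prose preceding it, and the interpretation you flag as the ``delicate point'' (that $\Pc_{\sf is}(X_c,\boldsymbol{\bar{\delta}}_1)$ means the error of $\boldsymbol{\bar{\delta}}_1$ under data generated conditionally on $X=X_c$) is exactly the one used in the upper-bound computation of Appendix~\ref{proof:theorem:isolator}, so your averaging argument is consistent with how the lemma is used. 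One wording slip: for the right inequality you want ``the average cannot lie strictly \emph{above} every member'' (equivalently, the average is at most the supremum), which is what licenses choosing $X_c$ as a (near-)maximizer of $x\mapsto\Pc_{\sf is}(\boldsymbol{\bar{\delta}}_1\med X=x)$; the sentence as written gives the reverse conclusion, though the construction you then state is the correct one.
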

This lemma  is instrumental in establishing the error exponent of the alternative low-complexity decision rule that we will provide. To proceed, corresponding to each coordinate $l\in\{1,\dots,m\}$ we define the likelihood ratio term 
\begin{align}\label{eq:lr}
\LR_l (\Y)\triangleq   \prod_{r=1}^n \frac{g_l^1(Y_r(l))}{g_l^0(Y_r(l))}\; ,
\end{align}
where $g_l^0$ and $g_l^1$ denote the marginal pdfs of the data at coordinate $l\in\{1,\dots,m\}$ under the attack-free setting and when the coordinate is compromised, respectively. In the next theorem, we show that a decision rule based on calculating these likelihood ratio terms suffices to reach a decision that achieves the same error exponent as the optimal decision rule specified in~\eqref{eq:Pis}.

\begin{theorem}\label{theorem:asympIsolation}
The isolation rule
\begin{align}\label{drule1}
\bar \delta_{1i}(\Y) = \begin{cases}
& 1,  \quad \text{\rm if }\;\; i = i^{\ast}\\
& 0, \quad \text{\rm if }\;\; i \neq i^{\ast}
\end{cases}\; ,
\end{align}
in which we have defined
\begin{align}
i^{\ast} = \argmax_{i \in \{1,\dots T\}} \prod\limits_{v\in S_i} \LR_v(\Y)\;,
\end{align}
has the following error exponent
\begin{align}
\psi(\boldsymbol{ \bar{\delta}}_1) = \min_{i\neq j \in \{1,\dots,T\}} C(f_i,f_j)\; .
\end{align}
which is equal to $\psi(\boldsymbol{ \hat{\delta}}_1)$, i.e., the error exponent of the optimal rule.
\end{theorem}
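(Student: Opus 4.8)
The plan is to sandwich $\Pc_{\sf is}(\boldsymbol{\bar{\delta}}_1)$ between the two genie-aided quantities supplied by Lemma~\ref{it}, namely the lower bound $\Pc_{\sf is}^l \triangleq \Pc_{\sf is}(X,(\boldsymbol{\hat{\delta}}_1))$ (the optimal rule with $X$ revealed) and the upper bound $\Pc_{\sf is}^u \triangleq \Pc_{\sf is}(X_c,(\boldsymbol{\bar{\delta}}_1))$ (the proposed rule run as if $X=X_c$), and to show that \emph{both} of these decay with the single exponent $\min_{i\neq j}C(f_i,f_j)$. Since Lemma~\ref{it} gives $\Pc_{\sf is}^l \le \Pc_{\sf is}(\boldsymbol{\hat{\delta}}_1) \le \Pc_{\sf is}(\boldsymbol{\bar{\delta}}_1) \le \Pc_{\sf is}^u$, a matching exponent at the two extremes immediately pins the exponent of everything in between, so that $\psi(\boldsymbol{\bar{\delta}}_1) = \psi(\boldsymbol{\hat{\delta}}_1) = \min_{i\neq j}C(f_i,f_j)$ follows at once; in particular, this is what also identifies $\psi(\boldsymbol{\hat{\delta}}_1)$ with the same quantity.

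For the upper bound, freezing $X$ at the deterministic value $X_c$ turns the isolation problem~\eqref{eq:hyp3} into a pure $T$-ary test among product measures whose per-sample coordinates are independent. On this problem the statistic $\prod_{v\in S_i}\LR_v(\Y)$ is, up to hypothesis-independent factors and the bounded prior weights, the likelihood of $\H_i$: for any pair $i,j$ the shared coordinates in $S_i\cap S_j$ contribute identically and cancel, so $\prod_{v\in S_i}\LR_v(\Y)\big/\prod_{v\in S_j}\LR_v(\Y)$ equals the likelihood ratio of $\H_i$ against $\H_j$, and hence $\boldsymbol{\bar{\delta}}_1$ is the maximum-likelihood rule. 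The error event under $\H_i$ lies in $\bigcup_{j\neq i}\{\prod_{v\in S_j}\LR_v(\Y)\ge\prod_{v\in S_i}\LR_v(\Y)\}$; a Chernoff bound on each pairwise term gives a decay rate $C(f_i,f_j)$, and a union bound over $j$ together with averaging over $i$ against the constant priors yields $\Pc_{\sf is}^u \le \sum_{i\neq j}e^{-nC(f_i,f_j)}$, i.e.\ an exponent of at least $\min_{i\neq j}C(f_i,f_j)$.

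For the lower bound, with $X$ revealed the problem is again a pure $T$-ary test and $\Pc_{\sf is}^l$ is its minimum (Bayes) error. Letting $(i^\star,j^\star)$ attain $\min_{i\neq j}C(f_i,f_j)$, I would bound the $T$-ary Bayes error from below by the Bayes error of the binary sub-problem of separating $\H_{i^\star}$ from $\H_{j^\star}$ (removing the other hypotheses cannot hurt the decision maker), and invoke the classical converse for binary hypothesis testing, by which this binary Bayes error cannot decay faster than $e^{-n(C(f_{i^\star},f_{j^\star})+o(1))}$. Hence the exponent of $\Pc_{\sf is}^l$ is at most $\min_{i\neq j}C(f_i,f_j)$. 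Chaining the two bounds through Lemma~\ref{it} squeezes $-\tfrac1n\log\Pc_{\sf is}(\boldsymbol{\bar{\delta}}_1)$ (and likewise $-\tfrac1n\log\Pc_{\sf is}(\boldsymbol{\hat{\delta}}_1)$) between quantities both converging to $\min_{i\neq j}C(f_i,f_j)$, which gives the claim.

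The step I expect to be the real obstacle is the upper bound, specifically certifying that collapsing the genuinely optimal (unknown-$X$) statistic to the single-coordinate product $\prod_{v\in S_i}\LR_v$ costs nothing in the exponent. This is precisely where the coordinate-wise product structure, the cancellation of the overlap terms $S_i\cap S_j$, and the asymptotic irrelevance of the multiplicative prior factors must be combined, and it is the purpose of the genie bounds in Lemma~\ref{it} to decouple the coordinates of $\Y$ — which under the true model are correlated through the shared parameter $X$ — so that the Chernoff-information large-deviation estimate applies coordinate by coordinate and reassembles into $C(f_i,f_j)$.
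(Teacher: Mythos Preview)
Your proposal is correct and follows the same sandwich strategy as the paper: invoke Lemma~\ref{it} to bracket $\Pc_{\sf is}(\boldsymbol{\bar\delta}_1)$ between $\Pc_{\sf is}^l$ and $\Pc_{\sf is}^u$, control the upper bound by freezing $X=X_c$ and applying pairwise Chernoff bounds with a union over $j$, and control the lower bound by analyzing the genie-aided ($X$ known) Bayes error. The upper-bound arguments are essentially identical; the paper writes out the $\min(a,b)\le a^\lambda b^{1-\lambda}$ step explicitly and uses the location--scale assumption to strip the $X_c$-dependence via a change of variables, arriving at the same $\min_{i\neq j}C(f_i,f_j)$.

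The one substantive difference is in the lower bound. You reduce the $T$-ary genie problem to its hardest binary sub-problem $(i^\star,j^\star)$ and invoke the classical Chernoff converse directly. The paper instead carries out an explicit change-of-measure argument: it introduces the tilted density $b_{i,j}^{\lambda^\ast}$, defines a martingale $M_r$ from the log-likelihood increments, applies the martingale stability theorem to get almost-sure convergence, and then extracts the exponential lower bound on $\P(\D_{\sf is}\neq\H_j\mid\T_{\sf is}=\H_j,X)$ by the usual measure-change inequality. Your route is shorter and conceptually cleaner (it just cites the binary result); the paper's route is more self-contained but is effectively re-deriving the binary converse in place. Both yield the same exponent.

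One small caveat on your upper-bound sketch: the claim that ``$\boldsymbol{\bar\delta}_1$ is the maximum-likelihood rule'' once $X=X_c$ is fixed is not literally true, since $\LR_v$ is built from the \emph{marginal} densities $g_v^0,g_v^1$ rather than the conditional ones $g_v^j(\cdot\mid X_c)$. This does not damage your argument, because the Chernoff bound only needs the inclusion of the error event in the pairwise set $\{\prod_{v\in S_j}\LR_v\ge\prod_{v\in S_i}\LR_v\}$, which holds by definition of the rule; the paper handles the marginal/conditional mismatch the same way and then invokes the location--scale assumption so that the resulting exponent is independent of $X_c$.
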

\begin{proof}
See Appendix \ref{proof:theorem:isolator}.
\end{proof}
Therefore, the error probabilities corresponding to the optimal decision rule in Theorem~\ref{lemma:errorprob} and the low-complexity alternative rule Theorem~\ref{theorem:asympIsolation} decay at the same rate with the increasing number of samples $n$, rendering the decision rule in Theorem~\ref{theorem:asympIsolation} asymptotically optimal. Clearly, the decision rule based on the marginal likelihood ratios significantly reduces the computational complexity for isolating the attacked coordinates. This reduction in computational complexity is viable at the expense of accuracy of the isolation when we have a small number of samples. In the next subsection, we discuss how the attack detection and coordinate isolation decisions characterized so far are leveraged for forming an estimate of $X$.

\subsection{Coordinate-based Secure Estimation}

The decision rules in Section~\ref{sec:binary} and Section~\ref{sec:isolation} have already produced decisions about whether each individual coordinate is compromised. In the third stage of the decisions, we form one estimate of $X$ based on all the $n$ samples available at each coordinate. This leads to forming $m$ distinct estimates for $X$, one corresponding to each coordinate. Clearly, not all the estimates can be deemed reliable, especially when some of the coordinates are compromised. For this purpose, after forming the $m$ estimates we perform a {\sl reliability} test, the purpose of which is discarding the unreliable estimates and retaining and aggregating the reliable ones. We provide relevant estimation cost functions in Section~\ref{sec:costfunctions}, and characterize the reliability test in Section~\ref{sec:reliability}.

\subsubsection{Cost Functions}
\label{sec:costfunctions}

Based on the sequence of the data points at each coordinate, we form an estimate of $X$ corresponding to each coordinate, resulting in $m$ distinct estimates for $X$. Clearly, different coordinates produce estimates of $X$ with potentially different qualities. Motivated by the fact that the decisions formed in the previous steps are not perfect and a coordinate might yield an unreliable estimate for $X$, we consider performing a reliability test on the decision produced at each coordinate. For this purpose, at each coordinate $l$ and based on the data available at coordinate $l$, which we denote by $\Y_l$, we perform a binary test to decide whether the estimate based on the data from coordinate $l$ is reliable or unreliable, denoted by $\H^{\sf r}_l$ and $\H^{\sf u}_l$, respectively. 

We define $\D^{\sf r}_l \in \{\H^{\sf r}_l, \H^{\sf u}_l\}$ as the decision formed about the reliability of the estimate, and let the randomized test $\bar{\boldsymbol{\delta}}_l(\Y_l)= [\bar{\delta}_l^{\sf r}(\Y_l), \bar{\delta}^{\sf u}_l(\Y_l) ]$ be  the decision rule to decide upon the reliability of the estimate from coordinate $l$, where $\bar\delta_l^{\sf r}(\Y_l)$ is the probability of deciding in favor of $\H^{\sf r}_l$ and  $\bar\delta_l^{\sf u}(\Y_l)$ is the probability of deciding in favor of $\H^{\sf u}_l$. Furthermore, we define $\boldsymbol{\delta}_l (\Y) \triangleq  [\delta^0_l(\Y_l), \delta^1_l(\Y_l)]$, where $\delta^1_l(\Y_l)$ denotes the  probability that coordinate $l$ is being compromised based on data $\Y$ and the decision rules in Section~\ref{sec:binary} and Section~\ref{sec:isolation}, and subsequently, $\delta^0_l(\Y_l)=1-\delta^1_l(\Y_l)$. Hence, the likelihood of forming a reliable estimate at coordinate $l$ given that we have decided that coordinate $l$ is not compromised is given by
\begin{align}\label{eq:P0}
\P_0(\D^{\sf r}_l = \H^{\sf r}_l) &= \int \delta^0_l(\Y_l) \bar\delta_l^{\sf r}(\Y_l) g_l^0(\Y_l) \d\Y_l\; .
\end{align}
Similarly,  the likelihood of forming a reliable estimate at coordinate $l$ given that we have decided that coordinate $l$ is compromised is given by
\begin{align}\label{eq:P1}
\P_1(\D^{\sf r}_l = \H^{\sf r}_l) & = \int \delta^1_l(\Y_l) \bar\delta_l^{\sf r}(\Y_l) g_l^1(\Y_l) \d\Y_l\; .
\end{align}
Furthermore, consider $U^0_l$ and $U^1_l$ as the estimates of $X$ at coordinate $l$ when we have decided that the coordinate is attack-free and  compromised, respectively. Hence, the cost associated with $U_l^j$ when the decision of the reliability test is $\H^{\sf r}_l$ is defined as 
\begin{align}
\J_l^j(\delta_l^j,\bar\delta_l^{\sf r},U_l^j) &\triangleq \E_j[\C(X,U_l^j) \mid \D_{ l}^{\sf r} = \H_l^{\sf r}]\; \\
&= \frac{\displaystyle\int\int \delta^j_l(\Y_l) \bar\delta^{\sf r}_l(\Y_l)\C(U_l^j,X) g_l^j(\Y_l \mid X) \cpi(X) \d X \d\Y_l}{\displaystyle \int\int \delta^j_l(\Y_l) \bar\delta^{\sf r}_l(\Y_l) g_l^j(\Y_l \mid X) \cpi(X) \d X \d\Y_l}\; \\
&= \frac{\displaystyle\int \delta_l^j(\Y_l) \bar\delta_l^{\sf r}(\Y_l)\C^j_l(U_l^j\mid \Y_l) g_l^j(\Y_l) \d\Y_l}{\displaystyle\int \delta^l_j(\Y_l) \bar\delta^{\sf r}_l(\Y_l) g_l^j(\Y_l ) \d\Y_l}\; ,
\end{align}
where $\C^j_l(U_l^j\mid \Y_l) $ is the posterior estimation cost at coordinate $l$.  We define the optimal average estimation cost as
\begin{align}
\hat\C^j_l(\Y_l) \triangleq \min_{U_l^j} \C_l^j(U_l^j\mid \Y_l)\;,
\end{align}
and the optimal coordinate-level estimators as
\begin{align}\label{eq:cest}
\hat{X}_l^j(\Y_l) \triangleq \arg\min_{U_l^j} \C_l^j(U_l^j\mid \Y_l)\;.
\end{align}

\subsubsection{Reliability Decision Rules}
\label{sec:reliability}

For the probabilities of forming a reliable estimate at coordinate $l$ defined in~\eqref{eq:P0}~and~\eqref{eq:P1} we have
\begin{align}\label{eq:con1}
\P_j(\D_{l}^{\sf r}   = \H_l^{\sf r}) &= \int \delta_l^j(\Y_l) \bar\delta_l^{\sf r}(\Y_l) g_l^j(\Y_l) \d\Y_l\;,\\
& \leq \int \delta_l^j(\Y_l)  g_l^j(\Y_l) \d\Y_l\\
& =  \rho^j_l\; ,
\end{align}
where $(1-\rho_l^0)$ and $(1-\rho_l^0)$ are the Type-I and Type-II probabilities of mis-classifying the status of coordinate~$l$. Therefore, the probability of forming a reliable estimate, when coordinate $l$ is deemed to be compromised or attack-free is upper bounded by $\rho^1_l$ and $\rho_l^0$, respectively. This implies that only a fraction of the decisions on the true model of data at coordinate $l$ will provide reliable estimates.  We wish to have decision rules that control the fraction of the reliable estimates to be beyond a pre-specified  level. Obviously, these desired levels should also be within the feasible range. For this purpose, we select $\nu_l^j\in[0,\rho_l^j]$,  and impose a lower bound on the fraction of the reliable estimates that the reliability test retains according to 
\begin{align}\label{eq:con2}
\P_j(\D_{l}^{\sf r}= \H_l^{\sf r})\geq \nu_l^j \ .
\end{align}

\noindent Based on the cost functions and decision constraints, the decision rules $\boldsymbol{\bar{\delta}}_l$ and the estimators $U_l^j$ are determined by solving $m$ problems in parallel, where the problem to be solved corresponding to coordinate $l$ is given by
\begin{align}\label{opt}
{\cal S}(\nu^j_l) \triangleq \begin{cases}\min_{\boldsymbol{\bar{\delta}}_l, U_l^j}& \J_l^j(\delta_l^j,\bar\delta_l^{\sf r},U_l^j)\\
 \text{s.t. } & \P_j(\D_{l}^{\sf r} = \H_l^{\sf r}) \geq \nu_l^j
 \end{cases} \; .
\end{align}
Similarly to the discussion in Section~\ref{sec:rules}, since the effect of the estimators $U_l^j$ appears only in the cost function $\J_l^j(\delta_l^j,\bar\delta_l^{\sf r},U_l^j)$, the optimization problem in (\ref{opt}) can be decoupled into two subproblems as formalized in the following theorem.
\begin{theorem}\label{rel}
For given $\nu_l^0$ and $\nu_l^1$, the optimal decision rules for the problems ${\cal S}(\nu_l^0)$ and ${\cal S}(\nu_l^1)$ are given by
\begin{align}
\hat{\C}^0_{l}(\Y_l) &\mathrel{\mathop{\gtrless}^{\H_l^{\sf u}}_{\mathrm{\H_l^{\sf r}}}} \gamma_l^0\quad \mbox{if coordinate $l$ is deemed attack-free}\ ,\\
\hat{\C}^1_{l}(\Y_l) &\mathrel{\mathop{\gtrless}^{\H_l^{\sf u}}_{\mathrm{\H_l^{\sf r}}}} \gamma_l^1\quad \mbox{if coordinate $l$ is deemed compromised}\;,
\end{align}
where $\gamma_l^1$ and $\gamma_l^0$ are selected such that the constraints in (\ref{opt}) are satisfied with equality. The optimal estimate that solve ${\cal S}(\nu_l^0)$ and ${\cal S}(\nu_l^1)$ are the estimators $\hat{X}_l^0(\Y_l)$ and $\hat{X}_l^1(\Y_l)$, respectively, defined in~\eqref{eq:cest}.
\end{theorem}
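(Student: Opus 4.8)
\medskip
\noindent\textbf{Proof proposal.} The plan is to mirror the two-stage treatment used for $\Pt(\alpha,\beta)$ in Theorems~\ref{theorem:decouple}--\ref{theorem:detector}: first peel the estimators $U_l^j$ off the reliability test $\bar\delta_l^{\sf r}$, and then recognize the residual optimization over $\bar\delta_l^{\sf r}$ as a Neyman--Pearson-type problem whose threshold is pinned down by the feasibility constraint. The two subproblems $\cS(\nu_l^0)$ and $\cS(\nu_l^1)$ are structurally identical (only the index $j$ and the fixed input test $\delta_l^j$ differ), so it suffices to argue for a generic $j\in\{0,1\}$.

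For the decoupling step, observe that $U_l^j$ enters $\J_l^j$ only through the numerator. Indeed,
\[
\J_l^j(\delta_l^j,\bar\delta_l^{\sf r},U_l^j)=\frac{\int \delta_l^j(\Y_l)\,\bar\delta_l^{\sf r}(\Y_l)\,\C_l^j(U_l^j\mid\Y_l)\,g_l^j(\Y_l)\,\d\Y_l}{\int \delta_l^j(\Y_l)\,\bar\delta_l^{\sf r}(\Y_l)\,g_l^j(\Y_l)\,\d\Y_l}\ ,
\]
where $\C_l^j(\cdot\mid\Y_l)\ge 0$ and the denominator does not depend on $U_l^j$; hence for any fixed admissible $\bar\delta_l^{\sf r}$ the ratio is minimized by taking $U_l^j(\Y_l)$ to be the pointwise minimizer of $\C_l^j(\cdot\mid\Y_l)$, which by~\eqref{eq:cest} is $\hat X_l^j(\Y_l)$ with optimal value $\hat\C_l^j(\Y_l)$. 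This choice is simultaneously optimal for all $\bar\delta_l^{\sf r}$, which proves the last assertion of the theorem and reduces $\cS(\nu_l^j)$ to
\begin{equation}\label{eq:planreduced}
\min_{\bar\delta_l^{\sf r}}\ \frac{\int \delta_l^j(\Y_l)\,\bar\delta_l^{\sf r}(\Y_l)\,\hat\C_l^j(\Y_l)\,g_l^j(\Y_l)\,\d\Y_l}{\int \delta_l^j(\Y_l)\,\bar\delta_l^{\sf r}(\Y_l)\,g_l^j(\Y_l)\,\d\Y_l}\qquad\text{s.t.}\qquad \int \delta_l^j(\Y_l)\,\bar\delta_l^{\sf r}(\Y_l)\,g_l^j(\Y_l)\,\d\Y_l\ \ge\ \nu_l^j\ .
\end{equation}

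To solve~\eqref{eq:planreduced}, treat $\mu(\d\Y_l)\triangleq\delta_l^j(\Y_l)\,g_l^j(\Y_l)\,\d\Y_l$ as a fixed finite measure of total mass $\rho_l^j$ and $\bar\delta_l^{\sf r}\in[0,1]$ as a fractional indicator. I would split the optimization: first fix the selected mass $p\triangleq\int\bar\delta_l^{\sf r}\,\d\mu\in[\nu_l^j,\rho_l^j]$ and minimize the numerator $\int\bar\delta_l^{\sf r}\,\hat\C_l^j\,\d\mu$. By the bathtub/rearrangement principle the minimizer is the sublevel test $\bar\delta_l^{\sf r}(\Y_l)=\mathbbm{1}\{\hat\C_l^j(\Y_l)<\gamma\}$, completed by a constant randomization on $\{\hat\C_l^j=\gamma\}$, with $\gamma$ chosen so that $\int\bar\delta_l^{\sf r}\,\d\mu=p$ --- exactly the threshold rule in the statement (the boundary randomization is vacuous under the paper's standing assumption that the pdfs put no mass on lower-dimensional sets, so the test is deterministic). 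Writing $c^\ast$ for the nondecreasing rearrangement of $\hat\C_l^j$ with respect to $\mu$, the optimal value of~\eqref{eq:planreduced} at selected mass $p$ equals $\tfrac1p\int_0^p c^\ast(s)\,\d s$, which is nondecreasing in $p$ since it is the running average of a nondecreasing function; hence the mass constraint binds, $p=\nu_l^j$, and $\gamma_l^j$ is the threshold for which $\P_j(\D_l^{\sf r}=\H_l^{\sf r})=\nu_l^j$, as claimed. A route that stays closer to Section~\ref{sec:rules} is to recast~\eqref{eq:planreduced} in the variational form of Theorem~\ref{theorem:convex} --- a test of cost at most $u$ being characterized by $\int\bar\delta_l^{\sf r}(\hat\C_l^j-u)\,\d\mu\le 0$ together with the mass constraint --- and then apply the KKT/Lagrangian argument of Theorem~\ref{theorem:detector} to obtain the same sublevel structure.

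The main obstacle is the fractional form of the objective in~\eqref{eq:planreduced}: a naive Lagrangian on $\cS(\nu_l^j)$ does not separate, so one must either linearize the ratio (the Dinkelbach/auxiliary-variable device, as in Theorem~\ref{theorem:convex}) or argue by rearrangement as above. The only delicate points afterward are the boundary randomization in the degenerate (atomic) case and the monotonicity of the minimal ratio in the selected mass, which is precisely what forces the reliability constraint to bind and fixes $\gamma_l^j$.
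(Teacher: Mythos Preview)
Your proposal is correct and structurally close to the paper's proof, but the two diverge in how the constraint is shown to bind. The decoupling of $U_l^j$ is identical. For the reliability test, the paper exploits the \emph{scale invariance} of the ratio: given any feasible $\bar\delta_l^{\sf r}$ with $\int\bar\delta_l^{\sf r}\,\d\mu>\nu_l^j$, replace it by $\Delta_l^{\sf r}=\bar\delta_l^{\sf r}\cdot\nu_l^j/\!\int\bar\delta_l^{\sf r}\,\d\mu$, which meets the constraint with equality and leaves the objective unchanged (numerator and denominator scale by the same factor). This instantly reduces the problem to minimizing the numerator $\int\bar\delta_l^{\sf r}\,\hat\C_l^j\,\d\mu$ subject to $\int\bar\delta_l^{\sf r}\,\d\mu=\nu_l^j$, after which the paper establishes optimality of the sublevel test by the one-line Neyman--Pearson inequality $\int\bar\delta_l^{\sf r}(\hat\C_l^j-\gamma_l^j)\,\d\mu\ge\int_{\{\hat\C_l^j\le\gamma_l^j\}}(\hat\C_l^j-\gamma_l^j)\,\d\mu$.

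Your route --- bathtub principle at fixed mass, then monotonicity of $p\mapsto\tfrac1p\int_0^p c^\ast$ --- is a perfectly valid alternative and yields the slightly stronger statement that the optimal cost is nondecreasing in the selected mass; the paper's homogeneity trick is shorter but only shows that equality can be assumed without loss. Your mention of the Theorem~\ref{theorem:convex}/\ref{theorem:detector} (Dinkelbach) variant is unnecessary here precisely because scale invariance already linearizes the problem for free.
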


\begin{proof}
See Appendix \ref{Thm7}.
\end{proof}

For given data $\Y$, performing the reliability test in Theorem~\ref{rel} retains the estimates deemed reliable. These estimates, subsequently, can be aggregated to form a single estimate for $X$. The estimation approach, which consists of global binary attack detection, followed by coordinate-level isolation of compromised coordinates and local estimates, and completed by aggregating the reliable coordinate-level estimates, renders a scalable approach to the secure estimation of interest. This approach is significantly less complex as compared to the optimal estimation rules characterized in Theorem~\ref{theorem:estimator}. Finally, we remark that aggregating the local estimates is a well-investigated problem, to which there exists a broad range of solutions with varying degrees of optimality under various settings. Representative approaches to  such aggregation include the studies in~\cite{distest3,lineardec,fusion,xiao2005}. A summary of the steps involved in the scalable approach is provided in Algorithm~2.

\begin{algorithm}
\caption{-- Scalable and Asymptotically Optimal Solution to $\Pt(\alpha,\beta)$}
\label{array-sum}
\begin{algorithmic}[1]
\State \text{input feasible} $\nu_l^0$\text{ and }$\nu_l^1$\text{ for all coordinates $l$} 
\State \text{Binary attack detection}
\If {\text{Attack exists}}
\State \text{Isolate the true model using decision rule in Theorem \ref{theorem:asympIsolation}}
\EndIf
\State \text{Form estimates based on data at each coordinate}
\State\text{Test for estimate reliability at each coordinate using decision rules in Theorem \ref{rel} }
\State\text{Aggregate reliable coordinate level estimates to form the final estimate}
\end{algorithmic}
\end{algorithm}

\newpage
\section{Case Studies: Secure Estimation in Sensor Networks}
We use the example of a sensor network consisting of two sensors and a fusion center (FC) to evaluate the estimation frameworks presented in this paper. Each sensor is collecting a stream of data. Sensor $i\in\{1,2\}$ collects $n$ measurements, denoted by $\Y_i=[Y^i_1,\dots,Y^i_n]$, where each sample $Y^i_j\in\mathbb{R}$ in an attack-free environment is related to $X$ according to 
\begin{align}\label{eq:attack_free}
Y_j^i = h^iX + N_j^i\ ,
\end{align}
where $h^i$ models the channel connecting sensor $i$ to the FC and  $N_j^i$ accounts for the additive channel noise. Different noise terms are assumed to be independent and identically distributed (i.i.d.) generated according to a known distribution. We will consider two adversarial scenarios that impact the data model in~\eqref{eq:attack_free} and evaluate the optimal performance, as well as the application of the asymptotically optimal scalable algorithm when the number of the samples $n$ tends to infinity.

\subsection{Case 1: One Sensor Vulnerable to Causative Attacks}
We start by considering an adversarial setting in which the data model of the measurements from only one sensor (sensor 1) are vulnerable to a causative attack, while the other sensor (sensor 2) remains attack-free. Under such a setting, we have only one attack scenario, i.e., $T=1$ and $S_1=\{1\}$. Accordingly, we have $\epsilon_0+\epsilon_1=1$. Under the attack-free scenario, we assume that the noise terms $N_j^i$ are i.i.d. and distributed according to $\N(0,\sigma^2_n)$, i.e.,
\begin{align}
Y_j^i \mid X \sim \N(h^i X, \sigma_n^2)\ .
\end{align}
When data from sensor $1$ is compromised, the actual conditional distribution of  $Y_j^1|X$ is distinct from the above distribution assumed by the statistician. The inference objective under such a setting, in principle, becomes similar to the adversarial setting of~\cite{wilson2016}, which focuses on data injection. Hence, in order to be able to compare the performance of the optimal framework with that of~\cite{wilson2016}, we assume that the conditional distribution of  $Y_j^1|X$ when sensor 1 is under a causative attack is $\N(h^i X,\sigma^2_n) \ast {\sf Unif}[a,b]$, where $a,b\in\mathbb{R}$ are fixed constants and $\ast$ denotes convolution. Therefore, the composite hypothesis test for estimating $X$ and discerning the model in~\eqref{eq:hyp1} simplifies to the following binary test with the prior probabilities $\epsilon_0$ and $\epsilon_1$, in which we have defined $\Y=[\Y_1,\Y_2]$. 
\begin{equation}\label{eq:hyp:case}
\begin{array}{ll}
\H_0\ : & \Y \sim f_0(\Y\mid X),\ \text{ with }X \sim \N(0,\sigma^2)\\
\H_1\ : & \Y \sim f_1(\Y\mid X),\ \text{ with }X \sim \N(0,\sigma^2)\\
\end{array} ,
\end{equation}
Figure \ref{fig:1} depicts the variations of the estimation quality, captured by $q$, versus the tolerable miss-detection rate $\beta$, where it is observed that the estimation quality improves monotonically as $\beta$ increases, and it reaches its maximum quality when $\beta=1$. 
This observation is in line with what is expected analytically from the formulation of the secure parameter estimation problems in \eqref{eq:pr1} and \eqref{eq:pr2}.

A similar setting is studied in~\cite{wilson2016}, where the attack is induced additively into the data of sensor 1 and can be any real number. This setting can be studied in the context of causative attacks where the attacker's mode of compromising the data is adding a disturbance which has uniform distribution. Figure~\ref{fig:1} also compares the estimation quality of the methodology developed in this paper, with that obtained by applying the methodology of~\cite{wilson2016}, which characterizes a single point in the $(q,\beta)$ plane. Specifically, in~\cite{wilson2016}, an estimator is designed to obtain the most robust estimate by exploring the dependence of the estimation quality on the false alarm probability, using which an optimal false alarm probability $\alpha^{\ast}$ is obtained, which in turn, fixes the miss-detection error probability, and does not provide the flexibility to change the miss-detection rate $\beta$.

\begin{figure}[H]
\centering
\includegraphics[width=3.4 in]{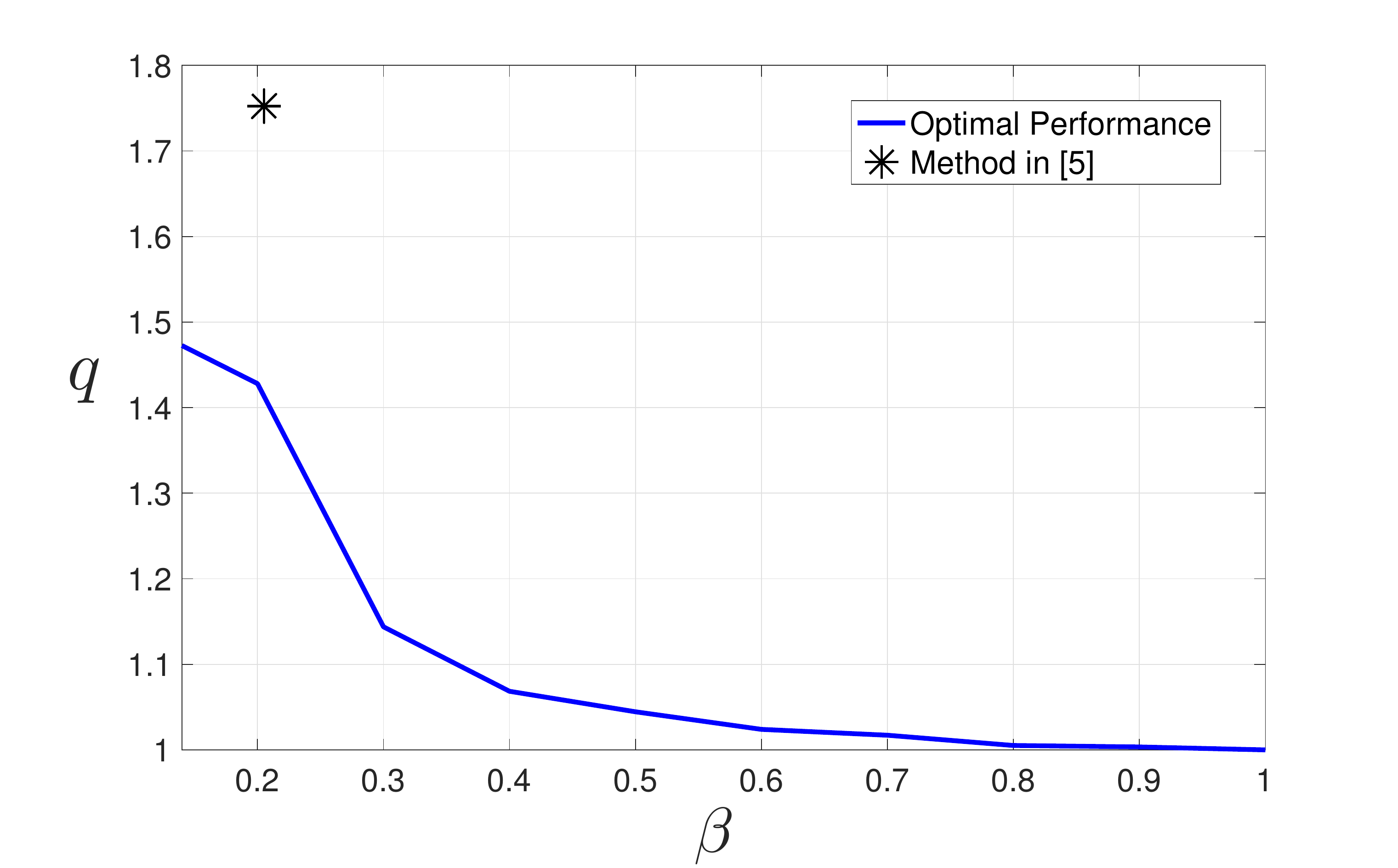}
\caption{$q$ versus $\beta$ for fixed $\alpha^{\ast} = 0.1$.}
\label{fig:1}
\end{figure}

\noindent The results presented in Fig.~\ref{fig:1} correspond to $\sigma = 3$, $\sigma_n = 1$, $h^1 = 1$, $h^2 = 4$, $a = -40, b = 40$. The upper bound on $\Pc_{\sf fa}$ is set to $\alpha^{\ast} = 0.1$, where $\alpha^{\ast}$ is obtained using the methodology in \cite{wilson2016}.

\subsection{Case 2: Both Sensors Vulnerable to Causative Attacks}
We again consider the same model for $X$, and in this setting, we assume that data from both the sensors are vulnerable to being compromised. We assume that the attacker can compromise the data of at most one sensor at any instant. Under such a setting, we have $T=2$, $S_1=\{1\}$, and $S_2=\{2\}$. Therefore, under the adversarial setting, the sensor measurements follow the following composite hypothesis model
\begin{align}\label{eq:hyp:case2}
\begin{array}{ll}
\H_0\ : & \Y \sim f_0(\Y\mid X),\ \text{ with }X \sim \cpi(X)\\
\H_1\ : & \Y \sim f_1(\Y\mid X),\ \text{ with }X \sim \cpi(X)\\
\H_2\ : & \Y \sim f_2(\Y\mid X),\ \text{ with }X \sim \cpi(X)\\ 
\end{array} ,
\end{align}
where $\H_0$ corresponds to the attack-free setting, and hypothesis $\H_i$ corresponds to the data of sensor $i$ being compromised. Motivating by the fact that the sensor with the higher gain $h^i$ is expected to generate a better estimate, we explore a scenario in which the sensor with the higher gain is more likely to be attacked. Hence, we select the parameters $h^1 = 1$, and $h^2 = 2$, and set the probabilities $(\epsilon_0,\epsilon_1,\epsilon_2)=(0.2,0.2,0.6)$. We set the distribution of $X$ to ${\sf Unif}[-2,2]$. We assume that $Y_j^i,$ for $i\in \{1,2\}$, given $X$, is distributed according to ${\cal N}(h^i X,1)$ in the attack-free setting. When sensor $i$ is compromised, we assume that $Y_j^i,$ for $i\in \{1,2\}$, given $X$, follows the distribution ${\cal N}(h^iX,5)$. 

\begin{figure}[H]
\centering
\includegraphics[width=3.4 in]{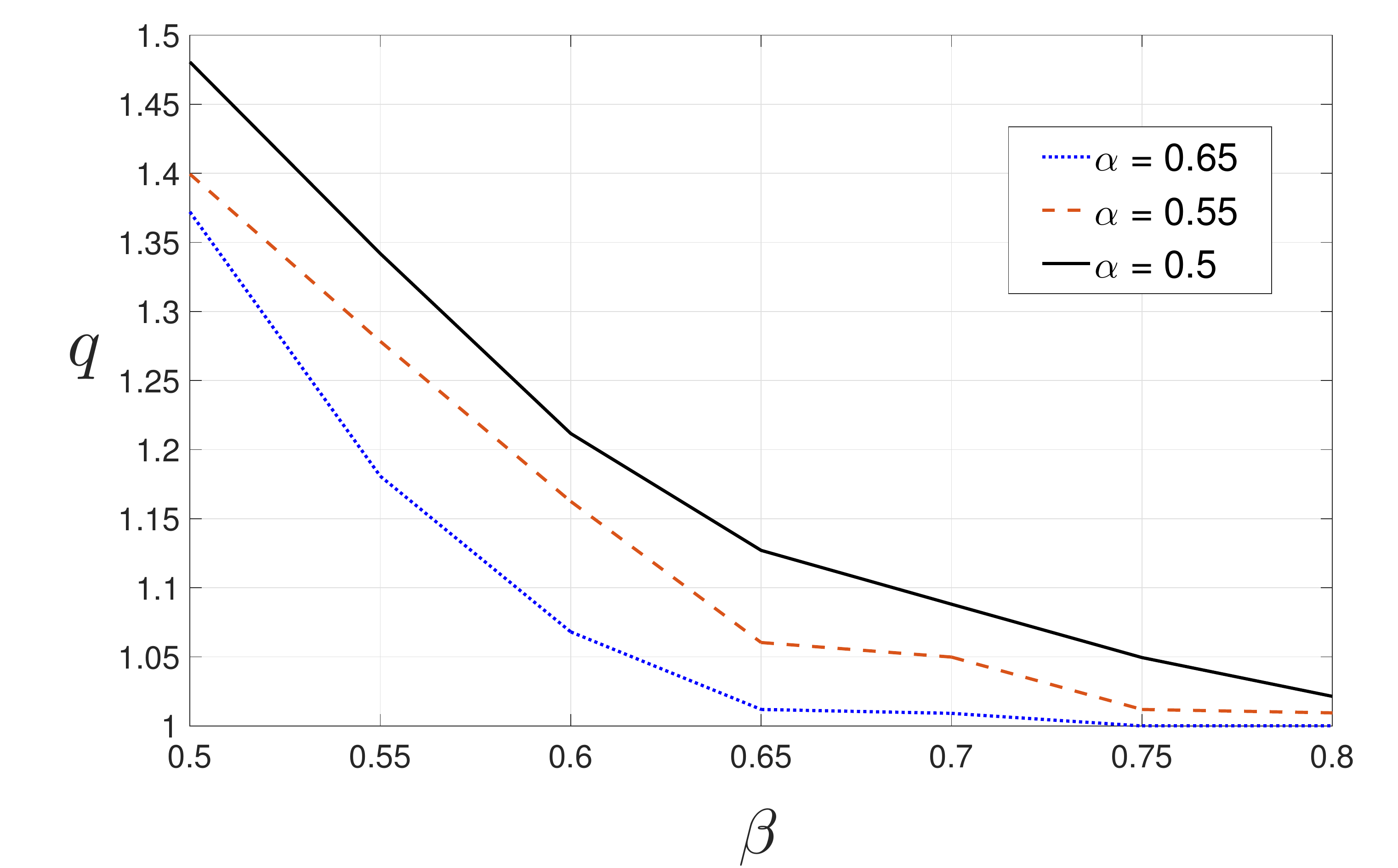}
\caption{$q$ versus $\beta$ for different values of $\alpha$.}
\label{fig:2}
\end{figure}

Figure \ref{fig:2} depicts the performance region defined in Fig.~\ref{fig:PR}, which specifies the variations of $q$ versus $\beta$ for three different values of $\alpha$. The region spanned by the plots between $q$ and $\beta$ for different values of $\alpha$ is the feasible region of operation for the secure estimation methodology developed in this paper. This provides the FC with the flexibility to adjust the emphasis on each of the estimation or detection decisions. As expected, the estimation quality improves monotonically as $\alpha$ and $\beta$ increase.

\subsection{Case 3: Evaluation of Scalable Secure Parameter Estimation Framework}
Before comparing the estimation performance from the scalable approach developed in Section \ref{sec:complexity} and the optimal approach, we illustrate the asymptotic optimality of the decision rule proposed in Theorem \ref{theorem:asympIsolation}. Consider a 2 sensor network, where the measurements of at least one sensor are compromised at any instant. The measurements of the sensors follow a similar model as described in  \eqref{eq:attack_free}. 
Assuming that the parameter $X$ follows the distribution $\N(0,\sigma^2)$, $Y_j^i$ given $X$ is distributed according to $\N(h^i X,\sigma_1^2)$ under the attack-free scenario and according to $\N(h^iX,\sigma_2^2)$ when sensor $i$ is compromised, we illustrate the variations of $-\log(\Pc_{\sf is} (\boldsymbol{ \bar{\delta}}_1))$ versus the number of observations at each sensor in  Fig.~\ref{fig:3}. For the results in Fig.~\ref{fig:3}, we set $h^1 =1, h^2 = 4$, $\sigma^2 = 4$, $\sigma_1^2 = 2$ and $\sigma_2^2 = 6$ . The error probabilities corresponding to both decision rules decay exponentially at the same rate with the increase in number of observations at each sensor. 
\begin{figure}[H]
\centering
\includegraphics[width=3.4 in]{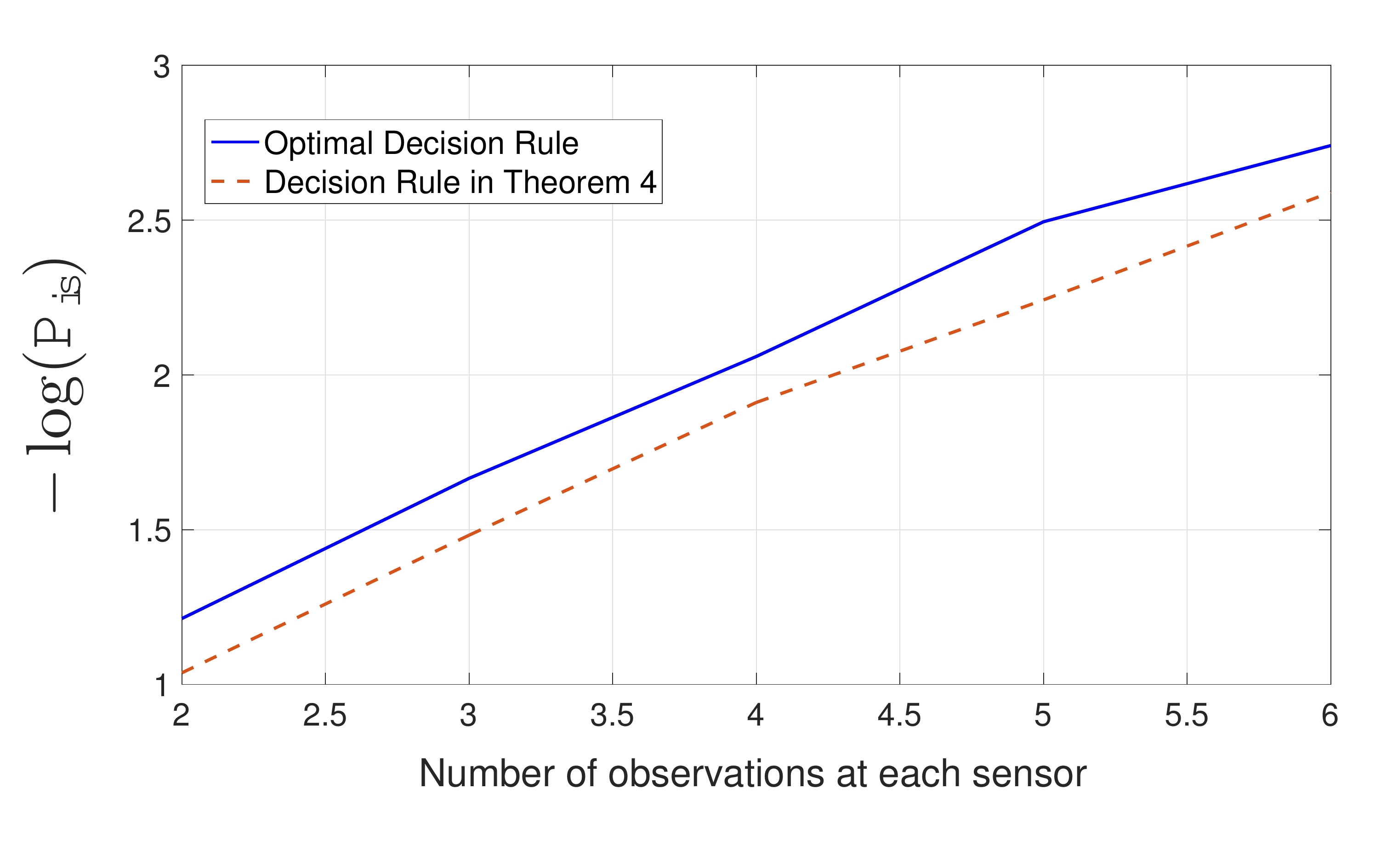}
\caption{$\log(\Pc_{\sf is} (\boldsymbol{ \bar{\delta}}_1))$ versus number of observations at each sensor.}
\label{fig:3}
\end{figure}


In order to compare the performance of the scalable secure estimation framework with that of the optimal framework, we choose $\nu_l^0$ and $\nu_l^1$, for $l \in \{1,2\}$, according to the steps described in Algorithm~2.
We aggregate the reliable local estimates to form an optimal linear estimate at the FC using the fusion strategy for sensor networks with star topology in \cite{fusion}. 
The decision on the reliability of the estimate is formed using the decision rules in Theorem \ref{rel}, following which the local linear estimates from the sensors deemed to provide reliable estimates are aggregated at the FC. To compare with the estimation degradation factor $q$ for an optimal framework, we evaluate the estimation degradation factor for the scalable secure estimation framework and denote it by $\hat{q}$. The estimation performances for the scalable framework and the optimal decision rules are compared in the following table:
\begin{table}[H]
\caption{Comparison of Estimation Performance from Optimal Decision Rules and Heuristic Approach}
\centering
\begin{tabular}{c c c c c c c c}
\hline
$\nu_1^0$ & $\nu_1^1$ & $\nu_2^0$ & $\nu_2^1$ & $\hat{q}$ & $\alpha$ & $\beta$ & $q$\\
\hline
0.2 & 0.57 & 0.2 & 0.5261 & 1.48 & 0.0215 & 0.5134 & 1.126 \\
0.25 & 0.4384 & 0.2 & 0.658 & 1.456 & 0.0216 & 0.6484 & 1.046\\ 
0.3 & 0.4467 & 0.15 & 0.6359 & 1.434  & 0.0482 & 0.5946 & 1.060\\
0.3 & 0.4467 & 0.3 & 0.4124 & 1.491 & 0.0762 & 0.4198 & 1.151\\
0.35 & 0.4021 & 0.25 & 0.4832 & 1.434 & 0.0528 & 0.4812  & 1.136\\
0.15 & 0.689 & 0.35 & 0.4124 & 1.44 & 0.0476 & 0.476 & 1.140
\end{tabular}
\label{table:1}
\end{table}																
\noindent For the results presented in Table \ref{table:1}, we have set $h^1 = h^2 =1$. We assume that the parameter $X$ is distributed according to $\N(0,3)$, $N_j^i$ is distributed according to $\N(0,1)$ under the attack-free scenario, and according to $\N(0,1)\ast{\sf Unif}[-10,10]$ when data from sensor $i$ is compromised. As expected, the optimal decision rules result in superior estimation quality as compared to that obtained from the scalable framework, i.e., $\hat q > q$. 

\section{Conclusion}
We have formalized and analyzed the problem of secure parameter estimation problem under the potential presence of causative attacks on the estimation algorithm. Under causative attacks, the information of the estimation algorithm about the  statistical model of the sampled data is compromised. This leads the estimation algorithm exhibit degraded performance compared to the attack-free setting. We have provided closed-form optimal decision rules that ensure the best estimation quality (minimum estimation cost) while controlling the error in detecting the attacks and isolating the true model of the data. We have shown that the design of optimal estimators is intertwined with the detection rules for deciding upon the true model of the data. Based on this, we have designed the optimal decision rules, which combine both estimation performance and detection power. Based on this vision, the decision-maker can place any desired emphasis on the estimation and detection routines involved. to study the trade-off between the two. We have also provided case studies by applying the theory developed to sensors networks, where sensors face security vulnerabilities. Finally, to circumvent the computational complexity associated with growing the data dimension or attack complexity, we have provided a low-complexity secure estimation algorithm that is optimal in the asymptote of high data dimension.

\appendix
\section{Proof of Theorem \ref{theorem:estimator}}
\label{proof:theorem:estimator}

From (\ref{eqr1}) we have
\begin{align}
\J_i(\delta_i,U_i) &= \E \left[ \C(X,U_i(\Y))\! \mid\! \D\! = \! \H_i\right] \nonumber \\
&= \frac{\displaystyle \int_{\Y} \int_{\X} \delta_i(\Y) \C(X,U_i(\Y))f_i(\Y\!\mid\!X)\cpi(X)\d X \d\Y }{\displaystyle \int_{\Y}\delta_i(\Y)f_i(\Y)\d\Y}\nonumber \;.
\end{align}
Using the definition of $\C_{{\rm p},i}(U_i(\Y)\mid \Y)$ from (\ref{eq:pcost}), a lower bound on $\J_i({\delta_i,U_i(\Y)})$ is given by
\begin{align}\label{eq:e3}
\J_i(\delta_i,U_i) &=\frac{\displaystyle \int_{\Y} \delta_i(\Y)\C_{{\rm p},i}(U_i(\Y)\mid \Y) f_i(\Y) \d\Y }{\displaystyle \int_{\Y}\delta_i(\Y)f_i(\Y)\d\Y}\nonumber \\
& \geq  \frac{\displaystyle \int_{\Y} \delta_i(\Y)\inf_{ U_i(\Y)}\C_{{\rm p},i}(U_i(\Y)\mid \Y) f_i(\Y) \d\Y }{\displaystyle \int_{\Y}\delta_i(\Y)f_i(\Y)\d\Y}\;,
\end{align}
which implies that
\begin{align}
\J_i(\delta_i,U_i)& \geq  \frac{\displaystyle \int_{\Y} \delta_i(\Y){\C}_{{\rm p},i}^{\ast}(\Y)f_i(\Y) \d\Y }{\displaystyle \int_{\Y}\delta_i(\Y)f_i(\Y)\d\Y}\;.
\end{align}
Based on the definition of $\hat{X}_{i}(\Y)$ provided in (\ref{eq:p1}), this lower bound is clearly achieved when the estimator $U_i(\Y)$ is chosen to be
\begin{align}\label{eq:p11}
\hat{X}_i(\Y) = \arg \inf_{U_i(\Y)} \C_{{\rm p},i}(U_i(\Y) \mid\Y)\;,
\end{align}
which proves that the estimator characterized in (\ref{eq:p1}) is an optimal estimator that minimizes the cost $\J_i(\delta_i,U_i)$. The corresponding minimum average estimation cost is
\begin{align}\label{eq:p12}
{\J}_i(\delta_i,\hat{X}_i) = \frac{\displaystyle \int_{\Y} \delta_i(\Y)\C_{{\rm p},i}^{\ast}(\Y)f_i(\Y) \d\Y }{\displaystyle \int_{\Y}\delta_i(\Y)f_i(\Y)\d\Y}\; .
\end{align}
Next, we prove that 
\begin{align}\label{eq:p3}
\max_i \min_{ \bU} \left\lbrace\J_i(\delta_i,U_i)\right\rbrace \equiv \min_{\bU} \max_i \left\lbrace \J_i(\delta_i,U_i)\right\rbrace\; .
\end{align} 
Recall from (\ref{eqr2}), the overall estimation cost $\J(\boldsymbol{\delta},{ \bU}) $ is defined as
\begin{align}
\J(\boldsymbol{\delta},\bU) = \max_i \left\lbrace \J_i(\delta_i,U_i) \right\rbrace\; .
\end{align}
Define $\cC(\boldsymbol{\Omega,\delta},\bU)$ as a convex function of $\J_i(\delta_i,U_i), i \in \left\lbrace0,\dots,T \right\rbrace$, given by
\begin{align}
\cC(\boldsymbol{\Omega},\boldsymbol{\delta},\bU) \triangleq \sum\limits_{i=0}^{T} \Omega_i \J_i(\delta_i,U_i)\; ,
\end{align}
where $\boldsymbol{\Omega} = [\Omega_0,\dots,\Omega_T]$, and $\Omega_i$ satisfy 
\begin{align}
\sum\limits_{i=0}^{T} \Omega_i = 1\; , \text{ and } \Omega_i \in [0,1]\;.
\end{align}
We can represent $\J(\boldsymbol{\delta},\bU)$ as a function of $\cC(\boldsymbol{\Omega},\boldsymbol{\delta},\bU)$ in the following form
\begin{align}
\J(\boldsymbol{\delta},\bU) &= \max_{\boldsymbol{\Omega}} \cC(\boldsymbol{\Omega},\boldsymbol{\delta},\bU)\nonumber \;.
\end{align}
Let $\boldsymbol{\Omega^{\ast}} = \{ \Omega_j^{\ast}:j=0,\dots,T \}$ be defined as
\begin{align}
\boldsymbol{\Omega^{\ast}} &\triangleq \arg \max_{\boldsymbol{\Omega}} \cC(\boldsymbol{\Omega},\boldsymbol{\delta},\bU)\nonumber \;,
\end{align}
where $\Omega_j^{\ast} = 1$ if 
\begin{align}
j = \arg \max_i \left\lbrace \J_i(\delta_i,U_i) \right\rbrace\; .
\end{align}
From (\ref{eq:p11}) and (\ref{eq:p12}), we observe that
\begin{align}\label{eq:cf1}
\max_{\boldsymbol{\Omega}} \min_{\bU}\cC(\boldsymbol{\Omega},\boldsymbol{\delta},\bU) &= \max_{\boldsymbol{\Omega}} \cC(\boldsymbol{\Omega},\boldsymbol{\delta},\hat{\X})\nonumber \\
& \geq \min_{\bU} \max_{\boldsymbol{\Omega}} \cC(\boldsymbol{\Omega},\boldsymbol{\delta},\bU)\;.
\end{align}
Also, at the same time, we have 
\begin{align}
\max_{\boldsymbol{\Omega}} \cC(\boldsymbol{\Omega},\boldsymbol{\delta},\bU) &\geq \max_{\boldsymbol{\Omega}} \min_{\bU}\cC(\boldsymbol{\Omega},\boldsymbol{\delta},\bU) \;,
\end{align}
which implies that
\begin{align}\label{eq:cf2}
\min_{\bU} \max_{\boldsymbol{\Omega}} \cC(\boldsymbol{\Omega},\boldsymbol{\delta},\bU) &\geq \max_{\boldsymbol{\Omega}} \min_{\bU}\cC(\boldsymbol{\Omega},\boldsymbol{\delta},\bU) \;.
\end{align}
From (\ref{eq:cf1}) and (\ref{eq:cf2}), it is easily concluded that
\begin{align}
\max_{\boldsymbol{\Omega}} \min_{\bU}\cC(\boldsymbol{\Omega},\boldsymbol{\delta},\bU) = \min_{\bU} \max_{\boldsymbol{\Omega}} \cC(\boldsymbol{\Omega},\boldsymbol{\delta},\bU) \;,
\end{align}
which completes the proof for (\ref{eq:p3}). Using the results in (\ref{eq:p3}) and (\ref{eq:p12}), the cost function ${\J}(\boldsymbol{\delta}, \hat{\X})$ is given by
\begin{align}
{\J}(\boldsymbol{\delta}, {\hat{\X}}) &= \min_{\bU} \max_{i} \left\lbrace \J_i(\delta_i,U_i) \right\rbrace\nonumber \\
&= \max_i \min_{\bU}\left\lbrace \J_i(\delta_i,U_i) \right\rbrace\nonumber \\
&= \max_i \left\lbrace {\J}_i(\delta_i,\hat{X}_i) \right\rbrace\\
&= \max_i \left\lbrace \frac{\displaystyle \int_{\Y} \delta_i(\Y){\C}_{{\rm p},i}^{\ast}(\Y)f_i(\Y) \d\Y }{\displaystyle \int_{\Y}\delta_i(\Y)f_i(\Y)\d\Y} \right\rbrace\; .
\end{align}

\section{Proof of Theorem~\ref{theorem:convex}}
\label{proof:theorem:convex}
The function $\J_i(\delta_i,U_i)$ is a quasi-convex function. Since the weighted maximum function preserves the quasi-convexity, it is concluded that ${\J}_i(\delta_i,\hat{X}_i) $ is a quasi-convex function from its definition in (\ref{eq:p2}). Therefore, we can find the solution by solving an equivalent feasibility problem given below \cite{boyd2004}. Specifically, for $u \in \mathbb{R_{+}}$, it is easily observed that 
\begin{align}
{\J}(\boldsymbol{\delta}, \hat{\X}) \leq u &\equiv  \int_{\Y} \delta_i(\Y) f_i(\Y) ({\C}^{\ast}_{{\rm p},i}(\Y) - u)\d\Y \leq 0,\text{ for }i\in\left\lbrace0,\dots,T\right\rbrace \;.
\end{align}
Hence, the feasibility problem equivalent to (\ref{eq:prob2}) is given  by
\begin{align}
\Pt(\alpha,\beta) &= \begin{cases}
\min_{\boldsymbol{\delta}} & u\\ \vspace{2pt}
\text{s.t.} & \displaystyle \int_{\Y} \delta_i(\Y) f_i(\Y) ({\C}^{\ast}_{{\rm p},i}(\Y) - u)\d\Y \leq 0, \quad\forall i\in \{ 0,\dots,T\}\\
& \sum\limits_{j=1}^{T}\sum\limits_{i=0,i\neq j}^{T} \frac{\epsilon_j}{1-\epsilon_0}\displaystyle \int_{\Y}\delta_i(\Y)f_j(\Y)\d\Y\  \leq \beta\\
& \sum\limits_{i=1}^{T}\displaystyle \int_{\Y}\delta_i(\Y)f_0(\Y)\d\Y \leq \alpha
\end{cases}\;.
\end{align}
The above problem is feasible if $\Pt(\alpha,\beta) \leq u$, where $\Pt(\alpha,\beta)$ represents the lowest possible value of $u$ for which the problem is feasible and all the constraints are satisfied. Given an interval $[u_0,u_1]$ containing $\Pt(\alpha,\beta)$, the optimal detection rule $\boldsymbol{ \delta}$ and the optimal estimation cost $\Pt(\alpha,\beta)$ can be determined by a bi-section search between $u_0$ and $u_1$ iteratively, solving the feasibility problem in each iteration. 
To solve the feasibility problem, we define an auxiliary convex optimization problem 
\begin{align}\label{eq:ap}
\R(\alpha,\beta,u) &= \begin{cases}
\min_{\boldsymbol{ \delta}} & \eta\\ \vspace{2pt}
\text{s.t.} & \displaystyle \int_{\Y} \delta_i(\Y) f_i(\Y) ({\C}_{{\rm p},i}^{\ast}(\Y) - u)\d\Y \leq \eta, \quad\forall i\in \{ 0,\dots,T\}\\
& \sum\limits_{j=1}^{T}\sum\limits_{i=0,i\neq j}^{T} \frac{\epsilon_j}{1-\epsilon_0}\displaystyle \int_{\Y}\delta_i(\Y)f_j(\Y)\d\Y\  \leq \beta + \eta\\
& \sum\limits_{i=1}^{T}\displaystyle \int_{\Y}\delta_i(\Y)f_0(\Y)\d\Y \leq \alpha + \eta 
\end{cases}\;.
\end{align}
Algorithm 1 summarises the steps for determining $\Pt(\alpha,\beta)$.
\begin{algorithm}
\caption{Bi-section Search}
\label{array-sum}
\begin{algorithmic}[1]
\State \text{Initialize }$u_0, u_1$
\Repeat
    \State $\hat{u} \gets (u_0 + u_1)/2$
    \State \text{Solve }$\R(\alpha,\beta,\hat{u})$
    \If {$\Jc(\alpha,\beta,\hat{u}) \leq 0$}
    \State $u_1 \gets \hat{u}$
    \Else
    \State $u_0 \gets \hat{u}$
    \EndIf
	
\Until $u_1 - u_0 \leq \epsilon$\text{, for }$\epsilon$\text{ sufficiently small}
\State $\Pt(\alpha,\beta)\gets u_1$
\end{algorithmic}
\end{algorithm}
\section{Proof of Theorem \ref{theorem:detector}}\label{proof:theorem:detector}

To solve the problem in (\ref{eq:ap}), a Lagrangian function is constructed according to
\begin{align}
\begin{split}
\Q(\boldsymbol{\delta},\eta,\boldsymbol{\ell}) &\triangleq \left(1-\sum\limits_{i=0}^{T+2} \ell_i\right)\eta\nonumber\\
&\quad + \sum\limits_{i=0}^{T} \ell_i \displaystyle \int_{\Y} \delta_i(\Y)f_i(\Y)({\C}^{\ast}_{{\rm p},i}(\Y) - u)\d\Y \nonumber \\
&\quad + {\ell_{T+1}} \sum\limits_{j=1}^{T}\sum\limits_{i=0,i\neq j}^{T} \frac{\epsilon_j}{1-\epsilon_0}\displaystyle \int_{\Y}\delta_i(\Y)f_j(\Y)\d\Y\ \nonumber- \ell_{T+1}\beta\\
&\quad + \ell_{T+2} \sum\limits_{i=1}^{T}\displaystyle \int_{\Y}\delta_i(\Y)f_0(\Y)\d\Y - \ell_{T+2}\alpha
\end{split}\;,
\end{align}
where $\boldsymbol{\ell} \triangleq \left[ \ell_0,\dots,\ell_{T+2}\right]$ are the non-negative Lagrangian multipliers selected to satisfy the constraints in~\eqref{eq:prob2}, such that 
\begin{align}
\sum\limits_{i=0}^{T+2}\ell_i = 1\;.
\end{align}
The Lagrangian dual function is given by
\begin{align}
d(\boldsymbol{\ell}) &\triangleq \min_{\boldsymbol{ \delta,\eta}} \Q(\boldsymbol{\delta},\eta,\boldsymbol{\ell})\nonumber \\
&= \min_{\boldsymbol{ \delta}} \left(\sum\limits_{i=0}^{T} \displaystyle \int_{\Y} \delta_i(\Y) A_i \d\Y\right) - \ell_{T+1}\beta - \ell_{T+2}\alpha\;,
\end{align}
where 
\begin{align}
A_0 &\triangleq  \ell_0 f_0(\Y) [{\C}^{\ast}_{{\rm p},0}(\Y) - u] + {\ell_{T+1}} \sum\limits_{i=1}^{T} \frac{\epsilon_i}{1-\epsilon_0}f_i(\Y)\;,
\end{align}
and for $i \in \left\lbrace 1,\dots,T \right\rbrace$,
\begin{align}
A_i &\triangleq   \ell_i f_i(\Y) [{\C}^{\ast}_{{\rm p},i}(\Y) - u] + \ell_{T+1}\sum_{j=1,j\neq i}^{T}\frac{\epsilon_j}{1-\epsilon_0} f_j(\Y) +  \ell_{T+2}f_0(\Y)\;.
\end{align}
Therefore, the optimum detection rules that minimize $d(\boldsymbol{\ell})$ are given by:
\begin{align}
\delta_{i}(\Y) = \begin{cases} 
& 1, \quad \text{if }\quad i = i^{\ast}\\
& 0, \quad \text{if }\quad i \neq i^{\ast}
\end{cases},
\end{align}
where
\begin{align}
i^{\ast} = \argmin_{i\in \left\lbrace 0,\dots,T \right\rbrace} A_i  \;.
\end{align}
Hence, the proof is concluded.

\section{Proof of Theorem \ref{lemma:errorprob} }\label{proof:lemma:errorprob}
We can write $\Pc_{\sf is}(\boldsymbol{\hat{\delta}_1})$ as 
\begin{align}
\Pc_{\sf is}(\boldsymbol{\hat{\delta}_1}) =& \mathbb{P}(\D_{\sf is} \neq \T_{\sf is} \mid \D_{\sf d} = \hat\H_1)\nonumber\\
=& \mathbb{P}(\D_{\sf is} \neq \T_{\sf is} \mid \D_{\sf d} = \hat\H_1, \T_{\sf d} = \hat{\H}_0) \mathbb{P}( \T_{\sf d} = \hat{\H}_0 \mid \D_{\sf d} = \hat{\H}_1) \nonumber\\
&\quad + \mathbb{P}(\D_{\sf is} \neq \T_{\sf is} \mid \D_{\sf d} = \hat\H_1, \T_{\sf d} = \hat{\H}_1) \mathbb{P}( \T_{\sf d} = \hat{\H}_1 \mid \D_{\sf d} = \hat{\H}_1)\;.
\end{align}
Note that $ \mathbb{P}(\D_{\sf is} \neq \T_{\sf is} \mid \D_{\sf d} = \hat\H_1, \T_{\sf d} = \hat{\H}_0) = 1$ for any decision rule $\boldsymbol{\hat\delta_1}$, and the terms \newline${\P(\T_{\sf d} = \hat\H_1 \mid \D_{\sf d} = \hat\H_1)}$  and $\P(\T_{\sf d} = \hat\H_0 \mid \D_{\sf d} = \hat\H_1)$ are independent of the decision rule $\boldsymbol{\hat{\delta}}_1$.  Therefore, minimizing $\Pc_{\sf is}(\boldsymbol{\hat{\delta}_1}) $ is equivalent to minimizing $\mathbb{P}(\D_{\sf is} \neq \T_{\sf is} \mid \D_{\sf d} = \hat\H_1, \T_{\sf d} = \hat{\H}_1)$. Furthermore,
\begin{align}
\mathbb{P}(\D_{\sf is} & \neq \T_{\sf is} \mid \D_{\sf d} = \hat\H_1, \T_{\sf d}= \hat{\H}_1) & \nonumber\\
& = \frac{\mathbb{P}(\D_{\sf is} \neq \T_{\sf is} , \D_{\sf d} = \hat\H_1\mid \T_{\sf d} = \hat{\H}_1) \P(\T_{\sf d} = \hat\H_1)}{\mathbb{P}( \D_{\sf d} = \hat\H_1, \T_{\sf d} = \hat{\H}_1)}\\
&= \frac{ \P(\T_{\sf d} = \hat\H_1)}{\mathbb{P}( \D_{\sf d} = \hat\H_1, \T_{\sf d} = \hat{\H}_1)} \nonumber\\
& \qquad \times \sum\limits_{i=1}^T\left( \sum\limits_{{j=1, j\neq i}}^T \P(\D_{\sf is} = \H_i, \D_{\sf d} = \hat\H_1\mid  \T_{\sf is} = \H_j, \T_{\sf d} = \hat\H_1) \P(\T_{\sf is} = \H_j \mid \T_{\sf d} = \hat\H_1)\right)\;\\
&=  \frac{ \P(\T_{\sf d} = \hat\H_1)}{\mathbb{P}( \D_{\sf d} = \hat\H_1, \T_{\sf d} = \hat{\H}_1)}\nonumber\\
& \qquad \times \sum\limits_{i=1}^T\left((1 -   \P(\D_{\sf is} = \H_i, \D_{\sf d} = \hat\H_1\mid  \T_{\sf is} = \H_i, \T_{\sf d} = \hat\H_1)) \P(\T_{\sf is} = \H_i \mid \T_{\sf d} = \hat\H_1)\right)\\
&= \frac{ \P(\T_{\sf d} = \hat\H_1)}{\mathbb{P}( \D_{\sf d} = \hat\H_1, \T_{\sf d} = \hat{\H}_1)} \times \sum\limits_{i=1}^T\left((1 -   \int_{R_i \cap \hat{R}_1} f_i(\Y)\d\Y) \P(\T_{\sf is} = \H_i \mid \T_{\sf d} = \hat\H_1)\right)\;,
\end{align}
where $R_i \subseteq\cY^{n}$ is the region corresponding to the decision $\D_{\sf is} = \H_i$ and $\hat{R}_1\subseteq \cY^{n}$ is the region corresponding to the decision $\D_{\sf d} = \hat\H_1$.  Since we form the decision $\D_{\sf is}$ only when $\D_{\sf d} = \hat{\H}_1$, we conclude that $R_i \subseteq \hat{R}_1$ and hence, $R_i \cap \hat{R}_1 = R_i$. Therefore,
\begin{align}
\mathbb{P}(\D_{\sf is} \neq \T_{\sf is} \mid \D_{\sf d} = \hat\H_1, \T_{\sf d} = \hat{\H}_1) &=  \frac{ \P(\T_{\sf d} = \hat\H_1)}{\mathbb{P}( \D_{\sf d} = \hat\H_1, \T_{\sf d} = \hat{\H}_1)}\nonumber\\ 
&\quad \times\sum\limits_{i=1}^T\left((1 -   \int_{R_i} f_i(\Y)\d\Y) \P(\T_{\sf is} = \H_i \mid \T_{\sf d} = \hat\H_1)\right)\;,
\end{align}
and $\mathbb{P}(\D_{\sf is} \neq \T_{\sf is} \mid \D_{\sf d} = \hat\H_1, \T_{\sf d}) $ is minimized when the regions $R_i$ are chosen using the decision rule
\begin{align}\label{drule1}
\hat \delta_{1i}(\Y) = \begin{cases}
& 1,  \quad \text{\rm if }\;\; i = i^{\ast}\\
& 0, \quad \text{\rm if }\;\; i \neq i^{\ast}
\end{cases}\;,
\end{align}
where
\begin{align}
i^{\ast} = \argmax_{j \in \{1,\dots,T\}} f_j(\Y)\P(\T_{\sf is}= \H_j \mid \T_{\sf d} = \hat\H_1)\;.
\end{align} 

\section{Proof of Theorem \ref{theorem:asympIsolation}}
\label{proof:theorem:isolator}

For the decision rule defined in Theorem \ref{theorem:asympIsolation}, we denote the decision formed as $\bar{\D}_{\sf is} \in \{\H_1,\dots, \H_T\}$  to distinguish it from the decision formed by the optimal decision rule in Lemma \ref{lemma:errorprob}. Also, from Lemma \ref{it},
\begin{align}\label{eqb1}
-\frac{\log {\Pc}_{\sf is}^u}{n} \leq -\frac{\log{\Pc}_{\sf is}(\boldsymbol{\bar\delta_1})}{n} \leq -\frac{\Pc_{\sf is}(\boldsymbol{\hat{\delta}_1})}{n}\leq-\frac{\log{\Pc}_{\sf is}^{ l}}{n} \;.
\end{align} 
Next, we find an upper bound on ${\Pc}_{\sf is}^{ u}$.  When $X$ is guessed to be $X_c$,  the probability that the decision formed by the rule in Theorem \ref{theorem:asympIsolation} is $\H_j$ when the true model is $\H_i$  is given by $\P(\bar\D_{\sf is} = \H_j \mid \T_{\sf is} = \H_i, X_c)$. Assuming that the probabilities $\epsilon_i $, for $i \in \{1,\dots,T\}$, are independent of the choice of $X_c$, we have 
\begin{align}
{\Pc}_{\sf is}^{u} &= \sum\limits_{i = 1}^{T} \sum \limits_{ \substack{i\neq j\\j = 1}}^{T} \epsilon_j \P(\bar{\D}_{\sf is} = \H_i \mid \T_{\sf  is} = \H_j,  X_c)\\
&= \sum\limits_{\substack{j = 2\\ i<j}}^{T} (\epsilon_j + \epsilon_i)\Pc^{ij}\;,
\end{align}
where we have defined
\begin{align}
\Pc^{ij} &\triangleq \frac{\epsilon_j}{\epsilon_i + \epsilon_j}\cdot \P(\bar{\D}_{\sf is} = \H_i \mid \T_{\sf is} = \H_j,  X_c) + \frac{\epsilon_i}{\epsilon_i + \epsilon_j}\cdot \P(\bar{\D}_{\sf is} = \H_j \mid \T_{\sf is} = \H_i, X_c)\;.
\end{align}
Therefore,
\begin{align}\label{ineqf}
{\Pc}_{\sf is}^{u}  &\leq \frac{T(T-1)}{2} \max_{{i\neq j; i,j,\in \{1,\dots T\}}} \Pc^{ij}\;,\\
&\leq \frac{T(T-1)}{2} \max_{{i\neq j; i,j\in \{1,\dots T\}}} \left\lbrace \P(\bar{\D}_{\sf is} = \H_i \mid \T_{\sf is} = \H_j, X_c), \P(\bar{\D}_{\sf is} = \H_j \mid \T_{\sf is} = \H_i, X_c) \right\rbrace\;.
\end{align}
Let $T' \triangleq \frac{T(T-1)}{2}$. Note that
\begin{align}
\P(\bar{\D}_{\sf is} = \H_j \mid \T_{\sf is} = \H_i, X = X_c) \leq \int_{R_{ij}} f_i(\Y\mid X_c)\text{ }\d\Y\;, 
\end{align}
where $R_{ij} = \left\lbrace \Y: \prod_{a \in S_j} \LR_{a} \geq \prod_{b \in S_i} \LR_{b} \right\rbrace$.
Define $B_i$ as the set of coordinates deemed to be compromised in $\H_i$ but not in $\H_j$, i.e., $B_i \triangleq S_i - S_i\cap S_j$, with $\abs{B_i} = r_i$. Similarly, define $B_j \triangleq S_j - S_i \cap S_j$, with $\abs{B_j} = r_j$. Also, since we have $g_l^j$, for $l \in \{1,\dots,n\}, j\in \{0,1\}$, to be conditionally independent distributions given $X_c$, we have 
\begin{align}
f_i(\Y \mid X_c) = \prod\limits_{a \in S_i}\left(\prod\limits_{c=1}^n g_{a}^1(Y_{c}(a)\mid X_c)\right) \prod\limits_{\substack{b \in \bar S_i}}\left(\prod\limits_{d = 1}^n g_{b}^0(Y_{d}(b)\mid X_c)\right)\;,
\end{align}
where $\bar S_i \triangleq \{1,\dots,m\} \backslash S_i$.  Then, the region $R_{ij}$ is equivalent to
\begin{align}
&\left\lbrace \Y: \left( \prod\limits_{a \in B_j} \prod_{c = 1}^n g_{a}^1(Y_{c}(a)\mid X_c)\right) \left( \prod\limits_{b \in B_i} \prod_{d = 1}^n g_{b}^0(Y_{b}(d) \mid X_c)\right) \geq \nonumber \right.\\
&\left.\left( \prod\limits_{a \in B_i} \prod_{c = 1}^n g_{a}^1(Y_{c}(a)\mid X_c)\right)\left( \prod_{b \in B_j}\prod\limits_{d = 1}^n g_{b}^0(Y_{d}(b)\mid X_c)\right) \right\rbrace \;.
\end{align}
Therefore, 
\begin{align}\label{ineq1}
\P(\bar{\D}_{\rm is} = \H_j \mid \T_{\rm is} = \H_i,X_c) &\leq \int_{R_{ij}}\left( \prod\limits_{a \in B_i} \prod_{c = 1}^n g_{a}^1(Y_{c}(a)\mid X_c)\right)\left( \prod_{b \in B_j}\prod\limits_{d = 1}^n g_{b}^0(Y_{d}(b)\mid X_c)\right) \text{ } \d\Y_{B_i\cup B_j}\;,\\
&= \int_{R_{ij}} \min \{t_1,t_2\} \text{ } \d\Y_{B_i \cup B_j}\;,
\end{align}
where $\Y_{B_i \cup B_j}$ is the data for the coordinates in the set $B_i \cup B_j$ and
\begin{align}
t_1&\triangleq \left( \prod\limits_{a \in B_i} \prod_{c = 1}^n g_{a}^1(Y_{c}(a)\mid X_c)\right)\left( \prod_{b \in B_j}\prod\limits_{d = 1}^n g_{b}^0(Y_{d}(b)\mid X_c)\right)\;,\\
t_2&\triangleq \left( \prod\limits_{a \in B_j} \prod_{c = 1}^n g_{a}^1(Y_{c}(a)\mid X_c)\right) \left( \prod\limits_{b \in B_i} \prod_{d = 1}^n g_{b}^0(Y_{b}(d) \mid X_c)\right)\;,
\end{align}
Using the inequality 
\begin{align}
\min(a,b) \leq a^{\lambda}b^{1-\lambda},\qquad\forall\;a,b>0, \quad\text{and }\quad \lambda \in [0,1]\;,
\end{align}
and 
\begin{align}
\int_{R_{ij}}f_i(\Y\mid X_c)\text{ } \d\Y \leq \int f_i(\Y\mid X_c)\text{ } \d\Y\;,
\end{align}
we can modify the inequality in \eqref{ineq1} as
\begin{align}\label{ineq2}
\P(\bar{\D}_{\rm is} = \H_j \mid \T_{\rm is} = \H_i, X_c)  \leq \int t_1^{\lambda}t_2^{1-\lambda}\text{ }\d\Y^{B_i \cup B_j}\;,\qquad\forall\; \lambda \in [0,1]\;.
\end{align}
Following the similar line of analysis as in \eqref{ineq1}-\eqref{ineq2}, it can be verified that
\begin{align}\label{ineq3}
\P(\bar{\D}_{\rm is} = \H_i \mid \T_{\rm is} = \H_j,X_c)\leq  \int t_1^{\lambda}t_2^{1-\lambda}\text{ } \d\Y_{B_i \cup B_j}\;.
\end{align}
The inequalities in \eqref{ineq2} and \eqref{ineq3} hold for all $\lambda \in [0,1]$. Hence, from \eqref{ineqf}, \eqref{ineq2} and \eqref{ineq3}, 
\begin{align}
\Pc_{\sf is}^{ u}&\leq T' \min_{\lambda \in [0,1]} \int t_1^{\lambda} t_2^{1-\lambda} \text{ }\d\Y_{B_i \cup B_j}\;.
\end{align}
{\noindent Under the assumption that the pdfs  of the random variables $Y_a(b) \mid X,$ belong to the location-scale family of distributions with infinite support and only the location parameter a function of $X$}, we use the change of variables $\tilde{Y}_a(b) = Y_a(b) \mid X_c$ to obtain
\begin{align}
\int t_1^{\lambda} t_2^{1-\lambda} \text{ }\d\Y_{B_i \cup B_j} &= \left( \prod\limits_{a \in B_j}  \prod\limits_{c = 1}^n \int (g_{a}^1(\tilde{Y}_{a}(c))) ^{\lambda} (g_a^0(\tilde{Y}_{a}(c)))^{1-\lambda} \text{ }d\tilde{Y}_{a}(c)\right)\nonumber\\
&\quad\times \left( \prod\limits_{b \in B_i} \prod\limits_{d = 1}^k \int (g_b^1(\tilde{Y}_{b}(d)))^{\lambda} (g_{b}^0(\tilde{Y}_{b}(d)))^{1-\lambda} \text{ }d\tilde{Y}_{b}(d)\right)\;.
\end{align}
Note that
\begin{align}
- \frac{\log\Pc_{\sf is}^u}{n} =  - \frac{1}{n}\log\left(\max_{i,j} \min_{\lambda} \int t_1^{\lambda} t_2^{1-\lambda} \text{ } \d\Y_{B_i \cup B_j}\right)\;.
\end{align}
By taking the limit ${n\rightarrow \infty}$ and using the monotonicity of the $\log$ function, we have
\begin{align}
-\lim_{n \rightarrow \infty} \frac{\log\Pc_{\sf is}^u}{n } &= \lim_{n\rightarrow \infty} \frac{1}{n}\min_{i,j}\max_{\lambda} -\log\left(\int t_1^{\lambda} t_2^{1-\lambda} \text{ } \d\Y_{B_i \cup B_j}\right)\\
&= \lim_{n \rightarrow \infty} \frac{1}{n}\min_{i,j}\max_{\lambda} \left( \sum\limits_{a \in B_j}-n\log\left(\int (g_{a}^1(Y))^{\lambda} (g_a^0(Y))^{1-\lambda}\text{ }dY\right)\right. \nonumber\\
&\left.\quad +\sum\limits_{b \in B_i} -n\log\left(\int (g_{b}^0(Y))^{\lambda}(g_{b}^1(Y))^{1-\lambda}\text{ }dY\right)\right)\;.
\end{align}
Therefore,
\begin{align}\label{r}
-\lim_{n \rightarrow \infty} \frac{\log \Pc_{\sf is}^u}{n }& =\min_{i,j} \max_{\lambda} \left( \sum\limits_{a \in B_j}-\log\left(\int g_{a}^1(Y)^{\lambda} g_a^0(Y)^{1-\lambda}\text{ }dY\right) \right. \nonumber\\
& \qquad + \left.\sum\limits_{b \in B_i} -\log\left(\int g_b^0(Y)^{\lambda}g_{b}^1(Y)^{1-\lambda}\text{ }dY\right)\right)\;.
\end{align}
We now find a lower bound on $\Pc_{\sf is}^{\rm l}$. Under the perfect knowledge of $X$, we have
\begin{align}
\Pc_{\sf is}^{ l} &= \sum\limits_{i=1}^T \sum\limits_{\substack{j=1\\j\neq i}}^T \P(\D_{\sf is} = \H_j \mid \T_{\sf is} = \H_i, X) \epsilon_i\;,\\
&= \sum_{i=1}^T \P(\D_{\sf is} \neq \H_i\mid \T_{\sf is} = \H_i,X)\epsilon_i\;,\\
&\geq \max_{i,j} \left\lbrace \P(\D_{\sf is} = \H_j \mid \T_{\sf is} = \H_i, X) \epsilon_i, \P(\D_{\sf is} = \H_i \mid \T_{\sf is} = \H_j, X) \epsilon_j \right\rbrace\;.
\end{align}
Note that
\begin{align}\label{e}
\lim_{n \rightarrow \infty} - \frac{\log(\Pc_{\sf is}^l)}{n} &\leq \lim_{n\rightarrow \infty} -\frac{1}{n}\log\left(\max_{i,j}\left\lbrace  \P(\D_{\sf is} = \H_j \mid \T_{\sf is} = \H_i, X) \epsilon_i, \P(\D_{\sf is} = \H_i \mid \T_{\sf is} = \H_j, X) \epsilon_j\right\rbrace \right)\;,\nonumber \\
&= \lim_{n \rightarrow \infty} - \frac{1}{n} \log\left(\max_{i,j}\left\lbrace  \P(\D_{\sf is} = \H_j \mid \T_{\sf is} = \H_i, X) , \P(\D_{\sf is} = \H_i \mid \T_{\sf is} = \H_j, X) \right\rbrace \right)\;.
\end{align}
For further analysis, we adopt the approach used in \cite{VVV_controlled}. Under model $\H_i$, the data points $Y_w$ are distributed according to the pdf ${f}_i$ , for $w \in \{1,\dots n\}$ and $i \in \{1,\dots T\}$. Define $b_{i,j}^{\lambda}(Y_w) $ as the pdf 
\begin{align}
b_{i,j}^{\lambda}(Y_w) \triangleq \frac{{f}_i^{\lambda} (Y_w \mid X) {f}_j^{1-\lambda}(Y_w \mid X)}{\int {f}_i^{\lambda} (Y_w \mid X) {f}_j^{1-\lambda}(Y_w \mid X) dY_w}\;,
\end{align}
and let 
\begin{align}
\lambda^{\ast} &= \argmax_{\lambda \in [0,1]} -\log \left( \int f_i^{\lambda}(\Y \mid X) f_j^{1-\lambda} (\Y \mid X)\text{ } \d\Y\right)\\
&= \argmax_{\lambda \in [0,1]} -\sum_{w = 1}^n \log \left( \int {f}_i^{\lambda}(Y_w \mid X) {f}_j^{1-\lambda} (Y_w \mid X)\text{ } dY_w\right)\\
&= \argmax_{\lambda \in [0,1]} -n \log \left( \int {f}_i^{\lambda}(Y_w \mid X) {f}_j^{1-\lambda} (Y_w \mid X)\text{ } dY_w\right)\;.
\end{align}
It can be readily verified that
\begin{align}\label{g1}
\max_{\lambda \in [0,1]} -\log \left( \int {f}_i^{\lambda}(Y_w \mid X) {f}_j^{1-\lambda} (Y_w \mid X)\text{ } dY_w\right) 
= D_{\sf KL}(b_{i,j}^{\lambda^{\ast}} \| {f}_i(.\mid X)) =  D_{\sf KL}(b_{i,j}^{\lambda^{\ast}} \| {f}_j(.\mid X))\;,
\end{align}
where $D_{\sf KL}(f\|g)$ denotes the Kullback-Liebler divergence between distributions $f$ and $g$. Define the probability measure $\tilde{\P}$ as
\begin{align}\label{msr}
\tilde{\P}(\Y \mid X) \triangleq \prod\limits_{w = 1}^n b_{i,j}^{\lambda^{\ast}}(Y_w)\;,
\end{align}
and define a random process $M_r$ as 
\begin{align}
M_r \triangleq \sum\limits_{w = 1}^r \left( \log\left( \frac{b_{i,j}^{\lambda^{\ast}}(Y_w)}{{f}_j(Y_w \mid X)}\right) - \E\left[ \log \left(\frac{b_{i,j}^{\lambda^{\ast}}(Y_w)}{{f}_j(Y_w \mid X)} \right) \mid F_{w-1}\right] \right)\;,
\end{align}
where $F_{w-1}$ is the $\sigma-$field generated by $\{Y_1,\dots,Y_{w-1}\}$, and the expectation is with respect to the probability measure $\tilde{\P}$. It can be verified that the process $M_r$ is a stable martingale. According to the martingale stability theorem \cite{loeve}, we have
\begin{align}
\lim_{r\rightarrow \infty} \frac{M_r}{r} \xrightarrow{\text{a.s}} 0\;.
\end{align}
This can be equivalently written in the following form:
\begin{align}\label{e2}
\lim_{r \rightarrow \infty} \tilde{\P}\left( \frac{M_r}{r} > \nu\right) = 0\;, \quad \forall \nu > 0\;.
\end{align}
Since for a given $X$ the random vectors $Y_w$ are i.i.d., we have 
\begin{align}\label{e1}
\E\left[ \log \left(\frac{b_{i,j}^{\lambda^{\ast}}(Y_w)}{{f}_j(Y_w \mid X)} \right) \,\middle\vert\, F_{w-1}\right] &= D_{\sf KL}(b_{i,j}^{\lambda^{\ast}}\| {f}_i(.\mid X)) =  D_{\sf KL}(b_{i,j}^{\lambda^{\ast}}\| {f}_j(.\mid X))\;.
\end{align}
By using \eqref{e1} and restructuring \eqref{e2}, we obtain
\begin{align}\label{rs1}
\lim_{n\rightarrow \infty} \tilde{\P}\left( \prod_{w=1}^n {f}_j(Y_w\mid X) > \exp(-n D_{\sf KL}(b_{i,j}^{\lambda^{\ast}}\| {f}_j(.\mid X)) - n\nu) \times \prod_{w=1}^n b_{i,j}^{\lambda^{\ast}}(Y_w)\right) = 1\;.
\end{align}
Under the probability measure $\tilde{\P}$, we denote the probability of deciding $\H_i$ as the true model when $X$ is given as $\tilde{\P}(\D_{\sf is} = \H_i \mid X)$ and note that either $\tilde{\P}(\D_{\sf is} \neq \H_i\mid X) \geq \frac{1}{2}$ or $\tilde{\P}(\D_{\sf is} = \H_i\mid X) \geq \frac{1}{2}$. Then, using the assumption that $\tilde{\P}(\D_{\sf is} = \H_i\mid X) \geq \frac{1}{2}$ holds, from \eqref{rs1} we get 
\begin{align}\label{e3}
\lim_{n\rightarrow \infty} \tilde{\P}\left( \prod_{w=1}^n {f}_j(Y_w\mid X) > \exp(-n D_{\sf KL}(b_{i,j}^{\lambda^{\ast}}\| {f}_j(.\mid X)) - n\nu) \times \prod_{w=1}^n b_{i,j}^{\lambda^{\ast}}(Y_w), \D_{\sf is} = \H_i\mid X\right) \geq \frac{1}{2} - \kappa\;,
\end{align}
for any $\kappa >0$. Using \eqref{msr}, we conclude that \eqref{e3} is equivalent to
\begin{align}\label{e4}
\lim_{n\rightarrow \infty} \int_R \prod_{w=1}^n b_{i,j}^{\lambda^{\ast}}(Y_w)\text{ } dY_w \geq \frac{1}{2} - \kappa\;, 
\end{align}
where $R$ is the region 
\begin{align}
\left\lbrace \Y: \{\D_{\sf is} = \H_i\mid X\}\text{ and } \left\lbrace \prod_{w=1}^n b_{i,j}^{\lambda^{\ast}}(Y_w\mid X) <\exp(n D_{\sf KL}(b_{i,j}^{\lambda^{\ast}} \| {f}_j(.\mid X)) + n\nu) \times \prod_{w=1}^n {f}_j(Y_w\mid X) \right\rbrace\right\rbrace .
\end{align}
In the region $R$, we have
\begin{align}\label{e5}
\int_R \prod_{w=1}^n b_{i,j}^{\lambda^{\ast}}(Y_w\mid X)\text{ } dY_w &\leq 
\exp(n D_{\sf KL}(b_{i,j}^{\lambda^{\ast}} \| {f}_j(.\mid X)) + n\nu) \int_R \prod_{w=1}^n {f}_j(Y_w\mid X)\d\Y\;,\nonumber\\
&\leq \exp(n D_{\sf KL}(b_{i,j}^{\lambda^{\ast}} \| {f}_j(.\mid X)) + n\nu) \int_{R_2} \prod_{w=1}^n {f}_j(Y_w\mid X)\d\Y\;,\nonumber\\
&= \exp(k D_{\sf KL}(b_{i,j}^{\lambda^{\ast}}\| {f}_j(.\mid X)) + n\nu) \cdot \P(\D_{\sf is} \neq \H_j\mid \T_{\sf is} = \H_j, X)\;,
\end{align}
where region $R_2$ is $\{ \D_{\rm is} \neq \H_j\mid X\}$ and also, $R \subseteq R_2$. From \eqref{e4} and \eqref{e5}, it is concluded that
\begin{align}\label{e6}
\P(\D_{\sf is} \neq \H_j\mid \T_{\sf is} = \H_j,X) \geq \left(\frac{1}{2} - \kappa\right)\exp(-n D_{\sf KL}(b_{i,j}^{\lambda^{\ast}}\| {f}_j(.\mid X)) - n\nu)\;,
\end{align}  
for any $\nu,\kappa >0$. Similarly, if the case $\tilde{\P}(\D_{\rm is} \neq \H_i\mid X) \geq \frac{1}{2}$ holds,
\begin{align}\label{e7}
\P(\D_{\sf is} \neq \H_i\mid \T_{\sf is} = \H_i,X) \geq \left(\frac{1}{2} - \kappa\right)\exp(-n D_{\sf KL}(b_{i,j}^{\lambda^{\ast}} \| {f}_i(.\mid X)) - n\nu)\;.
\end{align}
Using  \eqref{e}, \eqref{e6} and \eqref{e7}, we have
\begin{align}
\lim_{n \rightarrow \infty} - \frac{\log(\Pc_{\sf is}^{ l})}{n} \leq& \lim_{n \rightarrow \infty} -\frac{1}{n} \max_{i,j}\{ \log(\P(\D_{\sf is} \neq \H_j\mid \T_{\sf is} = \H_j,X)), \log(\P(\D_{\sf is} \neq \H_i\mid \T_{\sf is} = \H_i,X))\}\;,\\
=& \lim_{n \rightarrow \infty} \frac{1}{n} \left(-\left(\frac{1}{2} - \kappa\right) + \kappa\nu\right) \nonumber\\
&\quad + \lim_{n \rightarrow \infty} \frac{1}{n} \min_{i,j}\{ n D_{\sf KL}(b_{i,j}^{\lambda^{\ast}}\| {f}_i(.\mid X)), n D_{\sf KL}(b_{i,j}^{\lambda^{\ast}} \| {f}_j(.\mid X))\}\;.
\end{align}
Using the fact that $\kappa$ and $\nu$ can be made arbitrarily close to $0$, and using \eqref{g1}, we get
\begin{align}
\lim_{n \rightarrow \infty} - \frac{\log(\Pc_{\sf is}^{ l})}{n} \leq& \min_{i,j} \max_{\lambda \in [0,1]} -\log \left( \int {f}_i^{\lambda} (Y_w \mid X) {f}_j^{1-\lambda} (Y_w \mid X) \text{ }\d Y_w\right)\\
\nonumber =& \min_{i,j}\max_{\lambda} \left\lbrace \sum_{a \in B_j} -\log\left( \int (g_{a}^1(Y))^{\lambda} (g_a^0(Y))^{1-\lambda} \d Y\right)  \right.  \\
& \qquad\qquad  \quad + \left. \sum_{b \in B_i} -\log\left( \int (g_b^0(Y)^{\lambda} (g_{b}^1(Y))^{1-\lambda}\d Y\right)\right\rbrace\;
 \label{rr}\\
 &= \min_{i\neq j}C(f_i,f_j)\;, \label{expq}
\end{align}
where \eqref{expq} follows from the assumption that the pdfs $g_a^j$  belong to the location-scale family of distributions and have infinite support. From \eqref{r}, \eqref{rr}, it is clear that 
\begin{align}
\lim_{n \rightarrow \infty} - \frac{\log\Pc_{\sf is}^{ l}}{k} = \lim_{n \rightarrow \infty} - \frac{\log{\Pc}_{\sf is}^u}{n}\;.
\end{align} 
Then, using \eqref{eqb1}, we conclude that the error exponents of $\Pc_{\sf is}(\boldsymbol{\hat\delta_1})$ and $\Pc_{\sf is}(\boldsymbol{\bar\delta_1})$ are the same and are given by \eqref{expq}.

\section{Proof of Theorem \ref{rel}}\label{Thm7}
We aim to design the estimator $U_l^j$ and the decision rule $\boldsymbol{\bar\delta}_l$ that minimize the utility function $\J_l^j(\delta_l^j,\bar\delta_l^{\sf r},U_l^j) $ subject to the constraint 
\begin{align}\label{cons}
\P_j(\D_l^{\sf r} = \H_l^{\sf r}) &= \int \delta_l^j(\Y_l) \bar\delta_l^{\sf r}(\Y_l) g_l^j(\Y_l) \d\Y_l \geq 1 - \nu_l^j\;.
\end{align}
Since the effect of estimator only appears in $\Ju$, the optimization problem can be decoupled similarly to the approach followed earlier. The optimal estimator can be readily verified to be 
\begin{align}
\hat{X}^j_l(\Y_l) = \argmin_{\U_l^j} \Ju\;,
\end{align}
where $\hat{X}_{l}^j(\Y_l) $ is defined in (\ref{eq:cest}). In order to design the decision rules, we start by noting that for a decision rule $\boldsymbol{\bar\delta}_l$, such that, 
\begin{align}\label{eq:1}
\int \delta_l^j(\Y_l) \bar\delta_l^{\sf r}(\Y_l) g_l^j(\Y_l) \d\Y_l>1 - \nu_l^j\;,
\end{align}
we can design another decision rule $\boldsymbol{{\Delta}}_l  \triangleq [\Delta_l^{\sf r}(\Y_l), \Delta_l^{\sf u}(\Y_l)]$, such that, $\int \delta_l^j(\Y_l) {\Delta}^{\sf r}_l(\Y_l) g_l^j(\Y_l) \d\Y_l = 1 - \nu_l^j$ with the same estimation performance. To show this, we set 
\begin{align}
{\Delta}^{\sf r}_l(\Y_l) = \frac{(1 - \nu_l^j) \bar\delta^{\sf r}_l(\Y_l)}{\int \delta_l^j(\Y_l) \bar\delta^{\sf r}_l(\Y_l) g_l^j(\Y_l) \d\Y_l }\;,
\end{align}
which satisfies (\ref{cons}) with equality. Using (\ref{eq:1}), note that ${\Delta}^{\sf r}_l(\Y_l) < \bar\delta^{\sf r}_l(\Y_l)$, which implies that $\boldsymbol{{\Delta}}_l$ is a valid decision rule. We can easily verify that
\begin{align}
\J_l^j(\delta_l^j,{\Delta}^{\sf r}_l,\hat{X}_{l}^j) = {{\J_l^j(\delta_l^j,\bar\delta^{\sf r}_l,\hat{X}_l^j)}}\;,
\end{align} 
which implies that the estimation performance is the same for both decision rules, and therefore, we can restrict our design for the optimum decision rule to the class of rules that satisfy (\ref{cons}) with equality. Under the equality condition, $\int \delta_j^i(\Y_i) ,{\bar\delta}^{\sf r}_l(\Y_l) g_l^j(\Y_l) \d\Y_l = 1 - \nu_l^j$, in which case minimizing ${{\J_l^j(\delta_l^j,\bar\delta^{\sf r}_l,\hat{X}_{l}^j)}}$ is equivalent to minimizing $\int \delta_l^j(\Y_l) \bar\delta^{\sf r}_l(\Y_l)\hat{\C}_{l}^j(\Y_l) g_l^j(\Y_l) \d\Y_l$. Let $\gamma_l^j \geq 0$ be the solution to 
\begin{align}
\P_j(\gamma_l^j \geq \hat{\C}_{l}^j(\Y_l)) = \int _{\hat{R}} \delta^j_l(\Y_l) g_l^j(\Y_l) \d\Y_l = 1 - \nu_l^j\;,
\end{align}
where region $\hat{R}$ is $\left\lbrace \Y_l: \gamma_l^j \geq \hat{\C}_{l}^j(\Y_l) \right\rbrace$. Then,
\begin{align}
&\int \delta^j_l(\Y_l) \bar\delta^{\sf r}_l(\Y_l)\hat{\C}_{l}^j(\Y_l) g_l^j(\Y_l) \d\Y_l - \gamma_l^j(1-\nu_l^j) \nonumber\\
&= \int \delta^j_l(Y_l) \bar\delta^{\sf r}_l(\Y_l)\hat{\C}_{l}^j(\Y_l) g_l^j(\Y_l) \d\Y_l - \gamma_l^j\int \delta^j_l(\Y_l) \bar\delta^{\sf r}_l(\Y_l)g_l^j(\Y_l) \d\Y_l\nonumber \\
&= \int \delta^j_l(\Y_l) \bar\delta^{\sf r}_l(\Y_l)(\hat{\C}_{l}^j(\Y_l)-\gamma_l^j) g_l^j(\Y_l) \d\Y_l\nonumber \\
& \geq \int_{\hat{R}} \delta^l_j(\Y_l) (\hat{\C}_{l}^j(\Y_l)-\gamma_l^j) g_l^j(\Y_l) \d\Y_l \nonumber\\
& = \int_{\hat{R}} \delta^l_j(\Y_l) \hat{\C}_{l}^j(\Y_l) g_l^j(\Y_l) \d\Y_l -\gamma_l^j \P_j(\gamma_l^j \geq \hat{\C}_{l}^j(\Y_l)) \nonumber\\
&= \int_{\hat{R}} \delta^j_l(\Y_l) \hat{\C}_{l}^j(\Y_l) g_l^j(\Y_l) \d\Y_l -\gamma_l^j (1-\nu_l^j)\;.
\end{align}
Clearly, $\int \delta^j_l(\Y_l) \bar\delta^{\sf r}_l(\Y_l)\hat{\C}_{l}^j(\Y_l) g_l^j(\Y_l) \d\Y_l  \geq  \int_{\hat{R}} \delta^j_l(\Y_l) \hat{\C}_{l}^j(\Y_l) g_l^j(\Y_l) \d\Y_l $ and therefore, the decision rule $\bar\delta^{\sf r}_l(\Y_l)$ given by
\begin{align}
\bar\delta_l^{\sf r}(\Y_l) = \mathbbm{1}_{\left\lbrace \gamma_l^j \geq \hat{\C}_{l}^j(\Y_l) \right\rbrace}\;,
\end{align}
is optimal as it ensures optimal estimation performance and satisfies the constraint in (\ref{cons}).

\bibliographystyle{IEEEtran}
\bibliography{Secure_Estimation}

\end{document}